\documentclass[12pt,a4paper]{article}

\usepackage[utf8]{inputenc}
\usepackage[T1]{fontenc}
\usepackage{lmodern}
\usepackage[margin=1in]{geometry}
\usepackage{bm}
\usepackage{bigints}
\usepackage{orcidlink}
\usepackage{afterpage}
\usepackage{placeins}
\usepackage{enumitem}
\usepackage{stackengine}
\usepackage{amssymb}
\usepackage{booktabs}
\usepackage{changepage}
\usepackage{amssymb}
\usepackage{amsthm}
\usepackage{setspace}
\usepackage{float}

\usepackage{hyperref}
\hypersetup{
	colorlinks=true,
	linkcolor=blue,   % Internal links
	citecolor=blue,   % Citation year/number color
	urlcolor=blue     % URL color
}

\usepackage[round]{natbib}
\newcommand{\tb}{\tiny$\bullet$}

\newcommand{\bbt}{\bm{\beta}}
\newcommand{\bth}{\bm{\theta}}
\newcommand{\ba}{\bm{\alpha}}

\newcommand{\me}{\mathrm{e}}
\newcommand{\Yi}{\mathcal{Y}_i}
\newcommand{\Bi}{\mathcal{B}_i}
\newcommand{\Bzi}{\mathcal{B}_{0i}}
\newcommand{\Ei}{\mathcal{E}_{i}}
\newcommand{\Si}{\Sigma_i}

\newcommand{\J}{\mathcal{J}}

\newcommand{\dist}{\mathcal{D}}

\newcommand{\bTh}{\bm{\Theta}}
\newcommand{\bLam}{\bm{\Lambda}}
\newcommand{\bTau}{\bm{\mathcal{T}}}
\newcommand{\blam}{\bm{\lambda}}
\newcommand{\btau}{\bm{\tau}}
\newcommand{\bmu}{\bm{\mu}}

\newcommand{\byid}{\bm{y}_{id}}

\newcommand{\E}{\text{E}}

\newcommand{\vect}{\text{vec}}
\newcommand{\T}{\top}
\newcommand{\tr}{\text{tr}}

\newcommand{\tms}{\!\times\!}

\newcommand{\bv}{\,\big\vert\,}
\renewcommand{\v}{\,\vert\,}
\newcommand{\N}{\text{N}}
\newcommand{\IG}{\text{IG}}
\newcommand{\U}{\text{U}}
\newcommand{\Dir}{\text{Dir}}
\newcommand{\IW}{\text{IW}}

\newcommand{\bz}{\bm{z}}
\newcommand{\ziz}{{z_{i_0}}}
\newcommand{\zio}{{z_{i_1}}}
\newcommand{\zizLs}{{z_{i_0}^{L_s}}}
\newcommand{\zioLs}{{z_{i_1}^{L_s}}}

\newcommand{\zizLm}{{z_{i_0}^{L_m}}}
\newcommand{\zioLm}{{z_{i_1}^{L_m}}}
\newcommand{\znew}{{z_{new}}}
\newcommand{\ziLs}{{z_i^{L_s}}}
\newcommand{\ziLm}{{z_i^{L_m}}}
\newcommand{\zizs}{{z_{i_0}^s}}
\newcommand{\zios}{{z_{i_1}^s}}

\newcommand{\zizm}{{z_{i_0}^m}}
\newcommand{\ziom}{{z_{i_1}^m}}

\newcommand{\Ell}{\mathcal{L}}
\renewcommand{\d}{\,\text{d}}

\newtheorem{theorem}{Theorem}[section]

\newtheorem{proposition}{Proposition}[section]

\usepackage{authblk}
\title{Bayesian repulsive mixture model for multivariate\\functional data}
\author{
	Ricardo Cunha Pedroso\\
	\vspace{-.15in}
	Departamento de Estat{\'i}stica, Universidade Federal de Minas Gerais, Brazil
	\\
	\vspace{.15in}
	Fernando Andr{\'e}s Quintana\\
	Departamento de Estad{\'i}stica, Pontificia Universidad Cat{\'o}lica de Chile, Santiago
	\\
	\vspace{.15in}
	Rosangela Helena Loschi\\
	Departamento de Estat{\'i}stica, Universidade Federal de Minas Gerais, Brazil
}

\date{August, 2026}

\begin{document}

%\begin{adjustwidth}{-1cm}{-1cm}
%\maketitle
%\end{adjustwidth}
%\thispagestyle{empty}

\newgeometry{top=.8in, bottom=.1in, left=.7in, right=.7in}
\maketitle

\begin{abstract}
%We introduce a repulsive mixture model to cluster observation units represented by multivariate functional curves, based on curve shape similarity and individual covariates. We propose a repulsive prior density for the component location parameters that depends on a B-spline curve-tailored distance, extending existing repulsive prior densities to the context of multivariate functional data. The model thus favors the identification of well-differentiated clusters, avoiding the presence of redundant ones. To sample from the posterior distribution, we propose a Markov Chain Monte Carlo algorithm containing a novel split-merge step, which accounts for the repulsive interaction between the mixture components, significantly improving the chain mixing. The effects of repulsion and covariates in the clustering are evaluated through simulation. The proposed model is fitted to a real data set consisting of individuals with Chronic Ankle Instability, with a focus on identifying similar types of physical dysfunctions based on the similarity of movement patterns.
We introduce a repulsive mixture model to cluster observation units represented by multivariate functional curves, based on similarity of curve shapes and individual-specific covariates. The model features a repulsive prior distribution for component-specific location parameters, which depends on a B-spline curve-tailored distance, extending existing repulsive prior distributions to the context of multivariate functional data. The model favors identification of well-differentiated clusters, avoiding redundancy. Posterior simulation is implemented through a Markov Chain Monte Carlo algorithm containing a novel split-merge step that considers the repulsive interactions between mixture components, significantly improving the chain mixing. Different features of the proposed model, including the effects of repulsion and covariates in the clustering, are evaluated through simulation. The model is applied to data concerning multivariate sequences of observations characterizing movement patterns of individuals with Chronic Ankle Instability, with a focus on identifying similar types of physical dysfunctions based on the similarity of movement patterns.
\end{abstract}

\noindent\textbf{Keywords:} Bayesian clustering, B-splines, Finite mixture models, Repulsion, Split-merge algorithm

\restoregeometry

\thispagestyle{empty}

\newpage
\section{Introduction}\label{sec:intro}

Due to technological advancements, data have been collected almost continuously in temporal intervals, areas or regions, leading to a fast growth of functional data analysis \citep{ramsay2005}. Functional clustering models can be useful to accommodate data heterogeneity and to identify individuals with similar features in functional data analysis, providing pattern recognition and improving predictions. See, e.g., \cite{petrone2009,page2015,zhang2023review,zhang2023,toto2025}, and references therein. Finite mixture models (FMMs) are a popular approach for identifying clusters. A FMM for observation units $\bm{y}_i\in\mathbb{R}^{^p}$, $i=1,\dots,m$, may be hierarchically represented as

\begin{equation*}\label{eq_intro_fmm}
	\begin{aligned}
		\bm{y}_i\v\bz,\Theta &\overset{\text{ind}}{\sim} f(\bm{y}_i\v\bth_{z_i}),\\
		z_i\v\bm{\pi} &\overset{\text{iid}}{\sim} \text{Multinomial}_J(\bm{\pi}),
	\end{aligned}
\end{equation*}

\noindent where $J$ is a fixed known number of mixture components, $\bm{\pi}=(\pi_1,\dots,\pi_J)$ is a vector of non-negative mixing weights, such that ${\sum_{j=1}^{J}\pi_j=1}$, and $\Theta=\{\bth_1,\dots,\bth_J\}$ is a set of the $J$ parameters characterizing the $J$ mixture component densities $f(\bm{y}_i\v\bth_j)$. Cluster identification is achieved through the latent allocation vector $\bz=(z_1, \dots, z_m)$, so that $z_i=j$ if observation $\bm{y}_i$ was drawn from $f(\cdot\v\bth_j)$ and the number of clusters corresponds to the number of occupied components. Assuming a non-degenerate prior distribution for $J$ typifies the class of mixture of finite mixture models (MFMMs)
%\citep{richardson1997,	miller2018, sylvia2021}
\citep[see, e.g.,][]{miller2018, sylvia2021} and assuming $J=\infty$ leads to nonparametric mixture models %\citep{gnedin2006,muller2011}
\citep[see, e.g.][]{ferguson1973,lo1984,hartigan1990,quintana2003,muller2015bayesian}. Typically, it is \textit{a priori} assumed that the mixing weights $\bm{\pi}$ have a Dirichlet distribution and the component-specific parameters in $\Theta$ are independent. The  assumption of independence between component parameters may favor low separation between clusters, possibly leading to the presence of redundant overlapping clusters. One strategy to overcome the redundancy problem is to assume a repulsive interaction between the component-specific parameters. Options to do so in the context of FMMs have been discussed in \cite{petralia2012}, \cite{quinlan2018}, \cite{fuquene2019}, \cite{hayashida2025repulsive}, among others. In particular, \cite{petralia2012} introduced a class of prior densities with a repulsive interaction between the mixture components, based on pairwise distances between component location parameters, that was later generalized by \cite{xie2020} and \cite{quinlan2021}. Alternative constructions are based on point processes \citep[see, e.g.,][]{beraha2022, ghilotti2024, cremaschi2024, song2025repulsive}.

Regarding functional data, Bayesian clustering models have been proposed by \cite{petrone2009}, \cite{page2015}, %\cite{hopkins2019},
\cite{rigon2023},
%\cite{zhang2023},
\cite{zhong2024}, \cite{liang2024}, \cite{toto2025}, among others. Particularly, \cite{page2015} proposed a product partition model \citep{hartigan1990} to cluster univarite B-spline curves based on the similarity of the shapes of the curves, and also allowing individual-specific covariates to influence clustering. Later, \cite{hopkins2019} extended such model to bivariate B-spline curves.

\subsection{Our contributions}

Our model was motivated by the case study discussed in \cite{hopkins2019}, which set out to identify similar types of physical dysfunctions in a set of individuals with Chronic Ankle Instability (CAI), based on movement patterns represented by multivariate functional curves. A critical point in their analysis was the identification of different clusters with very low differentiation between their curve shapes, characterizing a redundancy problem. To mitigate this issue, we introduce a FMM for multivariate B-spline curves and propose a repulsive joint prior distribution for the location parameters of the mixture components. The proposed repulsive prior depends on a B-spline curve-tailored distance, in order to favor the identification of well-differentiated clusters with regard to the shape of the multivariate curves. This prior was inspired by the class of repulsive prior densities introduced by \cite{petralia2012}. To the best of our knowledge, the functional clustering models proposed so far have not explored a repulsive interaction between mixture components to influence cluster identification.
%\trs{in this multivariate context}. 
Additionally, the proposed model allows for the influence of individual-specific covariates on clustering, to make individuals with similar covariate values more likely to co-cluster {\it a priori}. To this effect, we assume a stick-breaking representation of covariate-dependent mixing weights, as discussed by \cite{rigon2021}, which allows for exact sampling of the mixing weights through the P{\'o}lya-Gamma data augmentation method proposed by \cite{polson2013}. We refer to the proposed model as MFRMMx, which abbreviates \textit{Multivariate Functional Repulsive Mixture Model with covariates}. Model estimation is performed through a Markov Chain Monte Carlo (MCMC) algorithm that includes a novel split-merge step, which is an adaptation of the original method introduced by \cite{jain2007} to the case of FMMs with a repulsive dependency between the mixture components. As far as we know, the split-merge method had never been considered in clustering models with a dependency structure between clusters.

% As far as we know, the split-merge method had never been used in functional clustering models or clustering models that assume dependence between clusters.

\subsection{Article organization}

This work is organized as follows. Section~\ref{sec:model} details the proposed repulsive FMM for multivariate functional data. The MCMC strategy for posterior inference of the model is discussed in Section~\ref{sec:posterior}. In Section~\ref{sec:simulation}, we evaluate the effects of repulsion, covariates and other model settings in the clustering, through a simulation study. Section~\ref{sec:real} discusses the application of the proposed model to the same data set of individuals with CAI analyzed by \cite{hopkins2019}. Our main conclusions and some ideas for future research related to our work are discussed in Section~\ref{sec:final}. Additional results and a detailed description of the proposed MCMC algorithm for posterior simulation are provided in the Appendix.

\FloatBarrier
\section{The proposed MFRMM\lowercase{x} model}\label{sec:model}

Consider a set of $m$ individuals where each individual $i$, $i=1,\dots,m$, is represented by $D$ sequences of observations of common size $n_i$, denoted by the column vector $\bm{y}_{id}\!=\!(y_{id 1},\dots,y_{id n_i})^\T$, where $y_{idt_i}$ denotes the measurement for individual $i$ in each data sequence $d=1,\dots,D$, at times $t_i=1,\dots,n_i$. Denote by $\Yi$ the $n_i\tms D$ matrix with columns $\bm{y}_{id}$, that is, $\Yi=\left[\bm{y}_{i1}\,,\dots,\,\bm{y}_{iD}\right]$. We assume that $\Yi$ is modeled by
\begin{equation}\label{eq_yi}
	\Yi = \Bzi + H_i\Bi + \Ei, \quad i=1,\dots,m,
\end{equation}
\noindent where $H_i$ is the $n_i\tms p$ design matrix of a cubic B-spline curve with $p$ equally spaced inner knots \citep{fahrmeir2011}, $\Bi=\left[\bbt_{i1}\,,\dots,\,\bbt_{iD}\right]$ is a $p\,\tms\,D$ matrix which column vectors $\bbt_{id}$ are B-spline coefficients associated to the respective data sequences $\bm{y}_{id}$, and the matrix $\Bzi$ is equal to ${1}_{n_i}\bbt_{0i}^\T$, where ${1}_{n_i}$ denotes the $n_i \times 1$ column vector of $1$s, and $\bbt_{0i}=(\beta_{0i1},\dots,\beta_{0iD})^\T$. Then, $H_i\Bi$ is a $n_i\tms D$ matrix where each column $d$ is a smooth representation of the respective data sequence $\bm{y}_{id}$ in $\Yi$, and the elements $\beta_{0id}$ in $\bbt_{0i}$ account for the general level of the curves in $H_i\Bi$. Typically, the curves are misaligned, making knot selection difficult. To align the curves, we transform ``time'' into a unit interval, by considering transformed times $t_i^* = t_i/n_i$, for $t_i=1,\dots,n_i$, as proposed by \cite{page2015}. Thus, the set of B-spline knots will be selected such that the unit intervals are divided into equal subintervals, independently of $n_i$. The error term $\Ei=\left[\bm{\epsilon}_{i1},\dots,\bm{\epsilon}_{iD}\right]$ is a $n_i\tms D$ matrix with column vectors  $\bm{\epsilon}_{id}=(\epsilon_{id 1},\dots,\epsilon_{id n_i})^\T$, $d=1\dots,D$. We assume the $\Ei$'s conditionally independent and, to account for dependence among the $D$ curves of each individual, each $\Ei$ is assumed to follow a matrix Normal distribution with a $n_i\tms D$ zero mean matrix and covariance matrix $\Si\otimes I_{n_i}$, where $\Si$ is a $D\tms D$ covariance matrix with entries $\sigma_{i,dd'}$, $d,\!d'\!=\!1,\dots,D$, $I_{n_i}$ is the $n_i\tms n_i$ identity matrix, and $\otimes$ is the Kronecker product. For $d\!=d'$, $\sigma_{i,dd'}$ is the variance $\sigma^2_{id}$. Under these assumptions, the density of the matrix $\Ei$ is 

\begin{equation}\label{eq_Ei}
	f(\Ei) = (2\pi)^{-\frac{n_i D}{2}} |\Si|^{-\frac{n_i}{2}}
	\exp\left\{-\frac{1}{2}\,\tr(\Si^{-1} \Ei^\T\Ei)\right\},
	\quad i=1,\dots,m.
\end{equation}

\noindent Let $\vect(\Ei)$ be the vector of length $n_iD$ resulting from stacking the columns of $\Ei$. It follows that $\vect(\Ei)$ has a $n_iD$-variate Normal distribution with zero mean vector of dimension $n_iD$ and covariance matrix $\Si\otimes I_{n_i}$. These model assumptions imply that the likelihood function is simply a product of Normal densities. Conditionally on $\Bzi$ and $\Bi$, the covariance matrix $\Si\otimes I_{n_i}$ implies the following dependence structure in each observational unit $\Yi$, $i\!=\!1,\dots,m$: for all $\,t,t'\in\{1,\dots,n_i\}\,$ and $\,d,d'\in\{1,\dots,D\}$, we have ${\text{Cov}(y_{idt}\,,y_{id't'}\!)\!=\!\sigma_{i,dd'}}$ if $t\!=\!t'$ and, ${\text{Cov}(y_{idt}\,,y_{id't'}\!)\!=\!0}$ if $t\!\ne\!t'$. Thus, observations at different times are conditionally independent, while the observations of the $D$ sequences at each time $t$ have a common covariance structure  $\Si$ for all $t=1,\dots,n_i$. 

We assume, \textit{a priori}, that $\Si\overset{\text{iid}}{\sim}\IW(\Sigma_0,\omega)$ and $\beta_{0id}\overset{\text{iid}}{\sim}\N(\mu_0,\sigma_0^2)$, $i=1,\dots,m$, $d=1,\dots,D$. To identify clusters of curves with similar shapes, we propose a finite mixture distribution for the B-spline coefficients in $\Bi$.
For $d=1,\dots,D$ and $j=1,\dots,J$, let $\Theta_d=\{\bth_{1d},\dots,\bth_{Jd}\}$ be the set of column vectors $\bth_{jd}\in\mathbb{R}^p$, $\blam_d=\{\lambda^2_{1d},\dots,\lambda^2_{Jd}\}$ be the set of variances $\lambda^2_{jd}$, and $\bm{\pi}(\bm{x}_i)=(\pi_1(\bm{x}_i),\dots,\pi_J(\bm{x}_i))$ be the vector of covariate-dependent mixing weights $\pi_j(\bm{x}_i)>0$, with $\sum_{j=1}^{J}\pi_j(\bm{x}_i)=1$, where $\bm{x}_i$ is a $\Ell\tms1$ vector of covariates of individual $i$. Conditionally on $\bTh=\{\Theta_1,\dots,\Theta_D\}$, $\bLam=\{\blam_1,\dots,\blam_D\}$ and $\bm{\pi}(\bm{x}_i)$, the matrices $\Bi$, $i=1,\dots,m$, are independent and identically distributed, and the vectors $\bbt_{id}$ in $\Bi$ are independent across dimensions $d$, with prior distribution given by

%\begin{equation}\label{prior_beta}
%	\begin{aligned}
	%		&f(\bbt_{i1},\dots,\bbt_{iD} \v \bTh,\bLam,\bm{\pi}(\bm{x}_i))\\
	%		%\,\overset{\text{iid}}{=}\, \sum_{j=1}^{J}
	%		&\,=\, \sum_{j=1}^{J}
	%		\pi_j(\bm{x}_i)
	%		\prod_{d=1}^{D}\N_p(\bbt_{id}\,;\, \bth_{jd},\lambda^{2}_{jd}I_p),
	%		%\quad i=1,\dots,m,
	%	\end{aligned}
%\end{equation} 

\begin{equation}\label{prior_beta}
	f(\bbt_{i1},\dots,\bbt_{iD} \v \bTh,\bLam,\bm{\pi}(\bm{x}_i))
	\,=\,
	\sum_{j=1}^{J}
	\pi_j(\bm{x}_i) \prod_{d=1}^{D}\N_p(\bbt_{id}\,;\, \bth_{jd},\lambda^{2}_{jd}I_p),
	\quad i=1,\dots,m,
\end{equation} 

\noindent where $J$ is a known (fixed) number of mixture components, $\prod_{d=1}^{D}\N_p(\bbt_{id}\,;\, \bth_{jd},\lambda^{2}_{jd}I_p)$ is the $j$th component density and $\N_p(\bbt_{id}\,;\, \bth_{jd},\lambda^{2}_{jd}I_p)$ denotes the density of a $p$-variate Normal distribution with mean vector $\bth_{jd}$ and covariance matrix $\lambda^{2}_{jd}I_p$, evaluated at $\bbt_{id}$. Given \eqref{prior_beta}, the prior distribution for $\bbt_{id}$, for $i=1,\dots,m$ and $d=1,\dots,D$, may be hierarchically represented as
\begin{equation}\label{prior_beta_z}
	\begin{aligned}
		\bbt_{id} \mid z_i, \bth_{z_i d}, \lambda^2_{z_i d}
		&\,\overset{\text{ind}}{\sim}\,
		\N_p(\bth_{z_i d} \,, \lambda^2_{z_i d}I_p)\\
		z_i \v \bm{\pi}(\bm{x}_i)
		&\,\overset{\text{ind}}{\sim}\, \Pr(z_i\!=\!j\v\bm{x}_i)\!=\!\pi_j(\bm{x}_i),
	\end{aligned}
\end{equation}
\noindent where $z_i$ is a latent categorical variable assuming the value $j$ if unit $i$ is allocated to the mixture component $j$. This way, probabilistic cluster identification is obtained through the posterior distribution of $\bz\!=\!(z_1,\dots,z_m)$. The model assumptions in \eqref{prior_beta_z} imply that the curves $H_i\bbt_{id}$ of individuals belonging to the same cluster $j$ are characterized by varying around a common cluster-specific curve $H\bth_{jd}$, in every dimension $d=1,\dots,D$. As discussed in Section~\ref{sec:intro}, the location parameters of mixture component densities are commonly assumed to be independent, which does not encourage separation between mixture components, thus leading to cluster redundancy and overestimation of the number of clusters \citep{xu2016,bianchini2020,quinlan2021}. The proposed model innovates the analysis of functional data by assuming a repulsive joint prior to the component location vectors in $\Theta_d$, for every dimension $d=1,\dots,D$. The proposed prior is based on the repulsive strategy introduced by \cite{petralia2012}, and its purpose is to favor the identification of well-differentiated clusters in the context of multivariate functional curves, as will be described in \ref{sec:repulsion}. For the component-specific variance parameters $\lambda^{2}_{jd}$, we assume, \textit{a priori}, that
\begin{equation}\label{prior_lam2}
	\sqrt{\lambda^{2}_{jd}} \;\overset{\text{iid}}{\sim}\; \U(0,A_d),
	\quad j=1,\dots,J,\;\;d=1,\dots,D.
\end{equation}
\noindent The hyperparameter $A_d$ truncates the variability of the curves within each cluster, allowing for more (less) heterogeneous curves to co-cluster for higher (smaller) values of $A_d$.

To consider the influence of covariates on clustering, we assume a covariate-dependent stick-breaking representation of $\pi_j(\bm{x}_i)$ as in \cite{rigon2021}, that is,

\begin{equation}\label{EqLT}
	\pi_1(\bm{x}_i) = \nu_1(\bm{x}_i),\;\;\;\;\;\;
	\pi_j(\bm{x}_i) = \nu_j(\bm{x}_i)\prod_{\ell=1}^{j-1}\left(1\!-\!\nu_\ell(\bm{x}_i)\right),\, j\!=\!2,\dots,J-1,
\end{equation}

%\begin{equation}\label{EqLT}
%	\begin{aligned}
	%		&\pi_1(\bm{x}_i) = \nu_1(\bm{x}_i),\nonumber\\
	%		&\pi_j(\bm{x}_i) = %\nu_j(\bm{x}_i)\prod_{\ell=1}^{j-1}\left(1\!-\!\nu_\ell(\bm%{x}_i)\right),\, j\!=\!2,\dots,J-1,
	%	\end{aligned}
%\end{equation}
%
\noindent where $\nu_j(\bm{x}_i)=\Pr(z_i=j\mid z_i\ge j,\,\bm{x}_i)$, which is the conditional probability of the individual $i$ being allocated to mixture component $j$, given that it was not allocated to any of the preceding components ${\ell=1,\dots,j-1}$, and $\nu_J(\bm{x}_i)=1$, which restricts the original stick-breaking prior \citep{ishwaran2001} to a FMM. This strategy allows for exact sampling of $\pi_j(\bm{x}_i)$, based on the P{\'o}lya-Gamma data augmentation method \citep{polson2013}. Details are given in Section~\ref{supp:covariates} of the Appendix.
Specifically, we assume that $\pi_j(x_i)$ follows a stick-breaking construction with a logit transformation of the conditional probabilities $\nu_j(\bm{x}_i)$, given by $\bm{x}^\T_i\ba_j=\log\big(\nu_j(\bm{x}_i)/(1-\nu_j(\bm{x}_i))\big),$ where $\bm{x}_i$ are covariates related to the weights of individual $i$ and  $\ba_j$ is the $\Ell \times 1$ component-specific column vector of coefficients $\ba_j$, $i=1,\dots,m$ and $j=1,\dots,J$. Under this structure, the mixing weights are given by

\begin{equation}\label{def_weights_x}
	\pi_j(\bm{x}_i \v \ba_j)
	=
	\frac{\exp\Big(\bm{x}^\T_i\ba_j\Big)}{1+\exp\Big(\bm{x}^\T_i\ba_j\Big)}
	\prod_{\ell=1}^{j-1}\frac{1}{1+\exp\Big(\bm{x}^\T_i\ba_\ell\Big)}.
\end{equation}

\noindent The prior distribution of the covariate-dependent mixing weights $\pi_j(\bm{x}_i \v \ba_j)$ is specified indirectly  by eliciting  the prior distribution  for $\ba_j$. We assume that $\ba_j\overset{\text{ind}}{\sim} \N_\Ell(\bmu_\alpha,\Sigma_\alpha)$, for $j=1,\dots,J$. If covariates are judged unimportant to cluster identification, we assume independent mixing weights $\bm{\pi}=(\pi_1,\dots,\pi_J)$ to define ${\Pr(z_i=j\mid\bm{\pi})=\pi_j}$, for $j=1,\dots,J$. As discussed by \cite{rousseau2011}, assuming the number of components $J$ sufficiently large and a Dirichlet prior distribution for $\bm{\pi}=(\pi_1,\dots,\pi_J)$ with small parameters favors the elimination of empty extra components. In this case, we assume $\bm{\pi}\sim\Dir(1/J,\dots,1/J)$, as suggested in \cite{bda3}.

\FloatBarrier
\subsection{Repulsive prior density for multivariate functional curves}
\label{sec:repulsion}

To favor identification of well-differentiated clusters with respect to curve shapes, we introduce the following joint repulsive prior for the component location vectors in ${\Theta_d=\{\bth_{1d},\dots,\bth_{Jd}\}}$ and component-specific variances ${\btau_d=\{\tau^2_{1d},\dots,\tau^2_{Jd}\}}$, ${d=1,\dots,D}$:
%
%\begin{equation}\label{prior_Theta_d}
%	\begin{aligned}
	%		&f(\Theta_d,\btau_d \v \bm{\mu}_d) =\\
	%		&\hspace{-.1in}\frac{1}{\mathrm{C}_{Jd}}\!
	%		\left(
	%		\prod_{j=1}^{J}
	%		\!\N_p(\bth_{jd} \,; \bm{\mu}_d,\tau^2_{jd} K^{-1})
	%		\IG(\tau^2_{jd} \,; a_\tau,b_\tau)
	%		\!\!\right)
	%		\!h(\Theta_d),
	%	\end{aligned}
%\end{equation}
%
\begin{equation}\label{prior_Theta_d}
	f(\Theta_d,\btau_d \v \bm{\mu}_d)
	\;=\;
	\frac{1}{\mathrm{C}_{Jd}}
	\bigg(
	\prod_{j=1}^{J}
	\N_p(\bth_{jd} \,; \bm{\mu}_d,\tau^2_{jd} K^{-1})
	\,\IG(\tau^2_{jd} \,; a_\tau,b_\tau)
	\bigg)
	h(\Theta_d),
	%\quad d=1,\dots,D,
\end{equation}
\noindent where $C_{Jd}$ is the normalizing constant obtained by the integral
\begin{equation}\label{eq_CJd}
	%\bigintssss\!\!
	\int\!
	\bigg(
	\prod_{j=1}^{J}
	\N_p(\bth_{jd} \,; \bm{\mu}_d,\tau^2_{jd} K^{-1})
	\,\IG(\tau^2_{jd} \,; a_\tau,b_\tau)
	\bigg)
	h(\Theta_d)\d\btau_d\d\Theta_d,
\end{equation}
\noindent and $h(\Theta_d)$ is the functional repulsive factor, defined by
\begin{equation}\label{def_h} 
	h(\Theta_d) = \prod_{(j,\ell)\,\in\,\J} \exp\left(-\phi_d\,\dist(\bth_{jd},\bth_{\ell d})^{-\nu}\right),
	%\quad d=1,\dots,D,
\end{equation}
where $\J=\{(j,\ell):j,\ell=1,\dots,J,\,\ell<j\}$, $\dist(\bth_{jd},\bth_{\ell d})$ is a distance between the component-specific curves determined by the coefficients $\bth_{jd}$ and $\bth_{\ell d}$, and $\phi_d\in\mathbb{R}^{+}$ and $\nu\in\mathbb{N}$ are hyperparameters that calibrate the repulsion intensity. We note that other forms of the $h$ function in \eqref{def_h} can be equally adopted; see the discussion in, e.g., \cite{quinlan2021} and \cite{xie2020}.

Different prior distributions may be obtained by assuming distinct specifications of $\dist$. The distance between two continuous functions $f_j$ and $f_\ell$ is commonly measured by the $\text{L}_q$-norm $\left(\int\vert f_j(t)-f_\ell(t)\vert^q\, \d t\right)^{1/q}$. Since the B-spline curves are discretized representations of continuous functions, we consider the following approximation of the $\text{L}_q$-norm to specify $\dist$:
\begin{equation}\label{eq_distance_approx}
	\dist(\bth_{jd},\bth_{\ell d})
	= \left(\frac{1}{n}\sum_{t=1}^{n}\big\vert H\bth_{jd}(t)-H\bth_{\ell d}(t)\big\vert^q \right)^{1/q},
\end{equation}
\noindent for $(j,\ell)\in\J$ and $d=1,\dots,D$, where $H\bth_{jd}(t)$ and $H\bth_{\ell d}(t)$ are the values of the component-specific B-spline curves determined by a common design matrix $H$, of dimension $n\tms p$, and coefficients $\bth_{jd}$ and $\bth_{\ell d}$, evaluated at coordinate $t$. The length $n$ of the component-specific curves and the value of $q$ are preset. The choice of $n$ should be based on the length of the sequences in each application. It can be set, for example, to the maximum value in \{$n_1,\dots,n_m$\}. Higher values of $n$ produce more accurate approximations of the $\text{L}_q$-norm, but increase computational cost. A distance based on the vectors $\bth_{jd}$, rather than the funcional curves $H\bth_{jd}$, would also produce a repulsive model. However, the distance based on the functional curves provides a more precise differentiation between the curve shapes.

%Since the proposed model is defined to identify groups of functional curves based on shape similarity, the estimation of two or more cluster-specific curves that have very similar shapes would characterize redundancy, regardless of their vertical level distance. Unlike the design matrix $H_i$ defining the individual-specific B-spline curves in \eqref{eq_yi}, the design matrix $H$ considered to compute the distance in \eqref{eq_distance_approx} is standardized to make $\sum_{i=1}^{n} H\bth_{jd}(t_i)=0$. Consequently, the distance defined in \eqref{eq_distance_approx} ignores the vertical level distance between the curves $H\bth_{jd}$ and $H\bth_{\ell d}$ and represents only the difference in shape between them.

The repulsive factor $h(\Theta_d)$ penalizes the density \eqref{prior_Theta_d} at shorter distances between components in every dimension $d=1,\dots,D$. Unlike $H_i$, used to define individual curves according to \eqref{eq_yi}, the design matrix $H$ considered to compute distances using \eqref{eq_distance_approx} is standardized to make $\sum_{i=1}^{n} H\bth_{jd}(t_i)=0$. Therefore, that distance does not take into account the vertical level of the component-specific mean curves $H\bth_{jd}$ and represents only a difference between the shapes of the curves.

The hyperparameters $\phi_d$, $d=1,\dots,D$, and $\nu$ are fixed at known values. Each dimension $d$ has an independent repulsive factor $h(\Theta_d)$, with a specific parameter $\phi_d$. Specifying different values of $\phi_d$ across the dimensions $d$ can be a useful strategy, depending on the application. Setting $\phi_d=0$ implies that the component-specific mean vectors $\bth_{jd}$ are conditionally independent, given $\bmu$ and $\tau^2_{jd}$, and have a Normal prior distribution $\N_p(\bth_{jd}\,;\,\bm{\mu}_d,\tau^2_{jd} K^{-1})$. We further assume that the mean vectors $\bm{\mu}_d$ have independent normal priors, $\bm{\mu}_d\overset{\text{iid}}{\sim}\N_p(\bm{0},s^2_\mu I_p)$, $d=1,\dots,D$. Following \cite{page2015}, the variance matrix $K^{-1}$ follows from a first-order random walk assumption on the elements of $\bth_{jd}$, that is used to avoid overfitting when a moderate or large number of B-spline knots $p$ is chosen. Further details on this assumption are discussed in Section~\ref{supp:mfppmx} of the Appendix. It is also worth mentioning that the smoothness of the curves also influences clustering via the component-specific parameters $\tau^2_{jd}$'s. %We assume that $b_\tau\sim\text{G}(\xi,\varpi)$, where $E(b_\tau)=\xi/\varpi$.

Under the repulsive dependence between mixture components induced by $h(\Theta_d)$, the model loses the property of sample-size consistency \citep{quinlan2021}, which is verified if the elements in $\Theta_d$ are assumed independent. If the sample-size consistency is not verified, assuming a random or infinite $J$ imposes some challenges on model estimation. This motivates us to fix $J$ at a known, large enough value. Another cumbersome aspect of our model is the fact that the unknown constant $C_{Jd}$ in \eqref{prior_Theta_d} depends on the parameters $\phi_d$ and $\nu$, which makes estimation of these parameters non-trivial, justifying our strategy of fixing them to known values. Discussions on how to deal with this problem on non-functional repulsive mixture models can be found in \cite{beraha2022} and \cite{cremaschi2024}.

To guide the specification of $q$, $\phi_d$ and $\nu$, we evaluate the \textit{a priori} behavior of the pairwise distances between the component-specific mean curves through simulation. Assuming $n=30$, $a_\tau=1$, $b_\tau=1$, $\bmu=\bm{0}$, $J=10$ mixture components and $p=10$ inner knots defining the B-spline functions, we obtain values of $\dist$ defined in \eqref{eq_distance_approx} using values of $\Theta_d$ simulated from the repulsive prior distribution in \eqref{prior_Theta_d}.  Different combinations of $q$, $\phi_d$ and $\nu$ values are assumed. Considering the simulated values, Figure \ref{fig:priorTheta_prob3} shows the prior probability of the distance $\dist$ being greater than a value $\delta$, and Figure \ref{fig:priorTheta_dist_mean_p3} shows the impact of the values of $q$, $\phi_d$ and $\nu$ in the distance between component mean curves, \textit{a priori}.

\begin{figure}[h]%[t]
	\begin{minipage}[t]{.5\textwidth}
		\begin{adjustwidth}{-1.5cm}{-1cm}
		\centering
		\hspace{-.2in}\stackunder[1pt]{
			\raisebox{3.9cm}{\rotatebox[origin=c]{90}{\scriptsize $\Pr\big(\dist\!>\!\delta\big)$}}
			\includegraphics[width=8.0cm, height=8.0cm, trim=0.2cm 0 0 0]{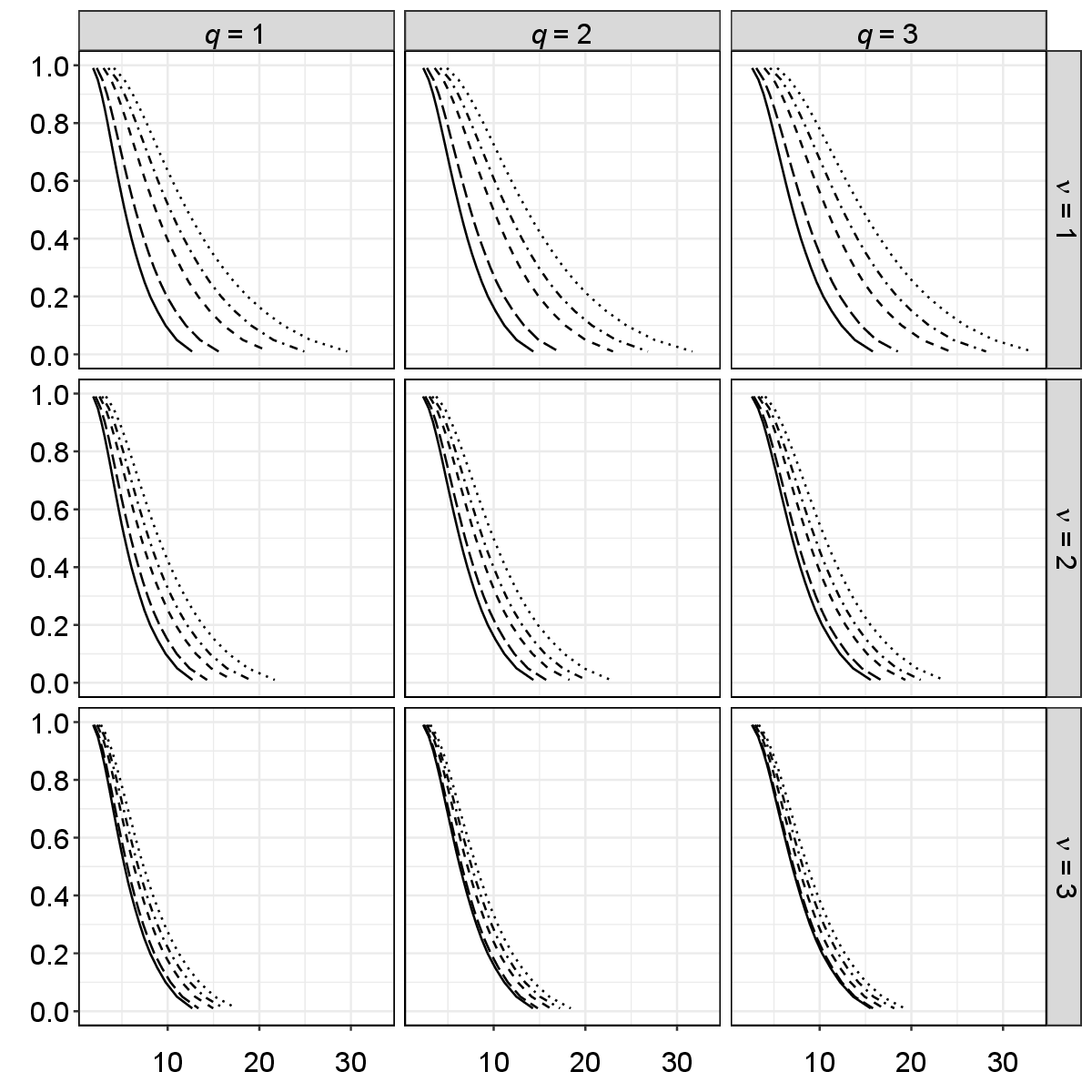}}
		{\scriptsize \hspace{.25in}$\delta$}
		\end{adjustwidth}
		\vspace{-.1in}
		\caption{Probability of the mean distance between component mean curves being greater than a value $\delta$, simulated \textit{a priori} for different values of $q$ and $\nu$, for $\phi\!=\!0$ (solid line), $\phi\!=\!1$ (long dashed line), $\phi\!=\!5$ (dashed line), $\phi\!=\!10$ (dash-dotted line) and $\phi\!=\!20$ (dotted line).}
		\label{fig:priorTheta_prob3}
	\end{minipage}
	%\hspace{0.1cm}
	\begin{minipage}[t]{.5\textwidth}
		\begin{adjustwidth}{-1.0cm}{}
		\centering
		\hspace{-.2in}\stackunder[1pt]{
			\raisebox{3.9cm}{\rotatebox[origin=c]{90}{\scriptsize Mean distance}}
			\includegraphics[width=8.0cm, height=8.0cm, trim=0.2cm 0 0 0]{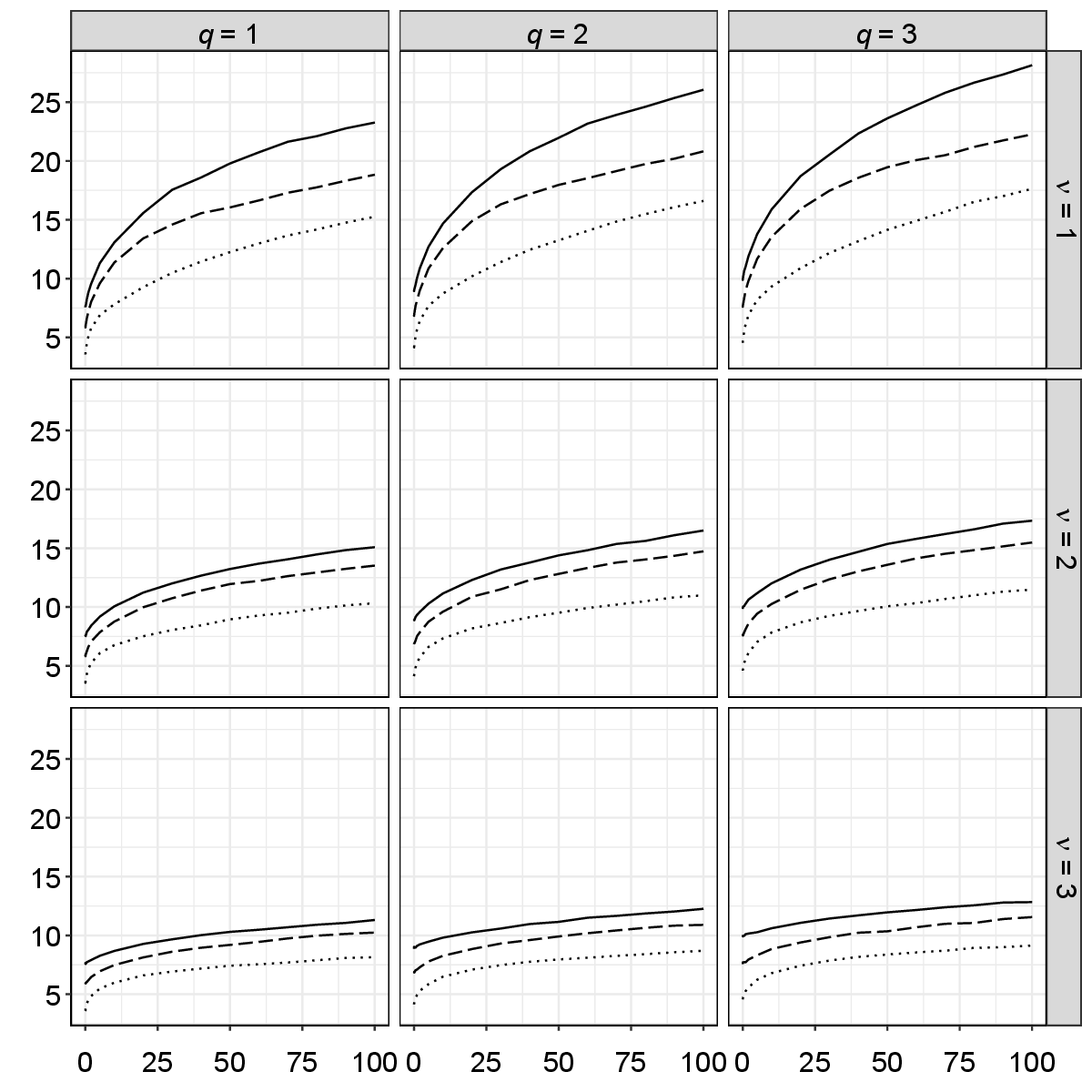}}
		{\scriptsize \hspace{.25in}$\phi$}
		\end{adjustwidth}
		\begin{adjustwidth}{-0.5cm}{}
		\centering
		\vspace{-.1in}
		\caption{Mean distance between component mean curves simulated \textit{a priori} for different values of $q$, $\nu$ and $\phi$, for $p\!=\!5$ (dotted line), $p\!=\!10$ (dashed line) and $p\!=\!15$ (solid line).}
		\label{fig:priorTheta_dist_mean_p3}
		\end{adjustwidth}
	\end{minipage}
\end{figure}

Based on the probability values shown in Figure \ref{fig:priorTheta_prob3}, we can calibrate the repulsion by choosing values of $q$, $\phi_d$ and $\nu$ that imply a high probability that $\dist$ will be greater than a distance $\delta$ that would characterize redundant clusters based on the opinion of experts in each specific application. For example, if distances between cluster mean curves lower than $10$ in one specific dimension $d$ were considered a redundancy problem, we could assume, \textit{a priori}, a high probability $\Pr(\dist>10\big)=0.90$ by assuming $\phi_d=5$ with $\nu=1$ and $q=2$. That same probability would also be obtained by assuming $\phi_d=10$ with $\nu=1$ and $q=1$. We can see from Figure \ref{fig:priorTheta_dist_mean_p3} that higher values of $q$ and $\phi_d$, and lower values of $\nu$, increase the repulsion, producing greater differentiation between the component mean curves. This conclusion could also be derived analytically from \eqref{def_h} and \eqref{eq_distance_approx}. We can also observe that higher number of knots $p$ implies higher differentiation between the component mean curves. In the applications of the proposed model to simulated and real data applications discussed in the following sections, we assume $q=2$ to define the $\text{L}_q$-norm approximation in \eqref{eq_distance_approx}. Regarding the hyperparameters $\phi_d$ and $\nu$, we follow the recommendation of \cite{petralia2012} to keep the value of $\nu$ constant and vary the repulsion intensity by varying the value of $\phi_d$. In particular, we assume $\nu=1$ and increase (or reduce) the repulsion intensity by increasing (or reducing) the value of $\phi_d$.

\FloatBarrier
\section{Posterior inference}\label{sec:posterior}

The proposed model is hierarchically represented as follows:

\begin{equation}\label{eq_model}
	\begin{aligned}
		\vect(\Yi) \v \Bzi,\Bi,\Si
		&\,\overset{\text{ind}}{\sim}\,
		\N_{n_i D}(\vect(\Bzi \!+\! H_i\Bi), \Si\!\otimes\!I_{n_i}),
		\\%[-.05in]
		\Si
		&\;\overset{\text{iid}}{\sim}\;
		\IW(\Sigma_0,\omega),
		\\%[-.05in]
		\beta_{0id} \v \mu_{0d},\sigma_{0d}^{2}
		&\,\overset{\text{ind}}{\sim}\,
		\N(\mu_{0d},\sigma_{0d}^{2}),
		\\%[-.05in]
		\bbt_{id} \v z_i,\bth_{z_i d},\lambda^{2}_{z_i d}
		&\,\overset{\text{ind}}{\sim}\,
		\N_p(\bth_{z_i d},\lambda^{2}_{z_i d} I_p),
		\\[.05in]
		\sqrt{\lambda^{2}_{z_i d}}
		&\;\overset{\text{iid}}{\sim}\;
		\U(0,A_d),
		\\[-.03in]
		f(\Theta_d\,,\btau_d \v \bm{\mu}_d)
		%\;\propto\;
		&\;\overset{\text{ind}}{\propto}\;
		h(\Theta_d)
		\prod_{j=1}^{J}
		\N_p(\bth_{jd} \,; \bm{\mu}_d,\tau^2_{jd} K^{-1})\,\IG(\tau^2_{jd} \,; a_\tau,b_\tau),
		\\%[-.05in]
		\bm{\mu}_d
		&\;\overset{\text{iid}}{\sim}\;
		\N_p(\bm{0},s^2_\mu I_p),
		\\[.02in]
		%	%
		%	\tau^2_{jd}
		%	&\overset{\text{iid}}{\sim}&
		%	\IG(a_\tau,b_\tau)
		%	\\[-.05in]
		%			%
		\Pr(z_i=j \v \bm{x}_i , \ba_j)
		&\;=\;
		\pi_j(\bm{x}_i \v \ba_j),
		\\[.02in]
		\ba_j
		&\;\overset{\text{iid}}{\sim}\;
		\N_{\Ell}(\bm{\mu}_\alpha,\Sigma_\alpha),
		\\%[-.05in]
		%
		%&\hspace{-1in}\text{for $i=1,\dots,m$, $j=1,\dots,J$ and $d=1,\dots,D$.}
		%\\%[.03in]
	\end{aligned}
\end{equation}

\noindent for ${i=1,\dots,m}$, ${j=1,\dots,J}$ and ${d=1,\dots,D}$. If covariates are not considered, the prior distribution of $z_i$ in \eqref{eq_model} is replaced by ${\Pr(z_i=j\v\bm{\pi})=\pi_j}$, with $\bm{\pi}\sim\Dir(1/J,\dots,1/J)$. In addition, the model is fully specified by the hyperpriors ${\mu_{0d}\overset{\text{iid}}{\sim}\N(0,s^2_0)}$ and  ${\sigma_{0d}^{2}\overset{\text{iid}}{\sim}\IG\left(a_0,b_0\right)}$, for $d=1,\dots,D$, and ${b_\tau\sim\text{G}(\xi,\varpi)}$. The closed form of the joint posterior of our model is not analytically obtained; see Equation \eqref{eq_posterior} in the Appendix. A suitable MCMC algorithm to sample from the posterior distribution is detailed in Section \ref{supp:posterior} of the Appendix. Some specific features regarding the proposed MCMC algorithm are briefly discussed next. The full conditional distribution of the variance parameters $\lambda^{2}_{jd}$ is a Inverse-Gamma truncated in the interval $(0,A_d^2)$, from which we sample using the Adaptive Rejection Sampling method (ARS) introduced by \cite{gilks1992}. Sampling from the component-specific mean vectors $\bth_{jd}$ is done jointly for all dimensions $d=1,\dots,D$, through the Metropolis-Hastings acceptance-rejection method \cite{metropolis1953,hastings1970}, by assuming independent Normal proposal distributions for each element $\bth_{jd}$, $d=1,\dots,D$. The proposal variance is adapted during an initial burn-in period, in order to achieve appropriate acceptance rates for all the mixture components $j=1,\dots,J$, since an empirical search for a suitable proposal variance could be very costly, especially when dealing with large data dimensions $D$. The full conditional distributions of the component-specific mean vectors $\bth_{jd}$ and the allocation vector $\bm{z}$ are marginalized with respect to $\mathcal{B}_1,\dots,\mathcal{B}_m$, which provided better mixing in the posterior chain. An \texttt{R} package containing the implementation of the MCMC algorithm is available at \url{https://github.com/rcpedroso/MHRMMx}. The algorithm is fully written in \texttt{c++}, which significantly speeds up execution. One important feature of our MCMC algorithm is the presence of a split-merge step that has proved essential for more accurate cluster identification for larger values of the dimension $D$. We discuss this split-merge step in the next section.

\FloatBarrier
\subsection{The split-merge step}

As discussed in \cite{jain2007}, including a split-merge step, in addition to the sequential update of the cluster allocation one individual at a time, prevents the chain from getting stuck in local modes, which frequently occurs when clustering high-dimensional data. While the split-merge step makes large modifications to the cluster allocation, sequentially moving one individual at a time provides small refinements to the clustering. Applications of the split-merge method can be found in \cite{martinez2014}, \cite{bouchard2017} and \cite{miller2018}, among others. To the best of our knowledge, the use of this method for functional clustering or for mixture models with dependent components has not been proposed previously, thus constituting an important contribution of this work.

At each MCMC iteration, a split of one cluster in two or a merge of two clusters in one is proposed, and is accepted or not based on the acceptance probability

\begin{equation}\label{eq_SM_acceptance}
	a(\gamma^\star,\gamma)
	\,=\, \min
	\left[1\,,\,
	\frac{ q(\gamma\v\gamma^\star) }{ q(\gamma^\star\v\gamma) }\,
	\frac{ P(\gamma^\star) }{ P(\gamma) }\,
	\frac{ L(\gamma^\star) }{ L(\gamma) }
	\right],
\end{equation}

\noindent where $\gamma$ and $\gamma^\star$ denote, respectively, the current and proposal states of the cluster allocation and respective cluster-specific parameters. Functions $P$ and $L$ represent the prior and the likelihood functions of the model, respectively, and function $q$ is determined by the Metropolis-Hastings proposal densities. The split-merge method was originally proposed for the nonconjugate Dirichlet process mixture model \citep{ferguson1983}, which assumes the parameters of different clusters are independent, \textit{a priori}. In that case, the ratio $P(\gamma^\star)/P(\gamma)$ in \eqref{eq_SM_acceptance} depends only on the parameters of those clusters being split or merged, and the proposal distributions $q$ are the full conditionals of the respective cluster parameters. Our split-merge procedure differs from the original one in two main aspects. First, the ratio $P(\gamma^\star)/P(\gamma)$ also depends on parameters of mixture components that are not being updated, which includes empty components. This is due to the repulsive dependence between different components, implied by the repulsive factors $h(\Theta_d)$, $d=1,\dots,D$, which appear in the numerator and denominator of this ratio; see Equations \eqref{prior_split_rate} and \eqref{prior_merge_rate} in the Appendix. Second, we do not consider the full conditional distributions of all component-specific parameters to create the update proposal. The $\tau^2_{jd}$ and $\lambda^2_{jd}$ variances are proposed from their corresponding full conditionals, but the elements in $\Theta_d$ are proposed from the $p$-variate Normal distribution obtained when assuming $h(\Theta_d)\!=\!1$ in the full conditional distribution in \eqref{full_theta_dep}. Otherwise, the proposal distribution of $\bth_{jd}$ would not have a closed form, making the split-merge step very complex to carry out. The algorithm of our split-merge procedure is described in detail in Section \ref{supp:SM} of the Appendix. %A \texttt{C++} implementation of the full MCMC algorithm is available in an \texttt{R} package at \url{https://github.com/rcpedroso/MFRMMx}.

\FloatBarrier
\section{Simulation study}\label{sec:simulation}

In this section, we discuss the application of the proposed model to simulated data, evaluating the performance of the proposed model under various levels of repulsion and the influence of individual-specific covariates on clustering. For comparison purposes, we also consider a multivariate version of the (non-repulsive) functional clustering model introduced by \cite{page2015}, which is described in Section~\ref{supp:mfppmx} of the Appendix. We will refer to their model as MFPPMx (which abbreviates \textit{Multivariate Functional Product Partition Model with covariates}).

We consider 25 data sets of $m=40$ individuals, with each individual being represented by a multivariate data sequence with $D=4$. For each individual $i=1,\dots,m$ and dimension $d=1,\dots,D$, we generated data sequences $\byid$ of common size $n_i=30$, as follows. First, we simulated four different cluster-specific means $\bth_{jd}$, $j=1,\dots,4$, from \eqref{prior_Theta_d}, with $p=4$ knots, precision matrix $K=I_p$, $\bmu_d=\bm{0}$, $\tau^2_{jd}=200$, and assuming no repulsion by letting $h(\Theta_d)=1$. Then, we generated individual-specific curves by simulating $\Bi$ from \eqref{prior_beta}, centered in the respective simulated cluster-specific means $\bth_{jd}$, such that two clusters have 8 individuals and the other two clusters have 12 individuals, assuming $\lambda^2_{jd}=100$, for all $j$. Finally, the curves $H_i\Bi$ were considered to generate correlated data sequences $\bm{y}_{i1},\dots,\bm{y}_{i4}$ through \eqref{eq_yi}, by simulating $\beta_{0id}\overset{\text{iid}}{\sim}\N(10,4)$ and $\vect(\Ei)\overset{\text{iid}}{\sim}N_{n_iD}(\bm{0},\Si\otimes I_{n_i})$. We assume the variance $\sigma^2_{i,d}\overset{\text{iid}}{\sim}\U(0.1,4)$, $d=1,\dots,D$,  a positive correlation 0.9 between observations $y_{i1t}$ and $y_{i2t}$, a negative correlation $-0.9$ between observations $y_{i1t}$ and $y_{i3t}$ and observations $y_{i2t}$ and $y_{i3t}$, for all $t=1,\dots,n_i$,  and no correlation between $\bm{y}_{i4}$ and the other data sequences. Figure \ref{fig:sim4_mean_curves} shows an example of simulated cluster-specific mean curves and Figure \ref{fig:sim4_all} shows examples of simulated individual-specific curves and data sequences of one cluster. Another view, with the individual-specific curves and data sequences of one simulated data set, is available in Section \ref{supp:sim4} of the Appendix.

\begin{figure*}[h]%[!htb]%[t]
	\begin{adjustwidth}{-0.6cm}{-0.6cm}
	\centering
	{\footnotesize$d=1$\hspace{3.2cm}$d=2$\hspace{3.2cm}$d=3$\hspace{3.2cm}$d=4$}\\[-.25in]
	\includegraphics[width=5cm, height=4.5cm, trim=0cm 0cm 0 0]{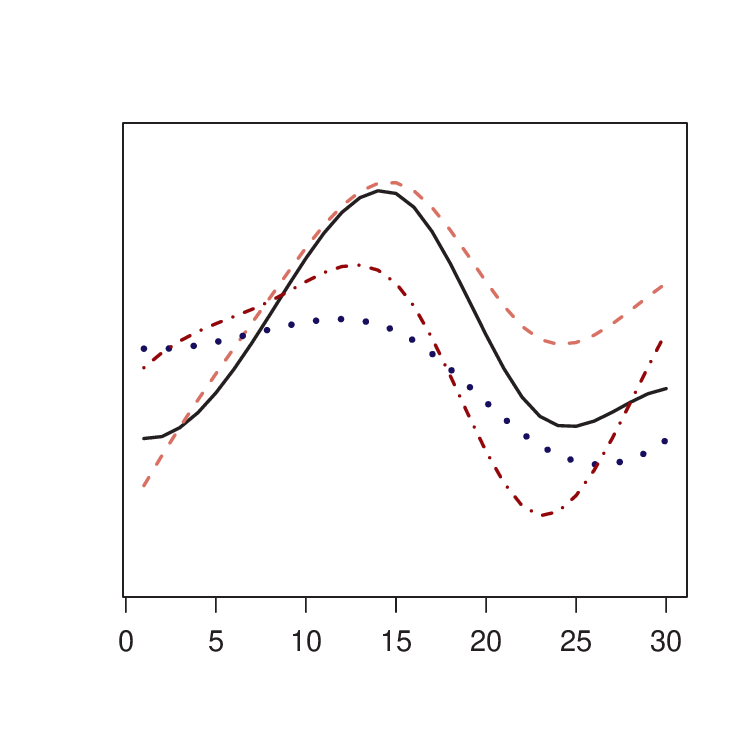}
	\label{fig_sim4_mean_curves_d1}\hspace{-.5in}
	\includegraphics[width=5cm, height=4.5cm, trim=0cm 0cm 0 0]{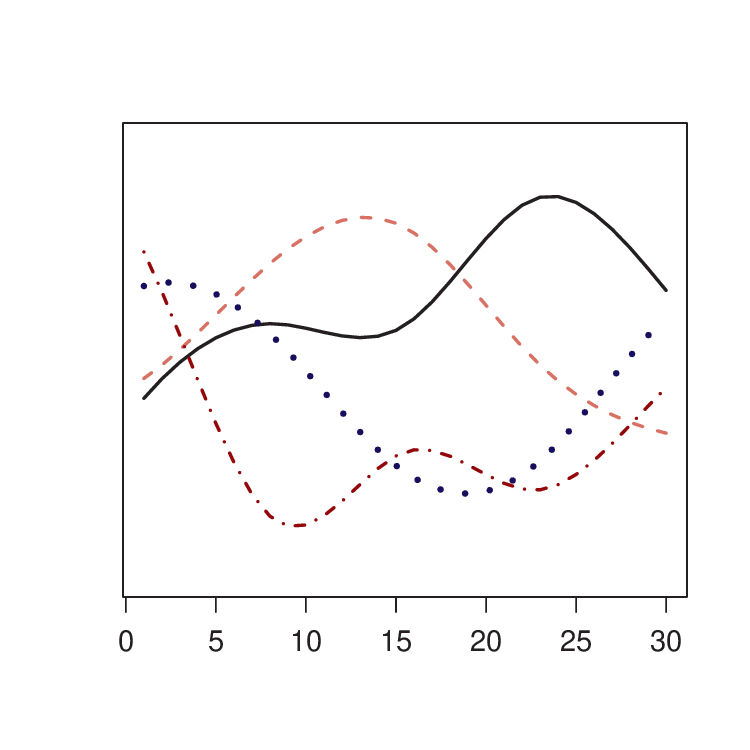}
	\label{fig_sim4_mean_curves_d2}\hspace{-.5in}
	\includegraphics[width=5cm, height=4.5cm, trim=0cm 0cm 0 0]{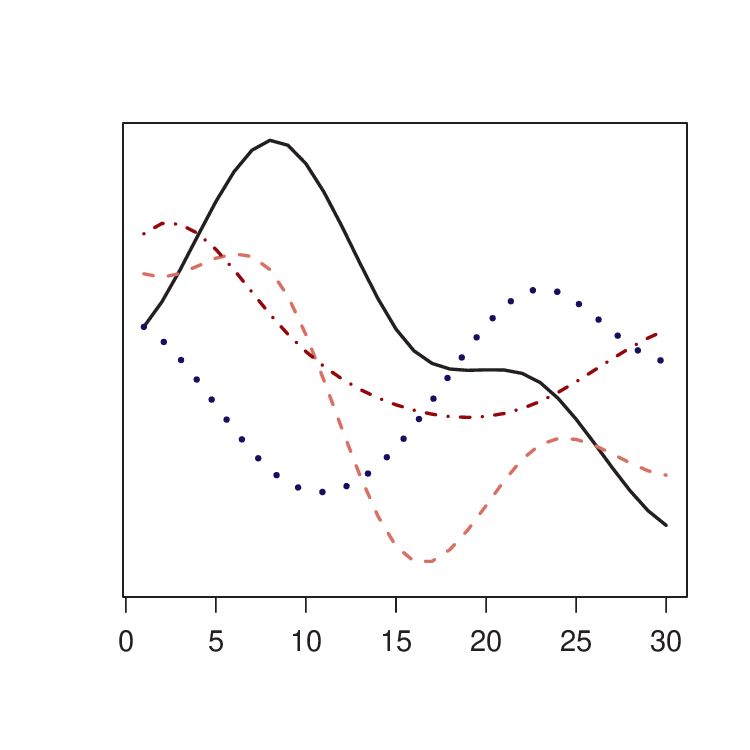}
	\label{fig_sim4_mean_curves_d3}\hspace{-.5in}
	\includegraphics[width=5cm, height=4.5cm, trim=0cm 0cm 0 0]{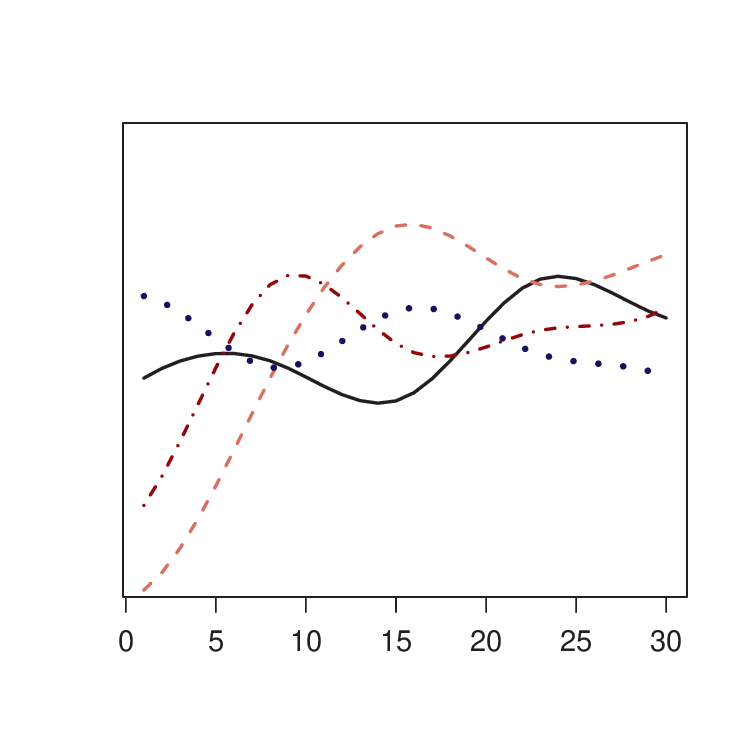}
	\label{fig_sim4_mean_curves_d4}
	\vspace{-.4in}
	\caption{Cluster-specific mean curves of one simulated data set, for each dimension $d\!=\!1,\dots,4$. Each cluster is identified by curves with the same line type and color across the four dimensions.}
	\label{fig:sim4_mean_curves}
	%\end{figure*}
	%
	%
	%\begin{figure*}[t]%[!htb]%[t]%[h]
	\centering
	{\footnotesize$d=1$\hspace{3.2cm}$d=2$\hspace{3.2cm}$d=3$\hspace{3.2cm}$d=4$}\\[-.25in]
	\includegraphics[width=5cm, height=4.5cm, trim=0cm 0cm 0 0]{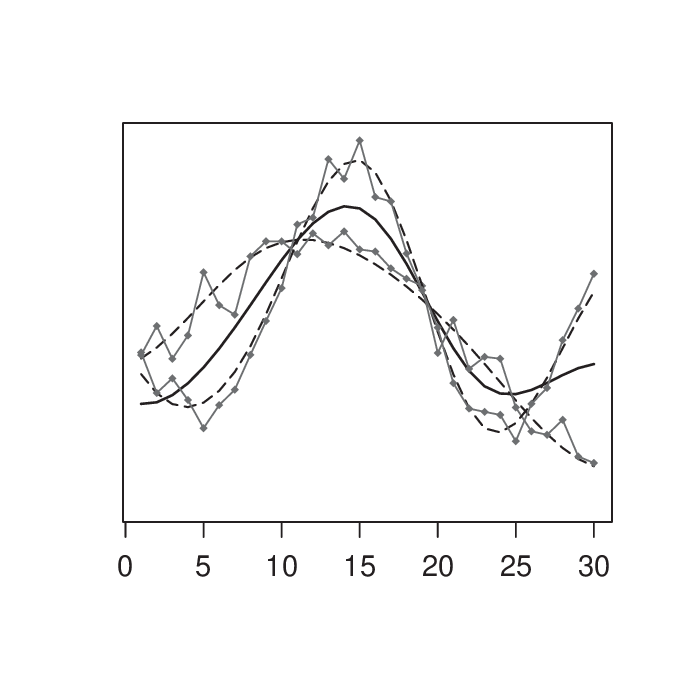}
	\label{fig_sim4_all_d1}\hspace{-.5in}
	\includegraphics[width=5cm, height=4.5cm, trim=0cm 0cm 0 0]{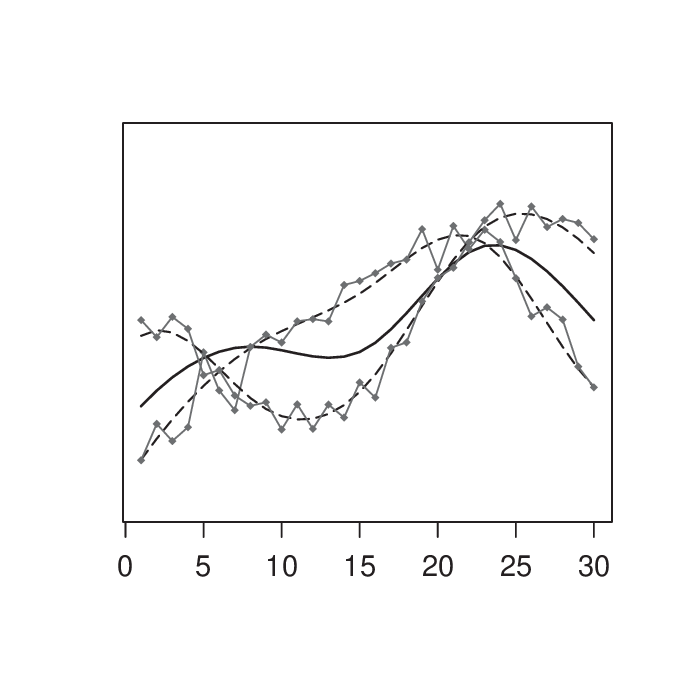}
	\label{fig_sim4_all_d2}\hspace{-.5in}
	\includegraphics[width=5cm, height=4.5cm, trim=0cm 0cm 0 0]{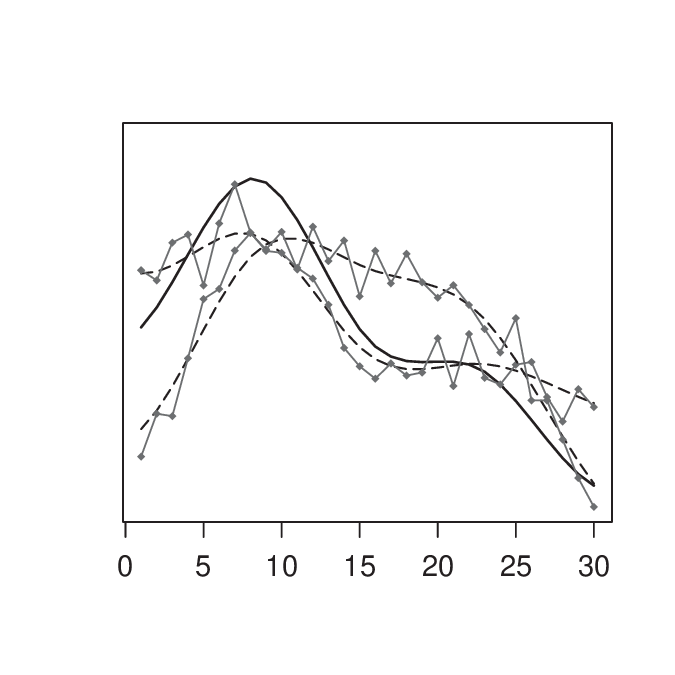}
	\label{fig_sim4_all_d3}\hspace{-.5in}
	\includegraphics[width=5cm, height=4.5cm, trim=0cm 0cm 0 0]{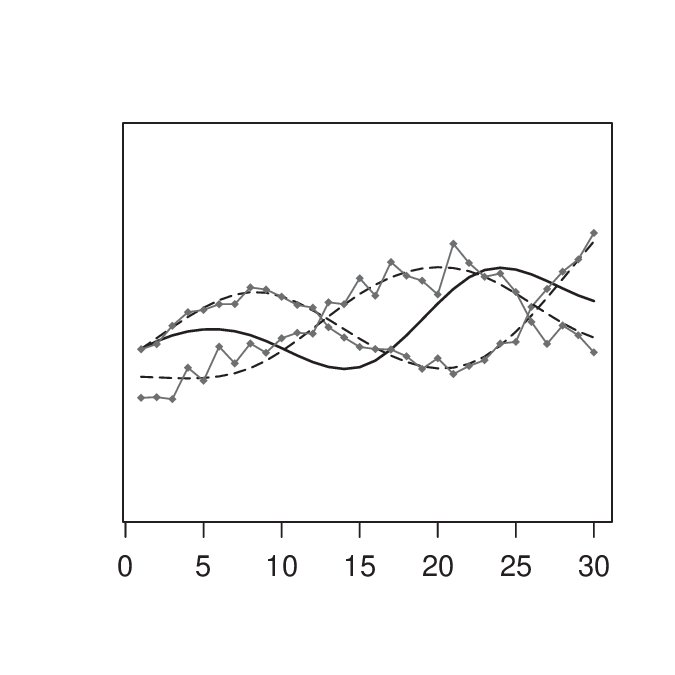}
	\label{fig_sim4_all_d4}
	\vspace{-.4in}	
	\caption{Cluster-specific mean curves (black solid line) and two examples of individual curves (gray dashed line) and respective individual data sequences (gray solid line with points), of one cluster of one simulated data set, for each dimension ${d\!=\!1,\dots,4}$.}
	\label{fig:sim4_all}
	\end{adjustwidth}
\end{figure*}

The proposed model was fitted to the simulated data sets with different levels of repulsion by assuming $\phi_d\!=\!\phi$ for all dimensions $d$, with $\phi\!=\!0,0.1,1,5,10,20,30,40,50$, without covariates, such that $\phi=0$ means no repulsion. Regarding the other parameters associated with repulsion, we set $q=2$ and $\nu=1$. Additionally, the following hyperparameter values are assumed: $a_0=1$, $b_0=1$, $a_\tau=1$, $\xi=1$, $\varpi=0.1$, $s^2_0=100^2$, $s^2_\mu=100^2$, $\omega=1$ and $\Sigma_0=I_p$. For the MFPPMx model, we define the cohesion prior with $M=1$. We consider the number of B-spline knots $p=4,7,10$ and the variance truncation parameter $A_d\!=\!A$ for all $d$, with $A\!=\!5,10$. The MCMC algorithm was executed by collecting 1,000 posterior samples with thinning by 100, after discarding the 700,000 iterates as burn-in, of which the first 400,000 were used to calibrate the Metropolis-Hastings step, as discussed in Section \ref{supp:mcmc} of the Appendix.

\subsection{Simulation results}

As shown in Table \ref{tab:sim4_nclr_rand_NS}, on average, the number of clusters and the number of singletons (clusters containing exactly one individual) reduce as the repulsion increases, for all $p$ and $A$ values. It is also noticeable that higher $A$ and lower $p$ produce less clusters (and less singletons). The importance of $p$, $A$, and $\phi$ in the accuracy of cluster identification can also be evaluated through the adjusted Rand-index \citep{hubert1985}. A very precise identification of clusters is obtained if the models are fitted assuming $p=4$ knots and $\phi\leq 20$ if $A=5$ or $\phi\leq 10$ if $A=10$. This is expected, since the data sets were generated considering $p=4$ knots and no repulsion; values of $\phi$ greater than those seem to excessively influence clustering. If $p=7$, it is interesting to note that better accuracy is obtained for values of $\phi$ between 5 and 20; in this case, we have the repulsive model producing better cluster identification than the models without repulsion. When $p=10$, larger values of $\phi$ improve cluster identification; however, this would be verified up to a certain point after which the force of repulsion becomes excessively strong, similarly to what was observed when $\phi>20$ and $p=4$. In conclusion, choosing the repulsive parameter $\phi$ will depend on the values selected for $p$ and $A$. Considering, for instance, higher prior uncertainty (higher $A$) about the component-specific variance $\lambda^2_{jd}$, we need stronger repulsion to obtain better accuracy.

\begin{table*}[!htb]%[t]
	\begin{adjustwidth}{-1.5cm}{0.0cm}
	\centering
	\footnotesize
	\begin{tabular*}{1.0\textwidth}{
			l@{\hskip .2in}
			r@{\hskip .1in}r@{\hskip .1in}r@{\hskip .2in}
			r@{\hskip .1in}r@{\hskip .1in}r@{\hskip .2in}
			r@{\hskip .1in}r@{\hskip .1in}r@{\hskip .3in}
			r@{\hskip .1in}r@{\hskip .1in}r@{\hskip .2in}
			r@{\hskip .1in}r@{\hskip .1in}r@{\hskip .2in}
			r@{\hskip .1in}r@{\hskip .1in}r
		}
		%\toprule\\[-.12in]
		\cmidrule(l{-2pt}r{2pt}){1-19}\\[-.15in]
		\multicolumn{1}{c}{} &
		\multicolumn{9}{c}{\hspace{-.4in}$A=5$} &
		\multicolumn{9}{c}{\hspace{-.3in}$A=10$}\\[-.01in]
		\cmidrule(l{-3pt}r{20pt}){2-10}\cmidrule(l{-3pt}r{5pt}){11-19}\\[-.15in]
		\multicolumn{1}{c}{} &
		\multicolumn{3}{c}{\hspace{-.25in}$p=4$} &
		\multicolumn{3}{c}{\hspace{-.2in}$p=7$} &
		\multicolumn{3}{c}{\hspace{-.2in}$p=10$} &
		\multicolumn{3}{c}{\hspace{-.2in}$p=4$} &
		\multicolumn{3}{c}{\hspace{-.2in}$p=7$} &
		\multicolumn{3}{c}{\hspace{-.02in}$p=10$}\\%[-.08in]
		\cmidrule(l{-3pt}r{12pt}){2-4}
		\cmidrule(l{-3pt}r{12pt}){5-7}
		\cmidrule(l{-2pt}r{20pt}){8-10}
		\cmidrule(l{-3pt}r{12pt}){11-13}
		\cmidrule(l{-3pt}r{12pt}){14-16}
		\cmidrule(l{-2pt}r{5pt}){17-19}\\[-.15in]
		Model &
		\multicolumn{3}{c}{\hspace{-.6cm}NC\hspace{.3cm}NS\hspace{.3cm}RI} &
		\multicolumn{3}{c}{\hspace{-.6cm}NC\hspace{.3cm}NS\hspace{.3cm}RI} &
		\multicolumn{3}{c}{\hspace{-.8cm}NC\hspace{.3cm}NS\hspace{.3cm}RI} &
		\multicolumn{3}{c}{\hspace{-.5cm}NC\hspace{.3cm}NS\hspace{.3cm}RI} &
		\multicolumn{3}{c}{\hspace{-.6cm}NC\hspace{.3cm}NS\hspace{.3cm}RI} &
		\multicolumn{3}{c}{\hspace{-.25cm}NC\hspace{.3cm}NS\hspace{.3cm}RI}\\[-.02in]
		\cmidrule(l{-2pt}r{4pt}){1-19}\\[-.15in]
		MFPPMx 		 & 4.0 &   - & 0.98 & 5.8 &   - & 0.85 &  7.9 &   - & 0.72 & 4.0 &   - & 1.00 & 4.7 &   - & 0.92 &  7.0 &   - & 0.76 \\%[-.08in]
		%\midrule\\[-.12in]
		\cmidrule(l{-2pt}r{4pt}){1-19}\\[-.12in]
		MFRMMx   	 & \multicolumn{12}{c}{} \\%[-.15in]
		no repulsion & 4.3 & 0.2 & 0.99 & 7.5 & 2.2 & 0.84 & 11.9 & 5.8 & 0.69 & 4.0 &   - & 0.99 & 6.4 & 1.0 & 0.86 & 10.6 & 4.5 & 0.73 \\%[-.14in]
		$\phi=0.1$   & 4.2 & 0.2 & 0.98 & 6.5 & 1.4 & 0.87 & 11.2 & 4.9 & 0.69 & 4.0 &   - & 0.99 & 5.9 & 0.7 & 0.88 &  9.7 & 2.9 & 0.72 \\%[-.14in]
		$\phi=1$     & 4.0 &   - & 0.99 & 5.6 & 0.9 & 0.91 & 10.9 & 3.5 & 0.71 & 4.0 &   - & 1.00 & 4.6 & 0.2 & 0.95 &  8.0 & 1.5 & 0.76 \\%[-.14in]
		$\phi=5$     & 4.0 &   - & 1.00 & 4.4 & 0.2 & 0.95 &  6.1 & 0.6 & 0.83 & 4.0 &   - & 0.99 & 4.0 &   - & 0.96 &  5.8 & 0.5 & 0.85 \\%[-.14in]
		$\phi=10$    & 4.0 &   - & 0.99 & 4.5 & 0.3 & 0.95 &  6.4 & 1.0 & 0.86 & 3.8 &   - & 0.96 & 4.0 & 0.1 & 0.95 &  5.6 & 0.8 & 0.87 \\%[-.14in]
		$\phi=20$    & 4.0 &   - & 0.99 & 4.8 & 0.7 & 0.93 &  6.2 & 1.2 & 0.85 & 3.6 &   - & 0.90 & 4.0 &   - & 0.96 &  5.0 & 0.5 & 0.91 \\%[-.14in]
		$\phi=30$    & 3.8 &   - & 0.93 & 4.3 & 0.3 & 0.93 &  6.6 & 1.8 & 0.85 & 3.3 &   - & 0.80 & 3.8 &   - & 0.93 &  4.7 & 0.4 & 0.93 \\%[-.14in]
		$\phi=40$    & 3.6 &   - & 0.88 & 4.6 & 0.5 & 0.93 &  6.2 & 1.4 & 0.86 & 3.2 &   - & 0.77 & 3.8 &   - & 0.90 &  4.6 & 0.4 & 0.93 \\%[-.14in]
		$\phi=50$    & 3.4 &   - & 0.83 & 4.5 & 0.4 & 0.92 &  5.9 & 1.4 & 0.88 & 3.1 &   - & 0.73 & 3.8 &   - & 0.91 &  4.4 & 0.2 & 0.94 \\%[-.14in]
		%\bottomrule
		\cmidrule(l{-2pt}r{2pt}){1-19}
	\end{tabular*}
	\vspace{-.1in}
	\end{adjustwidth}
	\begin{adjustwidth}{-1.0cm}{-1.0cm}
	\caption{Average of the number of clusters (columns NC), number of singletons (columns NS) and adjusted Rand-index (columns RI) estimated by fitting the MFRMMx and MFPPMx models to the simulated data sets, considering different values for $p$, $A$ and $\phi$. The true number of clusters is four.}
	\end{adjustwidth}
	\label{tab:sim4_nclr_rand_NS}
	\vspace{-.3in}
\end{table*}

The variations in the number of clusters and in the accuracy of the model shown in Table \ref{tab:sim4_nclr_rand_NS} are associated with greater or lower differentiation between the mean cluster curves, obtained under stronger or weaker repulsion, respectively. This differentiation is measured by the pairwise distance between cluster-specific mean curves defined in \eqref{eq_distance_approx}. Figure \ref{fig:sim4_dist_mean} shows the evolution of average distance between cluster mean curves across increasing values of $\phi$, estimated with $p=4$ and $A=5$. This figure considers only the 12 simulated datasets in which four clusters were estimated for all values of $\phi$, so that the distances are not affected by variations in the number of clusters across different $\phi$ values, and a smooth increasing trend in those distances are observed. The clusters were estimated through the least-squares method proposed by \cite{dahl2006}.

\begin{figure*}[!htb]
	\centering
	\stackunder[1pt]{
		\raisebox{2.6cm}{\rotatebox[origin=c]{90}{\scriptsize Mean distance}}
		\includegraphics[width=15cm, height=4.5cm, trim=0 0.0cm 0 0]{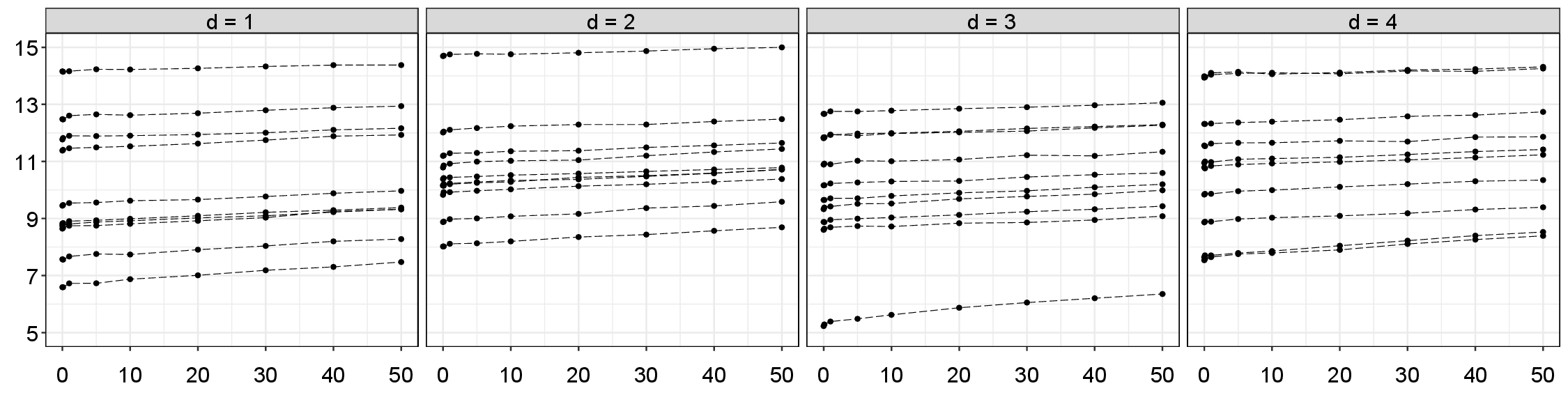}}
	{}%{\scriptsize \hspace{.25in}$\phi$}
	\vspace{-.25in}
	\caption{Mean distances between the cluster-specific mean curves for each each dimension $d$, obtained by fitting the MFRMMx model considering the $\phi=0,0.1,1,5,10,20,30,40,50$, $p\!=\!4$ and $A\!=\!5$, for 12 simulated data sets.}
	\label{fig:sim4_dist_mean}
\end{figure*}

Table \ref{tab:sim4_nclr_rand_NS} also shows that the (non-repulsive) MFPPMx model provided more accurate cluster identification than the MFRMMx model without repulsion $(\phi=0)$, for most of the $p$ and $A$ values.  However, in all scenarios, there is always some level of repulsion for which the MFRMMx model has equal or greater accuracy to that of the MFPPMx. Figure \ref{fig:sim4_ccmat_phi-dep-p10-A10} shows some examples where the MFRMMx model, assuming $p=10$, $A=10$ and different levels of repulsion, outperforms the MFPPMx model.

\begin{figure*}[!htb]
%\begin{adjustwidth}{-1cm}{-1cm}
\centering
	{\scriptsize
	 \hspace{0.7cm} MFPPMx
	 \hspace{1.2cm} MFRMMx ($\phi\!=\!0$)
	 \hspace{0.5cm} MFRMMx ($\phi\!=\!10$)
	 \hspace{0.5cm} MFRMMx ($\phi\!=\!30$)
	 \hspace{0.4cm} MFRMMx ($\phi\!=\!50$)
	 }\\[-.05in]
	 \stackunder[2pt]{
		%\raisebox{1.2cm}{\rotatebox[origin=c]{90}{\normalsize $A=10$}}
		\includegraphics[width=3cm, height=2.8cm, trim=0 0cm 0 0]{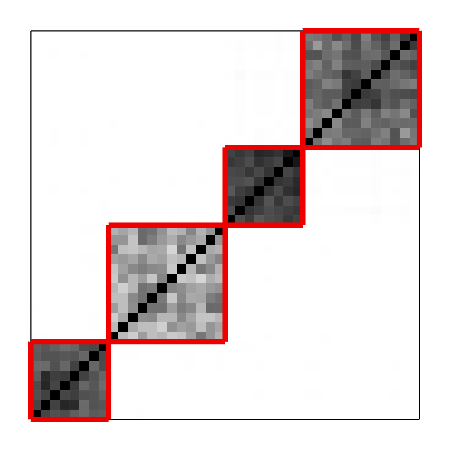}}
		{\scriptsize (0.45 , 0.76 , 0.93)}%\hspace{-.05in}
	\stackunder[2pt]{
		\includegraphics[width=3cm, height=2.8cm, trim=0 0cm 0 0]{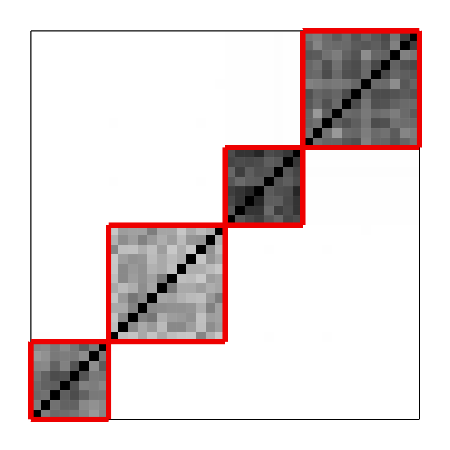}}
		{\scriptsize (0.44 , 0.73 , 0.92)}%\hspace{-.05in}
	\stackunder[2pt]{
		\includegraphics[width=3cm, height=2.8cm, trim=0 0cm 0 0]{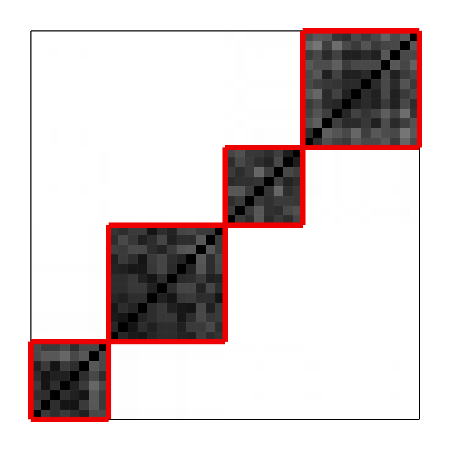}}
		{\scriptsize (0.58 , 0.87 , 1.00)}%\hspace{-.05in}
	\stackunder[2pt]{
		\includegraphics[width=3cm, height=2.8cm, trim=0 0cm 0 0]{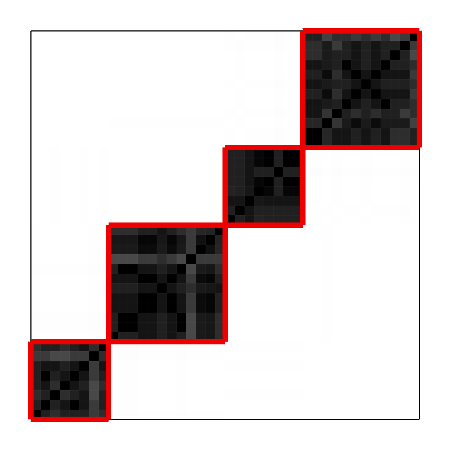}}
		{\scriptsize (0.66 , 0.94 , 1.00)}%\hspace{-.05in}
	\stackunder[2pt]{
		\includegraphics[width=3cm, height=2.8cm, trim=0 0cm 0 0]{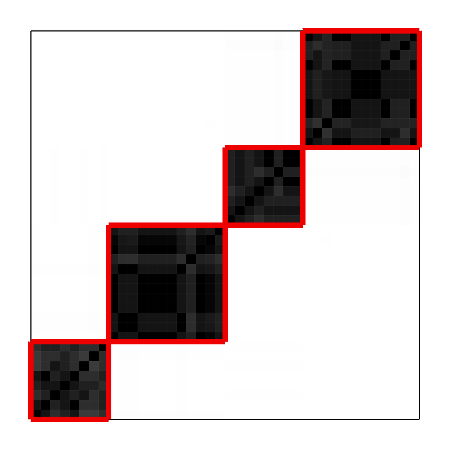}}
		{\scriptsize (0.85 , 0.98 , 1.00)}
	\\%[.01in]
%	%
%	{\normalsize (a) $p=10 , A=10$}\\[-.03in]
%	%
%\end{adjustwidth}
\vspace{-.05in}
\caption{Average co-clustering matrices estimated over the 25 simulated data sets by the MFPPMx and MFRMMx models, assuming $p=10$, $A=10$ and a few different values of $\phi$. The true clusters are identified by the red boxes. The three values bellow each co-clustering matrix are the minimum, mean and maximum values of the Rand-index computed over the simulated data sets.}
\label{fig:sim4_ccmat_phi-dep-p10-A10}
%\vspace{-.1in}
\end{figure*}

\FloatBarrier
\subsection{Effects of covariates on clustering}
To verify the influence of covariates in clustering, we generated an individual-specific continuous covariate from $\N(z_i,0.1)$, independently for each individual $i=1,\dots,m$, where $z_i$ is the cluster index assumed for individual $i$ in the simulated data sets, and considered a categorical covariate taking on identical values for individuals in the same cluster. We fitted the MFRMMx model considering these covariates to influence the clustering through the covariate-dependent mixing weights, and assuming $p=4$, $A=10$ and $\phi\ge20$, which are the settings with worst performance among those considered in the analysis without covariates, as can be seen in Table \ref{tab:sim4_nclr_rand_NS}. The co-clustering matrices and the adjusted Rand-index presented in Figure \ref{fig:sim4_ccmat_p4_phi-dep-covariates} show that considering the simulated individual-specific covariates favored, as expected, the correct allocation of the inviduals to their true clusters, working against the repulsive effect that seemed excessive in the case of those larger values of $\phi$ when $p=4$.

\begin{figure*}[h]
\begin{adjustwidth}{-1cm}{-1cm}
\centering
	{\scriptsize\hspace{0.2cm}$\phi=20$\hspace{1.8cm}$\phi=25$\hspace{2.0cm}$\phi=50$\hspace{2.0cm}$\phi=75$\hspace{2.0cm}$\phi=100$}
	\\[-.02in]
	%
	%%% A=10, p=4 (no covariates)
	%
	\stackunder[5pt]{
		\raisebox{1.1cm}{\rotatebox[origin=c]{90}{\scriptsize no covariates}}
		\includegraphics[width=2.8cm, height=2.5cm, trim=0.2cm 0.5cm 0 0]{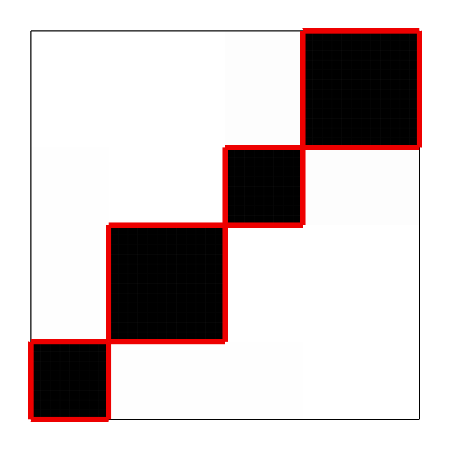}}
		{\scriptsize\hspace{0.2cm} (0.71 , 0.96 , 1.00)}%\hspace{-.1in}
	\stackunder[5pt]{
		\includegraphics[width=2.8cm, height=2.5cm, trim=0.2cm 0.5cm 0 0]{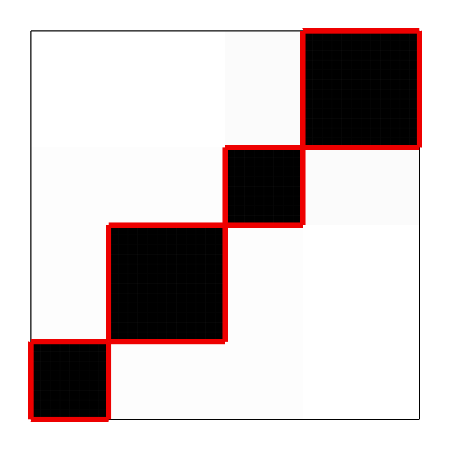}}
		{\scriptsize\hspace{0.0cm} (0.71 , 0.90 , 1.00)}%\hspace{-.1in}
	\stackunder[5pt]{
		\includegraphics[width=2.8cm, height=2.5cm, trim=0.2cm 0.5cm 0 0]{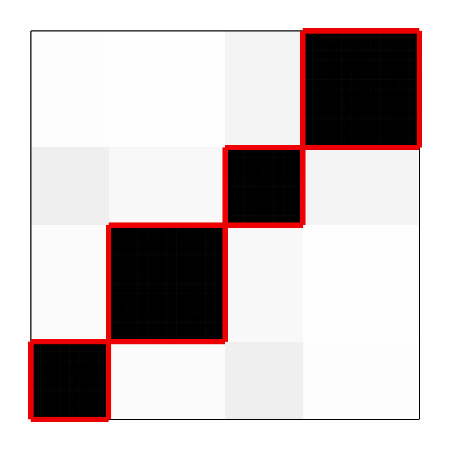}}
		{\scriptsize\hspace{0.0cm} (0.27 , 0.80 , 1.00)}%\hspace{-.1in}
	\stackunder[5pt]{
		\includegraphics[width=2.8cm, height=2.5cm, trim=0.2cm 0.5cm 0 0]{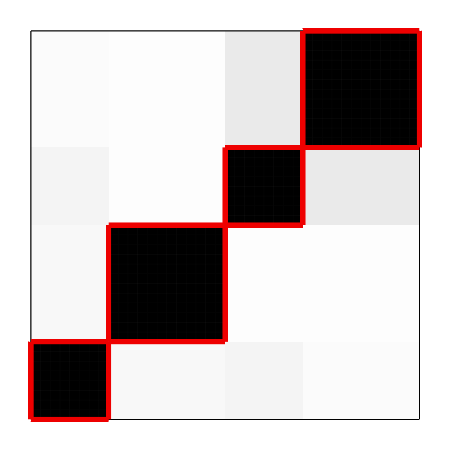}}
		{\scriptsize\hspace{0.0cm} (0.27 , 0.77 , 1.00)}%\hspace{-.1in}
	\stackunder[5pt]{
		\includegraphics[width=2.8cm, height=2.5cm, trim=0.2cm 0.5cm 0 0]{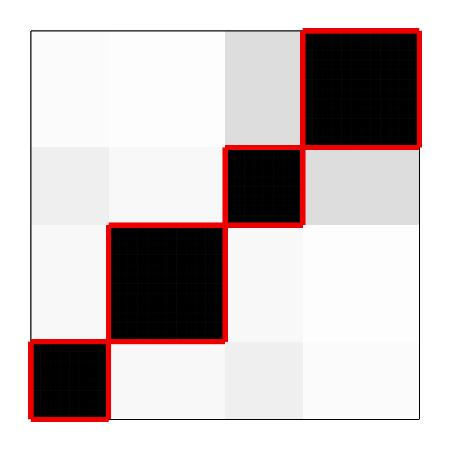}}
		{\scriptsize\hspace{0.0cm} (0.27 , 0.73 , 1.00)}
	\\[.01in]
	%
	%%% A=10, p=4 (covariates)
	%
	\stackunder[5pt]{
	\raisebox{1.2cm}{\rotatebox[origin=c]{90}{\scriptsize covariates}}
		\includegraphics[width=2.8cm, height=2.5cm, trim=0.2cm 0.5cm 0 0]{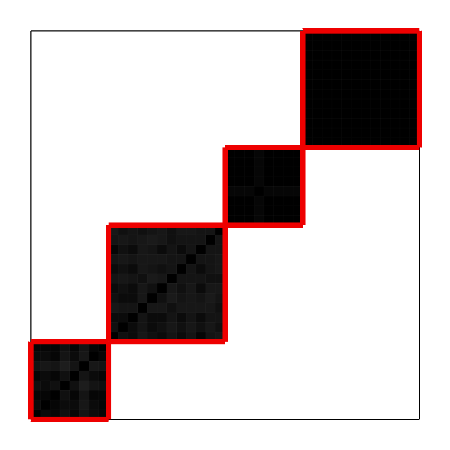}}
		{\scriptsize\hspace{0.2cm} (0.91 , 0.99 , 1.00)}%\hspace{-.1in}
	\stackunder[5pt]{
		\includegraphics[width=2.8cm, height=2.5cm, trim=0.2cm 0.5cm 0 0]{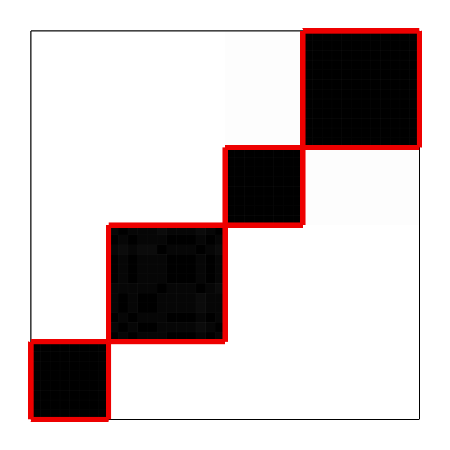}}
		{\scriptsize\hspace{0.0cm} (0.71 , 0.98 , 1.00)}%\hspace{-.1in}
	\stackunder[5pt]{
		\includegraphics[width=2.8cm, height=2.5cm, trim=0.2cm 0.5cm 0 0]{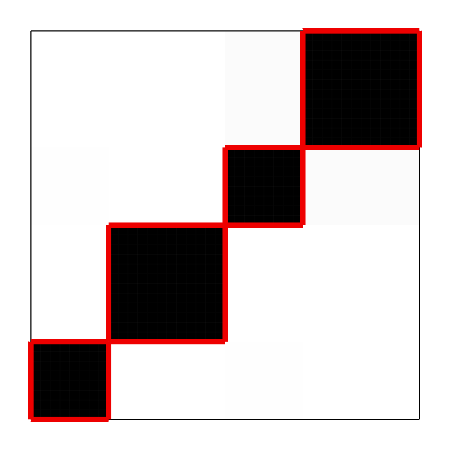}}
		{\scriptsize\hspace{0.0cm} (0.71 , 0.96 , 1.00)}%\hspace{-.1in}
	\stackunder[5pt]{
		\includegraphics[width=2.8cm, height=2.5cm, trim=0.2cm 0.5cm 0 0]{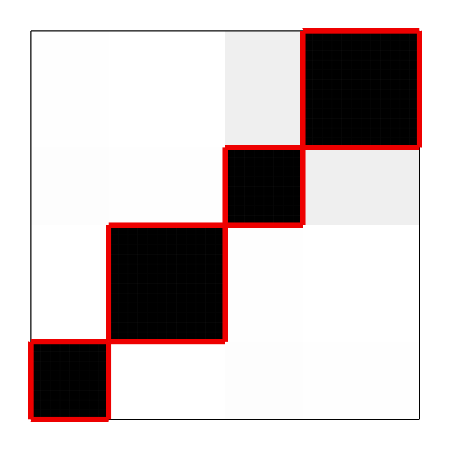}}
		{\scriptsize\hspace{0.0cm} (0.39 , 0.90 , 1.00)}%\hspace{-.1in}
	\stackunder[5pt]{
		\includegraphics[width=2.8cm, height=2.5cm, trim=0.2cm 0.5cm 0 0]{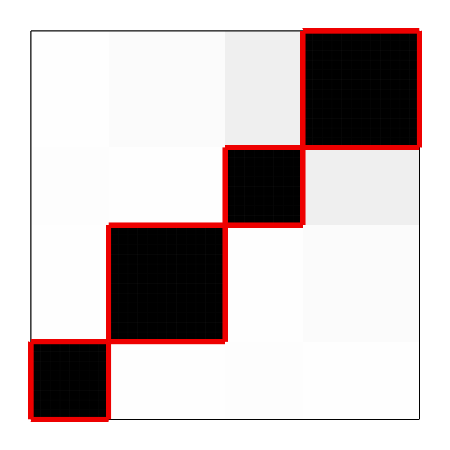}}
		{\scriptsize\hspace{0.0cm} (0.27 , 0.84 , 1.00)}
	\\[-.051in]
\end{adjustwidth}
\vspace{-.1in}
\caption{Average co-clustering matrices computed by fitting the MFRMMx over the 25 simulated data sets, by assuming $p=4$, $A=10$ and $\phi=10,20,30,40,50$, without covariates (first line) and with covariates (second line). The true clusters are identified by the red boxes. The three values bellow each co-clustering matrix are the minimum, mean and maximum values of the Rand-index computed over the simulated data sets.}
\label{fig:sim4_ccmat_p4_phi-dep-covariates}
%\vspace{-.1in}
\end{figure*}

\FloatBarrier
\subsection{Effects of the split-merge step on clustering}

The split-merge step in the MCMC algorithm provided better chain mixing and was essential for efficient posterior sampling in the proposed model. To illustrate this fact, Figure \ref{fig:sim4_SM} exhibits the mean values of the adjusted Rand-index obtained for the simulated data sets after running the MCMC algorithm with and without the split-merge step. The model was fitted assuming $p=4$, $A=5,10$, and different repulsion levels $\phi$.

It is noticeable that cluster identification is severely compromised if we sample the posterior without the split-merge step. When the split-merge step is considered, we have very accurate cluster identification for the cases with no or low repulsion. The loss of accuracy observed for values of $\phi\ge30$ when $A=5$ and $\phi\ge10$ when $A=10$ seems to be an exclusive consequence of the increase in repulsion. On the other hand, if the split-merge step is not considered, the adjusted Rand-Index is zero if $\phi\ge20$ and $A=10$ for all the data sets. In our case, it means that all the individuals are allocated to the same cluster. Since we initialize the chain assuming that all individuals are allocated to the same mixture component, this indicates that the cluster allocation did not change during the iterations of the MCMC algorithm and the parametric space is not being adequately explored. This analysis exemplifies that the split-merge step is essential for the proposed model to properly identify clusters, even though it increases the computational cost of the MCMC algorithm.

\begin{figure}[h]%[t]
	\centering
	\stackunder[1pt]{
		\raisebox{2.0cm}{\rotatebox[origin=c]{90}{\footnotesize Adjusted Rand-index}}
		\includegraphics[width=6cm, height=4cm, trim=0.2cm 0 0 0]{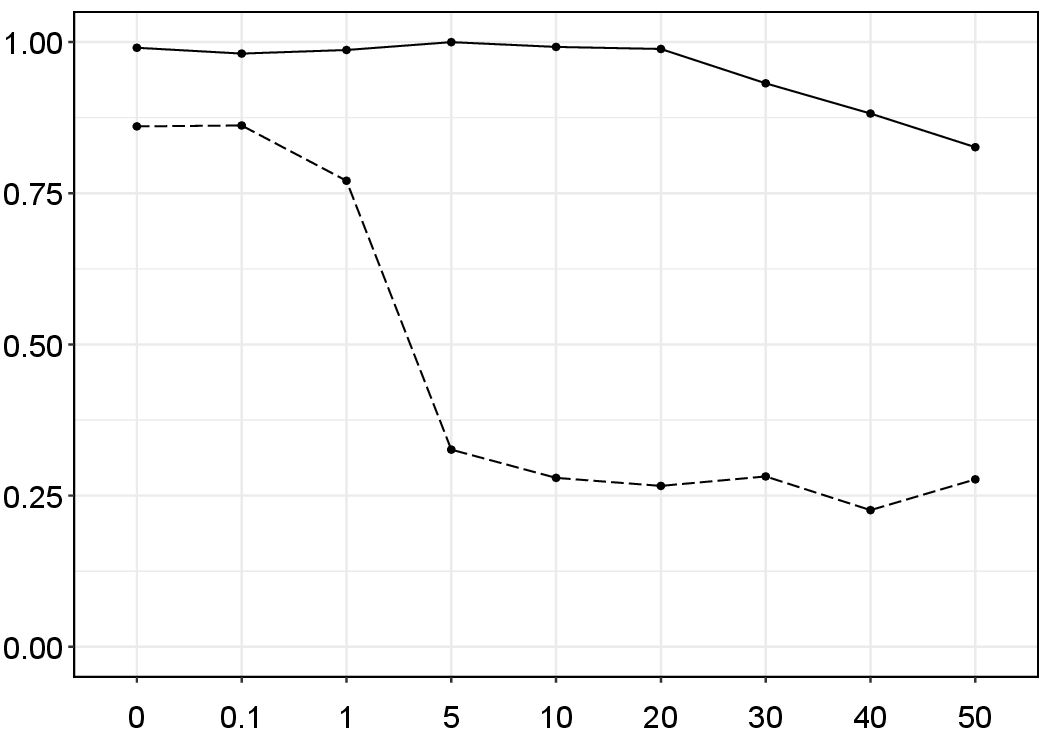}}
	{\scriptsize \hspace{.25in}$\phi$}
	\hspace{.5cm}
	\stackunder[1pt]{
		\raisebox{2.0cm}{\rotatebox[origin=c]{90}{\footnotesize Adjusted Rand-index}}
		\includegraphics[width=6cm, height=4cm, trim=0.2cm 0 0 0]{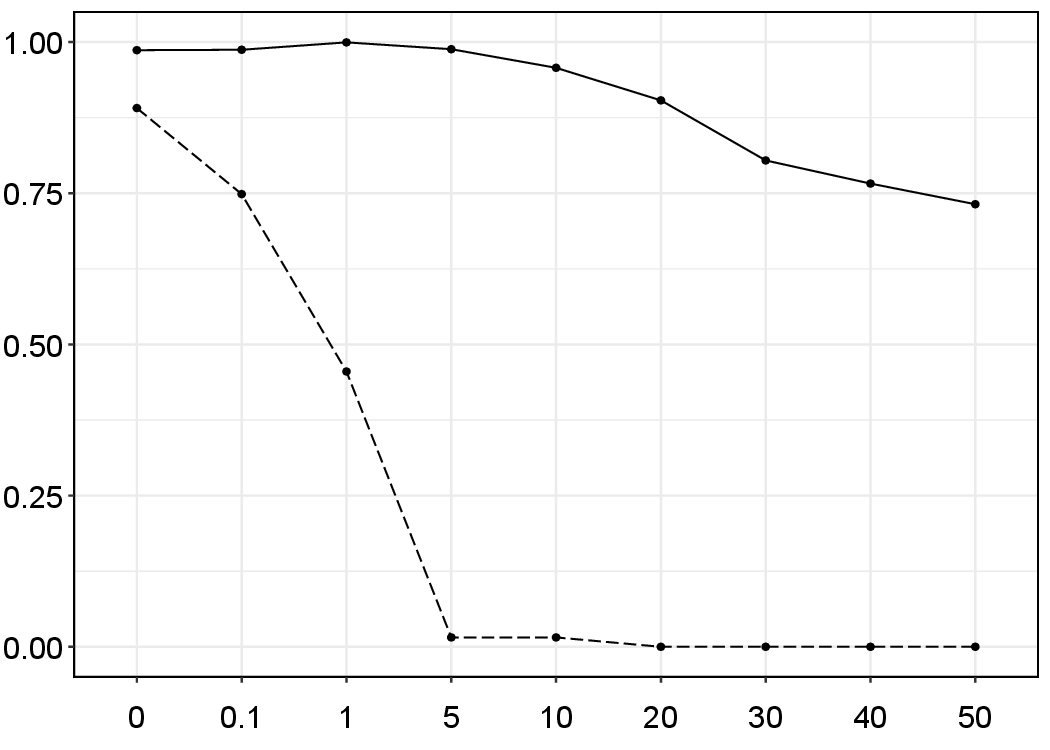}}
	{\scriptsize \hspace{.25in}$\phi$}
	\vspace{-.25cm}
	\caption{Average of adjusted Rand-index obtained by executing the MCMC algorithm with (solid line) and without (dashed line) the split-merge step, for $p=4$, $A=5$ (left) and $A=10$ (right), and different levels of repulsion.}
	\label{fig:sim4_SM}
\end{figure}

\FloatBarrier
%\newpage
\vspace{-.2in}
\section{Real data application: CAI dataset}\label{sec:real}

The CAI dataset is composed of multivariate sequences of angle measures observed for three body joints (ankle, knee, and hip) from two planes (frontal and sagittal) during a jumping-landing-cutting movement task performed by individuals with some type of Chronic Ankle Instability (CAI). This dataset was first analyzed by \cite{hopkins2019}. Their main goal was to identify subsets of individuals with similar types of physical dysfunctions, based on the similarity of movement patterns characterized by the formats of the observed angle sequences. In their study, they reduced the multivariate data sequence of each individual from total dimension $D\!=\!6$ to $D\!=\!2$ by transforming the three sequences of joints of each plane to only one representative curve associated to the respective plane, through functional principal component analysis \citep{ramsay2005}. We will analyze the same format of the CAI dataset considered in the analysis of \cite{hopkins2019}, with $D=2$ data sequences by individual. Our main goal is to evaluate the ability of the proposed model to overcome the cluster redundancy problem reported their study. In fact, the accuracy of diagnoses and predictions may be compromised when different clusters with similar curve shapes are identified. The data sequences in the CAI dataset are very smooth, which means that there is almost no noise between those data sequences and the individual-specific B-spline curves $H_i\Bi$ defined in \eqref{eq_yi}. Thus the individual-specific curves will almost interpolate the data sequences. Figure \ref{fig:CAI2_data} shows the data sequences for a sample of $50$ individuals of the CAI dataset.

\vspace{-.2in}
\begin{figure}[h]%[!htb]
	\centering
	\stackunder[-10pt]{
		\includegraphics[width=5cm, height=4.8cm, trim=0 0cm 0 0]{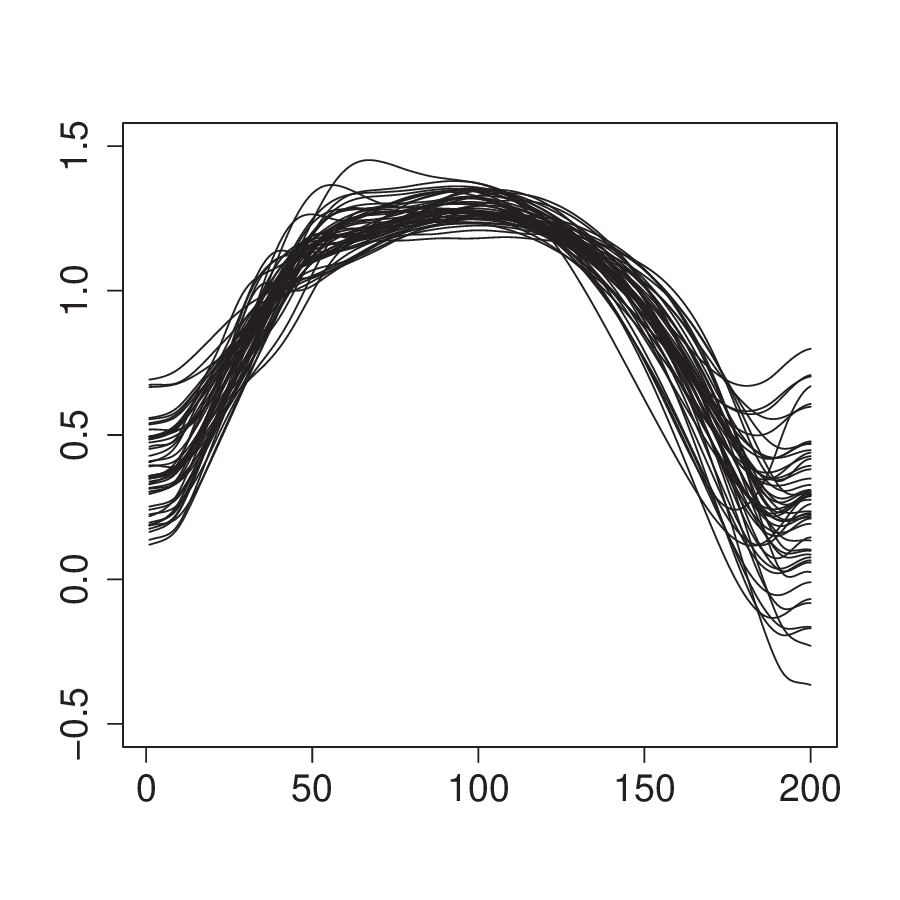}}
	{\footnotesize\hspace{0.3cm} Sagittal plane}\hspace{-0.5cm}
	\stackunder[-10pt]{
		\includegraphics[width=5cm, height=4.8cm, trim=0 0cm 0 0]{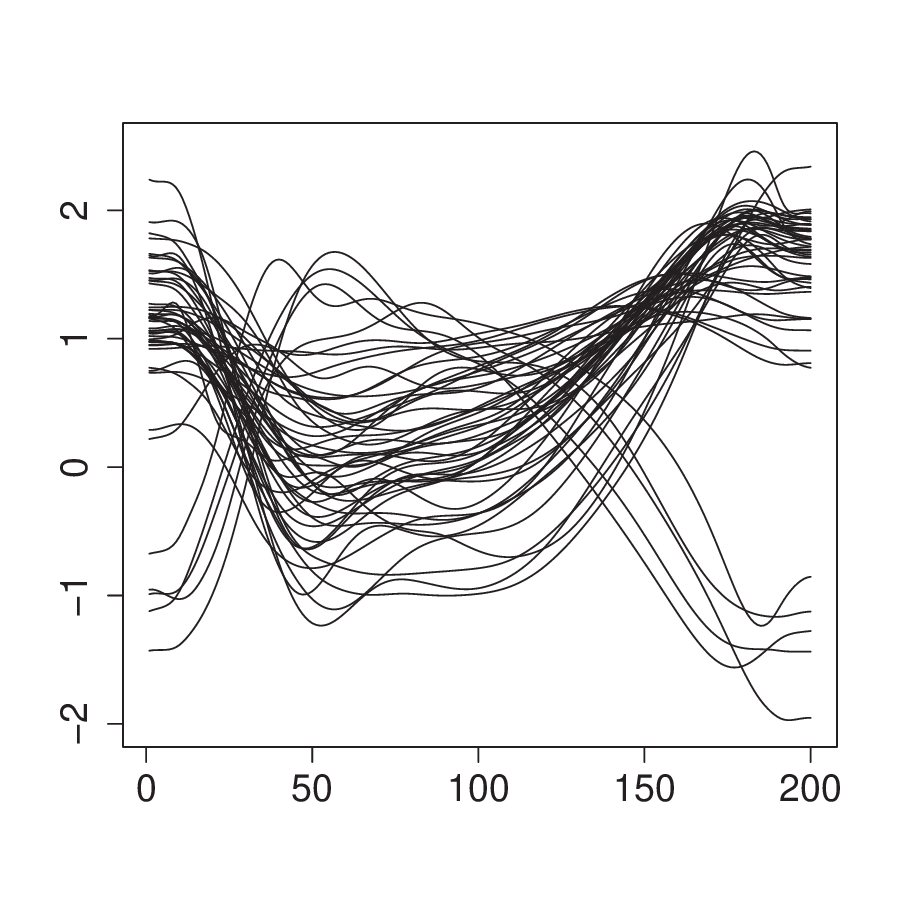}}
	{\footnotesize\hspace{0.2cm} Frontal plane}
	\vspace{-.2in}
	\caption{Observed angle measures of the sagittal and frontal planes of $50$ individuals of the CAI dataset.}
	\label{fig:CAI2_data}
\end{figure}

The MFRMMx and MFPPMx models are fitted to the CAI dataset assuming no covariates to influence the clustering. We consider the number of knots $p=10,15$, and the values of the truncation parameter $A=1,10,100$. We fit the MFRMMx assuming $J=40$ mixture components and $\phi_d=\phi$ for $d=1,2$, with no repulsion ($\phi=0$) and different levels of repulsion determined by $\phi=0.01,0.05,0.1,0.5,1,2,5,10,20,50$, where dimensions $d=1$ and $d=2$ refer to the sagittal and frontal planes, respectively. The other hyperparameter values are the same considered in Section \ref{sec:simulation}. All models were fit by collecting 1,000 MCMC iterates after discarding the first 700,000 as burn-in (the first 400,000 used to calibrate the Metropolis-Hastings step) and thinning by 100. Similar to what was observed in the simulation study, we can see from Table \ref{tab:CAI_nclr} that higher repulsion (higher $\phi$ values) implies a reduction in the posterior mean number of clusters for all choices of $p$ and $A$. Furthermore, a larger number of B-spline nodes ($p=15$) or a smaller value of $A$ tends to identify a larger number of clusters.

\begin{table}[!htb]%[t]%[h]
	\centering
	\footnotesize
	\begin{tabular}{l@{\hskip .1in}r@{\hskip .03in}r@{\hskip .1in}r@{\hskip .22in}r@{\hskip .1in}r@{\hskip .22in}r@{\hskip .1in}r}
		\toprule\\[-.15in]
		\multicolumn{1}{c}{} &
		\multicolumn{1}{c}{} &
		\multicolumn{2}{c}{\hspace{-.15in}$A\!=\!1$} &
		\multicolumn{2}{c}{\hspace{-.20in}$A\!=\!10$}&
		\multicolumn{2}{c}{\hspace{-.05in}$A\!=\!100$}\\[-.04in]
		\cmidrule(l{2pt}r{14pt}){3-4}\cmidrule(l{0pt}r{14pt}){5-6}\cmidrule(l{0pt}r{5pt}){7-8}\\[-.15in]
		Model &$p\!=$&
		\multicolumn{2}{c}{\hspace{-.5cm}10\hspace{.5cm}15} &
		\multicolumn{2}{c}{\hspace{-.5cm}10\hspace{.5cm}15} &
		\multicolumn{2}{c}{\hspace{-.2cm}10\hspace{.5cm}15}\\[-.04in]
		\midrule \\[-.15in]
		MFPPMx 		 && 20.0 & 26.0 & 15.6 & 24.5 & 15.4 & 22.1 \\%[-.08in]
		\midrule \\[-.15in]
		MFRMMx   	 && \multicolumn{6}{c}{} \\[-.04in]
		no repulsion && 23.7 & 34.5 & 18.6 & 24.9 & 16.0 & 18.0 \\%[-.14in]
		$\phi=0.01$  && 20.4 & 28.2 & 14.2 & 19.0 & 11.0 & 15.0 \\[-.02in]
		$\phi=0.05$  && 13.0 & 18.0 & 11.0 & 16.0 &  8.0 & 14.5 \\[-.02in]
		$\phi=0.1$   && 11.0 & 15.0 &  7.5 & 14.0 &  5.0 & 11.0 \\[-.02in]
		$\phi=0.5$   &&  4.2 &  6.0 &  4.0 &  6.0 &  4.0 &  6.0 \\[-.02in]
		$\phi=1$     &&  4.0 &  5.0 &  4.0 &  5.0 &  3.0 &  5.0 \\[-.02in]
		$\phi=2$     &&  3.0 &  4.0 &  3.0 &  4.0 &  3.0 &  4.0 \\[-.02in]
		$\phi=5$     &&  3.0 &  3.0 &  3.0 &  3.0 &  2.0 &  3.0 \\[-.02in]
		$\phi=10$    &&  3.0 &  3.0 &  2.0 &  3.0 &  2.0 &  3.0 \\[-.02in]
		$\phi=20$    &&  2.0 &  2.0 &  2.0 &  2.0 &  2.0 &  2.0 \\[-.02in]
		$\phi=50$    &&  1.0 &  2.0 &  1.0 &  2.0 &  1.0 &  2.0 \\[-.02in]
		\bottomrule
	\end{tabular}
	%\vspace{-.05in}
	\caption{Posterior mean of the number of clusters obtained by fitting the MFRMMx and MFPPMx models to the CAI dataset with $D=2$ curves by individual, without the influence of covariates, for different values of $p$, $A$ and $\phi$.}
	\label{tab:CAI_nclr}
\end{table}

Figure \ref{fig:CAI2_mean_curves_A100_p15_phi5} shows the cluster-specific mean curves estimated by the MFRMMx model for dimensions $d=1,2$ in the interval where the values of $\phi$ increase from 0.1 to 5, when $A=100$ and $p=15$. In that figure, we can observe that the clusters with the most similar mean curves gradually merge as repulsion increases.

\begin{figure*}[h]%[!htb]%[t]
\begin{adjustwidth}{-1cm}{-1cm}
\centering
	{\scriptsize$\phi=0.1$\hspace{2.2cm}$\phi=0.5$\hspace{2.5cm}$\phi=1$\hspace{2.5cm}$\phi=2$\hspace{2.5cm}$\phi=5$}
	\\[.05in]
	%
	%%% A=10, p=15, d=1
	%
	\stackunder[5pt]{
		\raisebox{1.4cm}{\rotatebox[origin=c]{90}{\scriptsize Sagittal plane}}
		\includegraphics[width=3.0cm, height=2.8cm, trim=0cm 0cm 0 0]{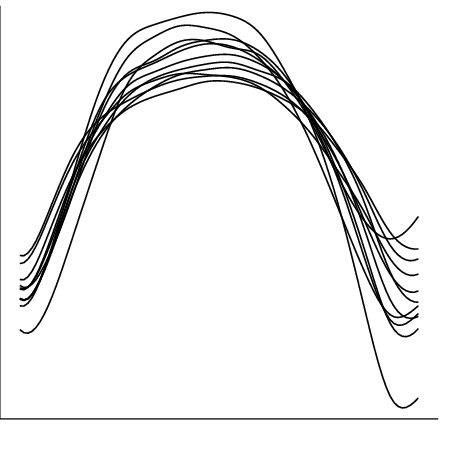}}{}\hspace{.05in}
	\stackunder[5pt]{
		\includegraphics[width=3.0cm, height=2.8cm, trim=0cm 0cm 0 0]{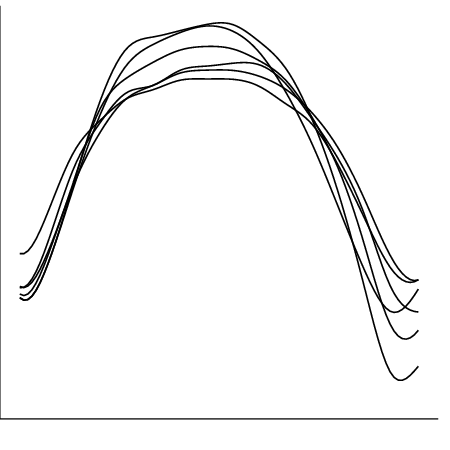}}{}\hspace{.05in}
	\stackunder[5pt]{
		\includegraphics[width=3.0cm, height=2.8cm, trim=0cm 0cm 0 0]{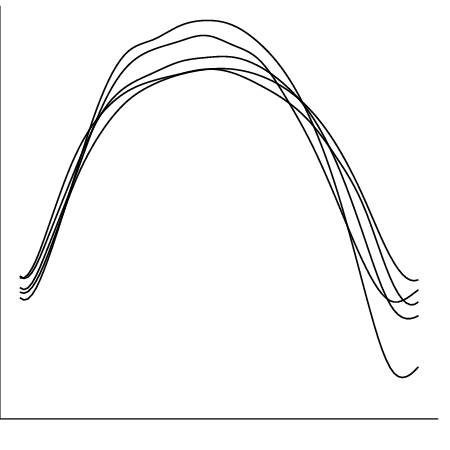}}{}\hspace{.05in}
	\stackunder[5pt]{
		\includegraphics[width=3.0cm, height=2.8cm, trim=0cm 0cm 0 0]{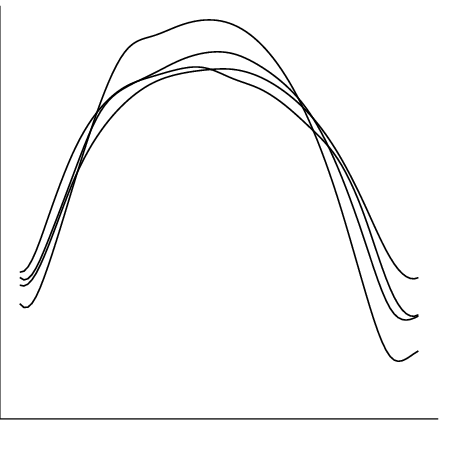}}{}\hspace{.05in}
	\stackunder[5pt]{
		\includegraphics[width=3.0cm, height=2.8cm, trim=0cm 0cm 0 0]{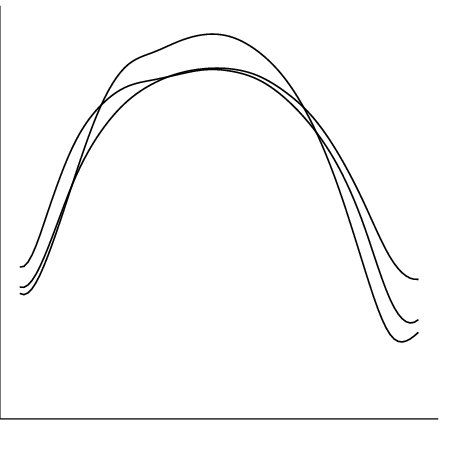}}{}
	\\[-.05in]
	%
	%%% A=10, p=15, d=2
	%
	\stackunder[5pt]{
		\raisebox{1.4cm}{\rotatebox[origin=c]{90}{\scriptsize Frontal plane}}
		\includegraphics[width=3.0cm, height=2.8cm, trim=0cm 0cm 0 0]{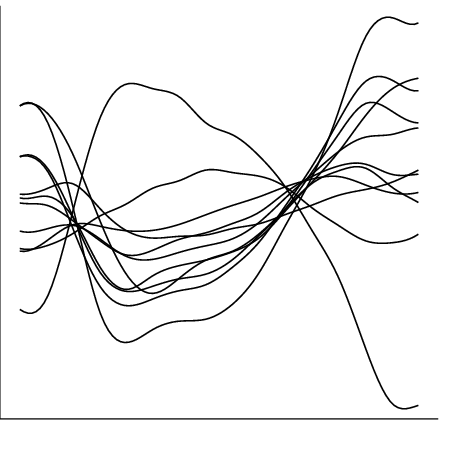}}{}\hspace{.05in}
	\stackunder[5pt]{
		\includegraphics[width=3.0cm, height=2.8cm, trim=0cm 0cm 0 0]{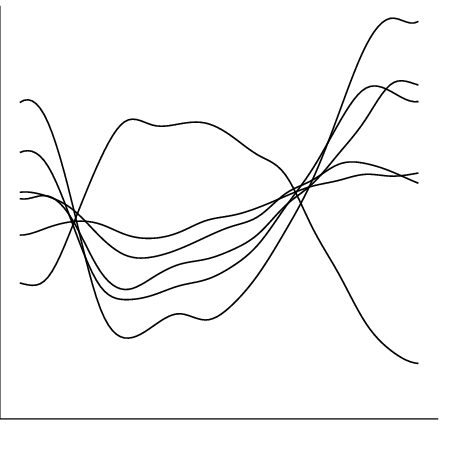}}{}\hspace{.05in}
	\stackunder[3pt]{
		\includegraphics[width=3.0cm, height=2.8cm, trim=0cm 0cm 0 0]{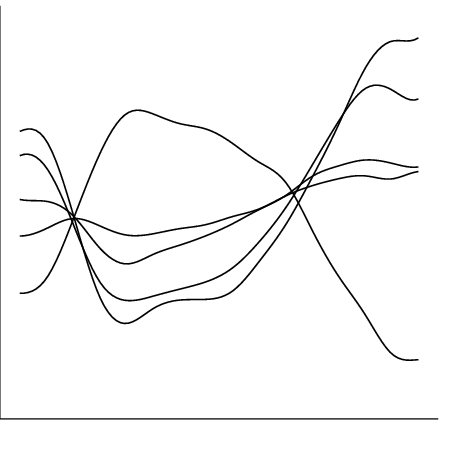}}{}\hspace{.05in}
	\stackunder[5pt]{
		\includegraphics[width=3.0cm, height=2.8cm, trim=0cm 0cm 0 0]{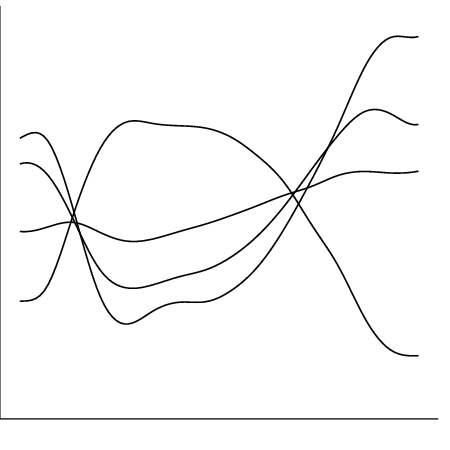}}{}\hspace{.05in}
	\stackunder[5pt]{
		\includegraphics[width=3.0cm, height=2.8cm, trim=0cm 0cm 0 0]{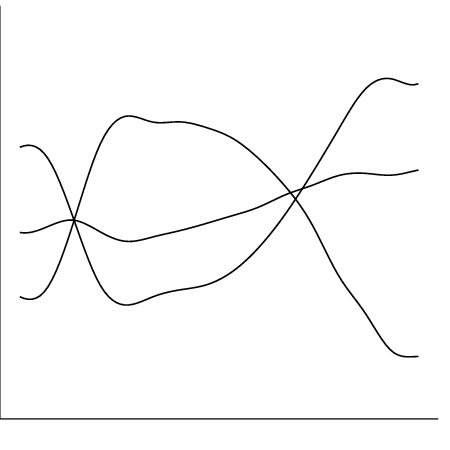}}{}
	\\
\end{adjustwidth}
\vspace{-.1in}
\caption{Cluster-specific mean curves obtained by fitting the MFRMMx model for the CAI dataset, with $A\!=\!100$, $p\!=\!15$ and $\phi\!=0.1,0.5,1,2,5$, for the sagittal and frontal planes.}
\label{fig:CAI2_mean_curves_A100_p15_phi5}
%\vspace{-.1in}
\end{figure*}

Figures \ref{fig:CAI2_A100_p15}(a) and \ref{fig:CAI2_A100_p15}(b) show the individual data sequences in the clusters identified by assuming $A\!=\!100$, $p\!=\!15$, and $\phi\!=\!1$ and $\phi\!=\!5$, respectively. The clusters are identified using the least-square method proposed by \cite{dahl2006}. Based on those figures, it is very clear that individuals with similar curves tend to be allocated to the same cluster, and all identified clusters have a non-negligible number of individuals. The cluster whose curves differ most from the others on the frontal plane, $j\!=\!5$ when $\phi\!=\!1$ and $j\!=\!3$ when $\phi\!=\!5$, remains unchanged as repulsion increases from $\phi\!=\!1$ to $\phi\!=\!5$. As for the others, four clusters merge into two, based on the similarity of their curve shapes. The repulsive prior is thus a technical tool to favor similar clusters to merge when selecting a value for $\phi$ that adequately translates what experts understand as curves being sufficiently different, which may suggest treatments that are specific to each group of the CAI dataset.

\begin{figure}
\begin{adjustwidth}{-0.8cm}{-0.5cm}
	\begin{minipage}{.5\textwidth}
		\centering
		\input{fig_CAI2_p15_A100_phi1-minipage.tex}{\hspace{.0in} (a) $\phi=1$}
	\end{minipage}
	\hspace{.1in}
	\begin{minipage}{.5\textwidth}
		\centering
		\input{fig_CAI2_p15_A100_phi5-minipage.tex}{\hspace{.0in} (b) $\phi=5$}
	\end{minipage}
	\vspace{-.1in}
	\caption{Clusters identified by the MFRMMx for the CAI dataset with $D=2$ curves by individual, with truncation parameter $A\!=\!100$, $p\!=\!15$ knots and repulsive parameter values $\phi=1$ (a) and $\phi=5$ (b). The gray lines are the individual data sequences, the black lines are the cluster-specific mean curves and $n_j$ is the number of individuals in each respective cluster $j$.}
	\label{fig:CAI2_A100_p15}
\end{adjustwidth}
\end{figure}

An application of the proposed model to analyze a univariate and ``noisier'' data set (i.e. with more variability of the data sequences around the individual-specific curves), previously analyzed by \cite{page2015}, can be found in Section \ref{supp:NBA} of the Appendix.

\FloatBarrier
\section{Final comments}\label{sec:final}
To alleviate the problem of overlapping and/or redundant clusters in the analysis of multivariate functional data, we proposed a repulsive FMM that has the ability to produce well separated groups in terms of specific curve shapes. The proposed model also allows the use of individual-specific covariates to influence the clustering process, through a  stick-breaking formulation of covariate-dependent mixing weights. The repulsive prior is defined in terms of component-specific location vectors of B-spline coefficients that define component-specific mean curves. The repulsive factor in the proposed prior depends on a B-spline curve-tailored distance. The analysis of the proposed repulsive prior behavior, presented in Section \ref{sec:repulsion}, can guide the specification of those parameters that calibrate the repulsion ($q$, $\nu$ and $\phi_d$) based on a predefined distance value that would characterize a redundancy problem. However, finding an adequate value of $\phi_d$ for the application at hand may require trying a few different values.
%it may be necessary to fit the proposed model using different values of $\phi_d$ in order to select a repulsive level suitable for each application.}
Another important contribution was the introduction of a suitable MCMC algorithm with a novel split-merge step adapted to a functional FMM model that assumes a repulsive dependence between mixture components. This strategy improved the mixing of the chains, allowing to more efficiently sample from the posterior distribution.

Adding repulsion helped addressing a critical point raised by \cite{hopkins2019}: the clusters now identified for the CAI data showed high differentiation between their mean curve shapes, which facilitates analysis and diagnosis. Corroborating some findings in the literature for non-functional data, the clustering was also improved by considering covariate-dependent mixing weights, as demonstrated in the simulation studies. We also observed an interesting interaction between the number of B-spline knots $p$ and the repulsion parameter $\phi_d$. Given the observed increase in the number of clusters when we increase the number of B-spline knots $p$, our findings suggest that increasing $p$ would require choosing larger values for $\phi_d$ to achieve the desired repulsive effect.

Future extensions of this work include the construction of different classes of repulsive priors for functional data, and the estimation of parameters involved in the repulsive prior. A possible starting point for these extensions would be to integrate the repulsive strategy proposed by \cite{cremaschi2024} into our proposed model.

%\vspace{.5in}
\newpage
\noindent\textbf{\Large Acknowledgments}
\vskip6pt
\noindent The authors would like to thank the following professors at Brigham Young University, USA: Professor Ty Hopkins from the Department of Exercise Sciences and his team for granting access to the original CAI data set, and Professor Garritt L. Page from the Department of Statistics for helpful insights. The authors would also like to thank Professor Marcos Oliveira Prates, from the Department of Statistics of the Federal University of Minas Gerais, Brazil, for the computational support that enabled the simulation studies discussed in this work.

\vskip24pt
\noindent\textbf{\Large Funding information}
\vskip6pt
\noindent The first author was supported by CAPES, grants 88887.499199/2020-00 and\linebreak 88887.803176/2023-00. The second author was supported by ANID Fondecyt Regular, grants 1220017 and 1260012. The third author is partially supported by CNPq, grants 317790/2025-0, 405025/2021-1, 304268/2021-6, and FAPEMIG, grants APQ00674-24, APQ01748-24.

\vskip24pt
\noindent\textbf{\Large Data availability}
\vskip6pt
\noindent The CAI dataset is not public, and access to it should be requested from the team that collected it.

%\vskip24pt
%\noindent\textbf{\Large Conflict of interest}
%\vskip6pt
%\noindent The authors declare no potential conflict of interests.

%%% References %%%
\bibliography{refs}

@article{metropolis1953,
	author = {Metropolis, Nicholas and Rosenbluth, Arianna W. and Rosenbluth, Marshall N. and Teller, Augusta H. and Teller, Edward},
	title = {Equation of State Calculations by Fast Computing Machines},
	journal = {The Journal of Chemical Physics},
	volume = {21},
	number = {6},
	pages = {1087-1092},
	year = {1953},
	month = {06},
	issn = {0021-9606},
	doi = {10.1063/1.1699114},
	url = {https://doi.org/10.1063/1.1699114},
	eprint = {https://pubs.aip.org/aip/jcp/article-pdf/21/6/1087/18802390/1087_1_online.pdf}
}

@article{hastings1970,
	author = {Hastings, W. K.},
	title = {Monte Carlo sampling methods using Markov chains and their applications},
	journal = {Biometrika},
	volume = {57},
	number = {1},
	pages = {97-109},
	year = {1970},
	month = {04},
	issn = {0006-3444},
	doi = {10.1093/biomet/57.1.97},
	url = {https://doi.org/10.1093/biomet/57.1.97},
	eprint = {https://academic.oup.com/biomet/article-pdf/57/1/97/23940249/57-1-97.pdf}
}

@article{ferguson1973,
	author = {Thomas S. Ferguson},
	title = {A {B}ayesian Analysis of Some Nonparametric Problems},
	volume = {1},
	journal = {The Annals of Statistics},
	number = {2},
	publisher = {Institute of Mathematical Statistics},
	pages = {209 -- 230},
	year = {1973},
	doi = {10.1214/aos/1176342360}
}

@incollection{ferguson1983,
	title = {{B}ayesian Density Estimation by Mixtures of Normal Distributions},
	editor = {M. Haseeb Rizvi and Jagdish S. Rustagi and David Siegmund},
	booktitle = {Recent Advances in Statistics},
	publisher = {Academic Press},
	pages = {287-302},
	year = {1983},
	author = {Thomas S. Ferguson},
	isbn = {978-0-12-589320-6},
	doi = {https://doi.org/10.1016/B978-0-12-589320-6.50018-6}
}

@article{lo1984,
	author = {Albert Y. Lo},
	title = {On a Class of {B}ayesian Nonparametric Estimates: {I. D}ensity Estimates},
	volume = {12},
	journal = {The Annals of Statistics},
	number = {1},
	publisher = {Institute of Mathematical Statistics},
	pages = {351 -- 357},
	year = {1984},
	doi = {10.1214/aos/1176346412}
}

@article{hubert1985,
	title={Comparing partitions},
	author={Hubert, Lawrence and Arabie, Phipps},
	journal={Journal of Classification},
	volume={2},
	number={1},
	pages={193--218},
	year={1985},
	publisher={Springer},
	doi={10.1007/BF01908075}
}

@article{hartigan1990,
	author = {J.A. Hartigan},
	title = {Partition models},
	journal = {Communications in Statistics - Theory and Methods},
	volume = {19},
	number = {8},
	pages = {2745--2756},
	year = {1990},
	publisher = {Taylor \& Francis},
	doi = {10.1080/03610929008830345}
}

@article{gilks1992,
	author = {W. R. Gilks and P. Wild},
	journal = {Journal of the Royal Statistical Society. Series C (Applied Statistics)},
	number = {2},
	pages = {337--348},
	publisher = {[Wiley, Royal Statistical Society]},
	title = {Adaptive Rejection Sampling for {G}ibbs Sampling},
	volume = {41},
	year = {1992}
}

@article{ishwaran2001,
	author = {Hemant Ishwaran and Lancelot F James},
	title = {{G}ibbs Sampling Methods for Stick-Breaking Priors},
	journal = {Journal of the American Statistical Association},
	volume = {96},
	number = {453},
	pages = {161--173},
	year = {2001},
	publisher = {Taylor \& Francis},
	doi = {10.1198/016214501750332758},
	eprint = {https://doi.org/10.1198/016214501750332758}
}

@article{hollinger2002,
	title={{Pro Basketball Prospectus 2002. Pro Basketball Forecast}},
	author={Hollinger, J},
	journal={Free Press},
	volume={380},
	pages={384},
	year={2002}
}

@article{quintana2003,
	author = {Quintana, Fernando A. and Iglesias, Pilar L.},
	title = {{Bayesian clustering and product partition models}},
	journal = {Journal of the Royal Statistical Society: Series B (Statistical Methodology)},
	volume = {65},
	number = {2},
	pages = {557-574},
	year = {2003},
	doi = {https://doi.org/10.1111/1467-9868.00402}
}

@article{lang2004,
	title={{Bayesian P-splines}},
	author={Lang, Stefan and Brezger, Andreas},
	journal={Journal of Computational and Graphical Statistics},
	volume={13},
	number={1},
	pages={183--212},
	year={2004},
	publisher={Taylor \& Francis},
	doi={10.1198/1061860043010}
}

@book{ramsay2005,
	author={Ramsay, James O and Silverman, B W},
	publisher={Springer},
	title={{Functional Data Analysis}},
	year={2005}
}

@inbook{dahl2006,
	place={Cambridge},
	title={Model-Based Clustering for Expression Data via a Dirichlet Process Mixture Model},
	booktitle={{Bayesian Inference for Gene Expression and Proteomics}},
	publisher={Cambridge University Press},
	author={David B. Dahl},
	editor={Do, Kim-Anh and Müller, Peter and Vannucci, MarinaEditors},
	year={2006},
	pages={201–218}
}

@article{bouchard2017,
	author  = {Alexandre Bouchard-C{{\^o}}t{{\'e}} and Arnaud Doucet and Andrew Roth},
	title   = {Particle {G}ibbs Split-Merge Sampling for {B}ayesian Inference in Mixture Models},
	journal = {Journal of Machine Learning Research},
	year    = {2017},
	volume  = {18},
	number  = {28},
	pages   = {1--39}
}

@article{jain2007,
	author = {Sonia Jain and Radford M. Neal},
	title = {{Splitting and merging components of a nonconjugate Dirichlet process mixture model}},
	volume = {2},
	journal = {Bayesian Analysis},
	number = {3},
	publisher = {International Society for Bayesian Analysis},
	pages = {445 -- 472},
	year = {2007},
	doi = {10.1214/07-BA219}
}

@article{petrone2009,
	author = {Petrone, Sonia and Guindani, Michele and Gelfand, Alan E.},
	title = {{Hybrid Dirichlet mixture models for functional data}},
	journal = {Journal of the Royal Statistical Society: Series B (Statistical Methodology)},
	volume = {71},
	number = {4},
	pages = {755-782},
	year = {2009},
	doi = {https://doi.org/10.1111/j.1467-9868.2009.00708.x}
}

@article{rousseau2011,
	author = {Judith Rousseau and Kerrie Mengersen},
	journal = {Journal of the Royal Statistical Society. Series B (Statistical Methodology)},
	number = {5},
	pages = {689--710},
	publisher = {[Royal Statistical Society, Wiley]},
	title = {{Asymptotic behaviour of the posterior distribution in overfitted mixture models}},
	urldate = {2025-02-17},
	volume = {73},
	year = {2011},
	ISSN = {13697412, 14679868}
}

@book{fahrmeir2011,
	author = {Fahrmeir, Ludwig and Kneib, Thomas},
	title = {Bayesian {S}moothing and {R}egression for {L}ongitudinal, {S}patial and {E}vent {H}istory {D}ata},
	publisher = {Oxford University Press},
	year = {2011},
	month = {04},
	isbn = {9780199533022},
	doi = {10.1093/acprof:oso/9780199533022.001.0001}
}

@article{muller2011,
	title={A product partition model with regression on covariates},
	author={M{\"u}ller, Peter and Quintana, Fernando and Rosner, Gary L},
	journal={Journal of Computational and Graphical Statistics},
	volume={20},
	number={1},
	pages={260--278},
	year={2011},
	publisher={Taylor \& Francis},
	doi={10.1198/jcgs.2011.09066}
}

@book{bda3,
	title={{Bayesian Data Analysis}},
	author={Gelman, Andrew and Carlin, John B and Stern, Hal S and Dunson, David B and Vehtari, Aki and Rubin, Donald B},
	year={2013},
	publisher={Chapman and Hall/CRC},
	isbn={13:978-1-4398-9820-8}
}

@inproceedings{petralia2012,
	author={Petralia, Francesca and Rao, Vinayak and Dunson, David},
	booktitle={{Advances in Neural Information Processing Systems}},
	editor={F. Pereira and C.J. Burges and L. Bottou and K.Q. Weinberger},
	publisher={Curran Associates, Inc.},
	title={Repulsive Mixtures},
    volume={25},
	year={2012}
}

@article{polson2013,
	author = {Nicholas G. Polson and James G. Scott and Jesse Windle},
	title = {Bayesian Inference for Logistic Models Using {P}ólya–Gamma Latent Variables},
	journal = {Journal of the American Statistical Association},
	volume = {108},
	number = {504},
	pages = {1339--1349},
	year = {2013},
	publisher = {Taylor \& Francis},
	doi = {10.1080/01621459.2013.829001},
	eprint = {https://doi.org/10.1080/01621459.2013.829001}
}

@article{martinez2014,
	author = {Asael Fabian Mart{\'i}nez and Rams{\'e}s H. Mena},
	title = {On a nonparametric change point detection model in {M}arkovian regimes},
	volume = {9},
	journal = {Bayesian Analysis},
	number = {4},
	publisher = {International Society for Bayesian Analysis},
	pages = {823 -- 858},
	year = {2014},
	doi = {10.1214/14-BA878}
}

@article{page2015,
	author = {Garritt L. Page and Fernando A. Quintana},
	title = {Predictions Based on the Clustering of Heterogeneous Functions via Shape and Subject-Specific Covariates},
	volume = {10},
	journal = {Bayesian Analysis},
	number = {2},
	publisher = {International Society for Bayesian Analysis},
	pages = {379 -- 410},
	year = {2015},
	doi = {10.1214/14-BA919}
}

@article{xu2016,
	author = {Xu, Yanxun and Müller, Peter and Telesca, Donatello},
	title = {{B}ayesian Inference for Latent Biologic Structure With Determinantal Point Processes (DPP)},
	journal = {Biometrics},
	volume = {72},
	number = {3},
	pages = {955-964},
	year = {2016},
	month = {02},
	issn = {0006-341X},
	doi = {10.1111/biom.12482}
}

@article{makalic2016,
	title     = {High-Dimensional Bayesian Regularised Regression with the BayesReg Package},
	author    = {Enes Makalic and Daniel F. Schmidt},
	journal   = {arXiv:1611.06649},
	year      = {2016},
	doi       = {10.48550/arXiv.1611.06649}
}

@article{page2018,
	title={Calibrating covariate informed product partition models},
	author={Page, Garritt L and Quintana, Fernando A},
	journal={Statistics and Computing},
	volume={28},
	pages={1009--1031},
	year={2018},
	publisher={Springer}
}

@article{miller2018,
	author = {Jeffrey W. Miller and Matthew T. Harrison},
	title = {Mixture Models With a Prior on the Number of Components},
	journal = {Journal of the American Statistical Association},
	volume = {113},
	number = {521},
	pages = {340--356},
	year = {2018},
	publisher = {ASA Website},
	doi = {10.1080/01621459.2016.1255636}
}

@article{quinlan2018,
	author = {José J. Quinlan and Garritt L. Page and Fernando A. Quintana},
	title = {Density regression using repulsive distributions},
	journal = {Journal of Statistical Computation and Simulation},
	volume = {88},
	number = {15},
	pages = {2931--2947},
	year = {2018},
	publisher = {Taylor \& Francis},
	doi = {10.1080/00949655.2018.1491578}
}

@article{fuquene2019,
	author = {Fúquene, Jairo and Steel, Mark and Rossell, David},
	title = {On Choosing Mixture Components via Non-Local Priors},
	journal = {Journal of the Royal Statistical Society Series B: Statistical Methodology},
	volume = {81},
	number = {5},
	pages = {809-837},
	year = {2019},
	month = {08},
	issn = {1369-7412},
	doi = {10.1111/rssb.12333}
}

@article{hopkins2019,
	title={Characterization of multiple movement strategies in participants with chronic ankle instability},
	author={Hopkins, J Ty and Son, S Jun and Kim, Hyunsoo and Page, Garritt and Seeley, Matthew K},
	journal={Journal of Athletic Training},
	volume={54},
	number={6},
	pages={698--707},
	year={2019},
	publisher={National Athletic Trainers Association},
	doi={10.4085/1062-6050-480-17}
}

@article{xie2020,
	title={Bayesian repulsive {G}aussian mixture model},
	author={Xie, Fangzheng and Xu, Yanxun},
	journal={Journal of the American Statistical Association},
	volume={115},
	number={529},
	pages={187--203},
	year={2020},
	publisher={Taylor \& Francis},
	doi={10.1080/01621459.2018.1537918}
}

@article{bianchini2020,
	author = {Ilaria Bianchini and Alessandra Guglielmi and Fernando A. Quintana},
	title = {Determinantal Point Process Mixtures Via Spectral Density Approach},
	volume = {15},
	journal = {Bayesian Analysis},
	number = {1},
	publisher = {International Society for Bayesian Analysis},
	pages = {187 -- 214},
	year = {2020},
	doi = {10.1214/19-BA1150}
}

@article{quinlan2021,
	title={On a class of repulsive mixture models},
	author={Quinlan, Jos{\'e} J and Quintana, Fernando A and Page, Garritt L},
	journal={TEST},
	volume={30},
	number={2},
	pages={445--461},
	year={2021},
	publisher={Springer},
	doi={10.1007/s11749-020-00726-y}
}

@article{rigon2021,
	title = {{Tractable Bayesian density regression via logit stick-breaking priors}},
	journal = {Journal of Statistical Planning and Inference},
	volume = {211},
	pages = {131-142},
	year = {2021},
	issn = {0378-3758},
	author = {Tommaso Rigon and Daniele Durante},
	doi = {https://doi.org/10.1016/j.jspi.2020.05.009}
}

@article{sylvia2021,
	author = {Sylvia Fr{\"u}hwirth-Schnatter and Gertraud Malsiner-Walli and Bettina Gr{\"u}n},
	title = {Generalized Mixtures of Finite Mixtures and Telescoping Sampling},
	volume = {16},
	journal = {Bayesian Analysis},
	number = {4},
	publisher = {International Society for Bayesian Analysis},
	pages = {1279 -- 1307},
	year = {2021},
	doi = {10.1214/21-BA1294}
}

@article{beraha2022,
	author = {Mario Beraha and Raffaele Argiento and Jesper Møller and Alessandra Guglielmi},
	title = {{MCMC} Computations for{ B}ayesian Mixture Models Using Repulsive Point Processes},
	journal = {Journal of Computational and Graphical Statistics},
	volume = {31},
	number = {2},
	pages = {422--435},
	year = {2022},
	publisher = {ASA Website},
	doi = {10.1080/10618600.2021.2000424}
}

@article{zhang2023,
	author = {Bohai Zhang and Huiyan Sang and Zhao Tang Luo and Hui Huang},
	title = {{Bayesian clustering of spatial functional data with application to a human mobility study during COVID-19}},
	volume = {17},
	journal = {The Annals of Applied Statistics},
	number = {1},
	publisher = {Institute of Mathematical Statistics},
	pages = {583 -- 605},
	year = {2023},
	doi = {10.1214/22-AOAS1643}
}

@article{rigon2023,
	author = {Rigon, Tommaso},
	title = {An enriched mixture model for functional clustering},
	journal = {Applied Stochastic Models in Business and Industry},
	volume = {39},
	number = {2},
	pages = {232-250},
   	year = {2023},
	doi = {https://doi.org/10.1002/asmb.2736}
}

@article{zhang2023review,
	author = {Zhang, Mimi and Parnell, Andrew},
	title = {Review of Clustering Methods for Functional Data},
	journal   = {ACM Transactions on Knowledge Discovery from Data},
	year = {2023},
	publisher = {Association for Computing Machinery},
	address = {New York, NY, USA},
	volume = {17},
	number = {7},
	pages     = {1-34},
	issn = {1556-4681},
	doi = {10.1145/3581789},
	articleno = {91}
}

@article{ghilotti2024,
	author = {Ghilotti, L and Beraha, M and Guglielmi, A},
	title = {{B}ayesian clustering of high-dimensional data via latent repulsive mixtures},
	journal = {Biometrika},
	pages = {asae059},
	year = {2024},
	month = {11},
	issn = {1464-3510},
	doi = {10.1093/biomet/asae059}
}

@article{zhong2024,
	title     = {Bayesian spatial functional data clustering: applications in disease surveillance},
	author    = {Ruiman Zhong and Erick A. Chac{\'o}n-Montalv{\'a}n and Paula Moraga},
	journal   = {arXiv:2407.12633},
	year      = {2024},
	doi       = {10.48550/arXiv.2407.12633}
}

@article{liang2024,
	title={A {B}ayesian nonparametric approach for clustering functional trajectories over time},
	author={Liang, Mingrui and Koslovsky, Matthew D and H{\'e}bert, Emily T and Kendzor, Darla E and Vannucci, Marina},
	journal={Statistics and Computing},
	volume={34},
	number={6},
	pages={215},
	year={2024},
	publisher={Springer}
}

@article{cremaschi2024,
	author = {Cremaschi, Andrea and Wertz, Timothy M and De Iorio, Maria},
	title = {Repulsion, chaos, and equilibrium in mixture models},
	journal = {Journal of the Royal Statistical Society Series B: Statistical Methodology},
    volume = {87},
	number = {2},
	pages = { 389--432},
	year = {2025},
	month = {10},
	issn = {1369-7412},
	doi = {10.1093/jrsssb/qkae096}
}

@article{toto2025,
	title     = {Bayesian local clustering of functional data via semi-Markovian random partitions},
	author    = {Giovanni Toto and Antonio Canale},
	journal   = {arXiv:2503.08881},
	year      = {2025},
	doi       = {10.48550/arXiv.2503.08881}
}

@book{muller2015bayesian,
	title     = {Bayesian Nonparametric Data Analysis},
	author    = {Peter M{\"u}ller and Fernando Andr{\'e}s Quintana and Alejandro Jara and Tim Hanson},
	series    = {Springer Series in Statistics},
	publisher = {Springer},
	year      = {2015},
	doi       = {10.1007/978-3-319-18968-0},
	isbn      = {978-3-319-18967-3},
	edition   = {1},
	pages     = {xiv+193}
}

@article{song2025repulsive,
	title   = {Repulsive Mixture Model with Projection Determinantal Point Process},
	author  = {Song, Ziyi and Camerlenghi, Federico and Shen, Weining and Guindani, Michele and Beraha, Mario},
	journal = {arXiv:2510.08838},
	year    = {2025},
	doi     = {10.48550/arXiv.2510.08838}
}

@article{hayashida2025repulsive,
	title   = {Repulsive g-Priors for Regression Mixtures},
	author  = {Hayashida, Yuta and Sugasawa, Shonosuke},
	journal = {arXiv:2512.16276},
	year    = {2025},
	doi     = {10.48550/arXiv.2512.16276}
}

%%% Appendix %%%
%\vspace{.5in}
\newpage
\noindent\textbf{\Large Appendix}

\appendix

%\newpage
\section{Posterior sampling of the mixing weights}\label{supp:covariates}

Following \cite{rigon2021}, we assume that the covariate-dependent mixing weights $\pi_j(x_i)$ follow a stick-breaking construction with a logit transformation of the conditional probabilities $\nu_j(\bm{x}_i)$, given by

\begin{equation}
	\bm{x}^\T_i\ba_j=\log\big(\nu_j(\bm{x}_i)/(1-\nu_j(\bm{x}_i))\big),
\end{equation}

\noindent where $\bm{x}_i$ are covariates related to the weights of individual $i$ and  $\ba_j$ is the $\Ell \times 1$ component-specific column vector of coefficients $\ba_j$, $i=1,\dots,m$ and $j=1,\dots,J$. Under this structure, the mixing weights are given by

\begin{equation}\label{def_weights_x}
	\pi_j(\bm{x}_i \v \ba_j)
	=
	\frac{\exp\Big(\bm{x}^\T_i\ba_j\Big)}{1+\exp\Big(\bm{x}^\T_i\ba_j\Big)}
	\prod_{\ell=1}^{j-1}\frac{1}{1+\exp\Big(\bm{x}^\T_i\ba_\ell\Big)}.
\end{equation}

\noindent The prior distribution of the covariate-dependent mixing weights $\pi_j(\bm{x}_i \v \ba_j)$ is specified indirectly  by eliciting  the prior distribution  for $\ba_j$. We assume that $\ba_j\overset{\text{ind}}{\sim} \N_\Ell(\bmu_\alpha,\Sigma_\alpha)$, for $j=1,\dots,J$. Samples of the posterior of $\pi_j(\bm{x}_i)$ are obtained by replacing  the generated values of $\ba_j$ in  \eqref{def_weights_x}. Such values are generated from
\eqref{full_alpha_j}, in turn obtained by marginalizing the joint posterior distribution of $(\ba_1,\dots,\ba_J)$:

\begin{equation}\label{full_alpha}
	\begin{aligned}
		f(\ba_1,\dots,\ba_J \v X,\bm{z})
		&\;\propto\;
		\prod_{i=1}^{m}\prod_{j=1}^{J}
		\pi_j(\bm{x}_i)^{\bm{1}(z_i=j)}
		\;\times\;
		\prod_{j=1}^{J}
		\,\N_\Ell(\ba_j \,;\, \bm{\mu}_\alpha,\Sigma_\alpha)
		\\[.1in]
		&\hspace{-3cm}\;\propto\;
		\prod_{i=1}^{m}\prod_{j=1}^{J}
		\left[
		\frac{\exp\left(\bm{x}^\T_i\ba_j\right)^{\bm{1}(z_i=j)}}
		{\big(1+\exp\left(\bm{x}^\T_i\ba_j\right)\big)^{\bm{1}(z_i=j)}}
		\prod_{\ell=1}^{j-1}\frac{1}
		{\big(1+\exp\left(\bm{x}^\T_i\ba_\ell\right)\big)^{\bm{1}(z_i=j)}}
		\right]
		\;\times\;
		\prod_{j=1}^{J}
		\,\N_\Ell(\ba_j \,;\, \bm{\mu}_\alpha,\Sigma_\alpha),
		\\[.1in]
		&\hspace{-3cm}\;=\;
		\prod_{i=1}^{m}\prod_{j=1}^{J}
		\left[
		\frac{\exp\left(\bm{x}^\T_i\ba_j\right)^{\bm{1}(z_i=j)}}
		{\big(1+\exp\left(\bm{x}^\T_i\ba_j\right)\big)^{\bm{1}(z_i\ge j)}}
		\right]
		\;\times\;
		\prod_{j=1}^{J}
		\,\N_\Ell(\ba_j \,;\, \bm{\mu}_\alpha,\Sigma_\alpha),
	\end{aligned}
\end{equation}

\noindent where $X$ is a $m\tms\Ell$ matrix which the $i$th row is the individual-specific covariate vectors $\bm{x}_i$ for individual $i$. Consequently, for $j=1,\dots,J$, the posterior full conditional of $\ba_j$ is

\begin{equation}\label{full_alpha_j}
	\begin{aligned}
		f(\ba_j \v X,\bm{z})
		&\;\propto\;
		\prod_{i=1}^{m}
		\frac{\exp\left(\bm{x}^\T_i\ba_j\right)^{\bm{1}(z_i=j)}}
		{\big(1+\exp\left(\bm{x}^\T_i\ba_j\right)\big)^{\bm{1}(z_i\ge j)}}
		\;\times\;
		\exp\left(-\frac{1}{2}(\ba_j\!-\bmu_\alpha)^\T\Sigma_\alpha^{-1}(\ba_j\!-\bmu_\alpha)\right).
	\end{aligned}
\end{equation}

As described in \cite{rigon2021}, the Pólya-Gamma (PG) data augmentation method introduced by \cite{polson2013} {is useful to efficiently sample} from the full conditional distribution in \eqref{full_alpha}. The method is based on the following theorem:

%\begin{theorem}\textup{\textbf{(Theorem 1 of }\cite{polson2013}\textbf{)}}\label{theorem_polson_1}
%\begin{theorem}[{\cite[Theorem~1]{polson2013}}]\label{theorem_polson_1}
\begin{theorem}[\cite{polson2013}, Theorem~1]\label{theorem_polson_1}
	Let $f(\omega)$ denote the density of the random variable
	$\omega\sim\textup{PG}(b,0), b>0$. Then the following integral identity holds for all $a\in\mathbb{R}\!:$
	
	\begin{equation}\label{theorem_polson_1_eq}
		\begin{aligned}
			\frac{\left(\exp(\psi)\right)^a}
			{\left(1+\exp(\psi)\right)^b}
			&=
			2^{-b}\exp(k\psi)
			\int_{0}^{\infty}
			\!\!\exp(-\omega\psi^2/2)f(\omega) \,\textup{d} \omega,
			%\\[.1in]
		\end{aligned}
	\end{equation}
	
	\noindent where $k=a-b/2$.
\end{theorem}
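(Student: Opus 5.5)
The plan is to reduce the identity to the Laplace transform of the Pólya-Gamma law and then do some algebra. First I recall how $\textup{PG}(b,0)$ is defined: it is the infinite convolution of Gamma variables
\begin{equation*}
\omega \overset{d}{=} \frac{1}{2\pi^2}\sum_{k=1}^{\infty}\frac{g_k}{(k-1/2)^2},\qquad g_k\overset{\text{iid}}{\sim}\text{Gamma}(b,1).
\end{equation*}
From this I will compute its Laplace transform $\E\{\exp(-\omega t)\}$ for $t\ge 0$. Independence factorizes the expectation, and each factor is the Gamma Laplace transform $\big(1+t/(2\pi^2(k-1/2)^2)\big)^{-b}$. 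The main analytic input is then the Weierstrass product
\begin{equation*}
\cosh(x)=\prod_{k=1}^{\infty}\Big(1+\frac{x^2}{\pi^2(k-1/2)^2}\Big).
\end{equation*}
Taking $x=\sqrt{t/2}$ gives $\E\{\exp(-\omega t)\}=\cosh^{-b}(\sqrt{t/2})$.

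The step I expect to need the most care is justifying the passage to the infinite product. I will argue that partial sums increase to $\omega$ almost surely, with finite mean since $\sum (k-1/2)^{-2}<\infty$. Monotone (or dominated) convergence then applies, because $\exp(-\omega_K t)$ is bounded by 1 and converges. The product converges absolutely for the same summability reason. Everything else is routine.

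Next I set $t=\psi^2/2$, which gives
\begin{equation*}
\int_0^\infty \exp(-\omega\psi^2/2)f(\omega)\d\omega=\cosh^{-b}(\psi/2),
\end{equation*}
valid for all real $\psi$ since $\cosh$ is even. Then I rewrite the left side of \eqref{theorem_polson_1_eq}:
\begin{equation*}
\frac{\exp(a\psi)}{(1+\exp(\psi))^b}=\frac{\exp(a\psi)\exp(-b\psi/2)}{(\exp(-\psi/2)+\exp(\psi/2))^b}=2^{-b}\exp\{(a-b/2)\psi\}\cosh^{-b}(\psi/2).
\end{equation*}
Combining this with the Laplace transform identity gives the claim with $k=a-b/2$, for every $a\in\mathbb{R}$ and $b>0$.
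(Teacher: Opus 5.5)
The paper gives no proof of this statement; it simply imports it from \cite{polson2013}. Your argument is correct and is essentially the proof given there: you obtain the Laplace transform $\E\{\exp(-\omega t)\}=\cosh^{-b}(\sqrt{t/2})$ from the Gamma-convolution definition via the Weierstrass product for $\cosh$, then use the factorization $1+\exp(\psi)=2\exp(\psi/2)\cosh(\psi/2)$, with the limit interchange adequately justified by increasing partial sums of finite mean $b/4$ and dominated convergence with bound $1$.
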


From Theorem \ref{theorem_polson_1}, it follows that the posterior full conditional distribution \eqref{full_alpha} is

\begin{equation}\label{full_alpha_pg}
	\begin{aligned}
		f(\ba_j \v \bm{z},X)
		&\propto
		f(\ba_j)
		\prod_{i=1}^{m}
		\left[
		\int_{0}^{\infty}
		\!\!\exp\!\left(k_i\bm{x}^\T_i\ba_j - \frac{1}{2}\omega_i(\bm{x}^\T_i\ba_j)^2\right)
		f(\omega_i)
		\d\omega_i
		\right]^{\bm{1}(z_i\ge j)}
		%\\[.1in]
	\end{aligned}
\end{equation}

\noindent where $k_i \,=\, \bm{1}(z_i=j)-0.5\bm{1}(z_i\ge j)$ is equal to  $1/2$, if  $z_i=j$ and  is equal to $-1/2$, if $z_i>j$. Consequently,  we can sample from \eqref{full_alpha} by sequentially sampling from  ${\omega_i \v z_i=j,\ba_j,\bm{x}_i\sim\text{PG}(1,\bm{x}^\T_i\ba_j)}$ whose density is given by

\begin{equation}\label{full_wi_alpha}
	\begin{aligned}
		f(\omega_i \v z_i=j, \ba_j,\bm{x}_i)
		&\propto
		\exp\!\left(-\frac{1}{2}\omega_i(\bm{x}^\T_i\ba_j)^2\right)
		f(\omega_i),
		\\[.1in]
	\end{aligned}
\end{equation}

\noindent for all $i$ such that $z_i\ge j$ and from $\ba_j \v \bm{z},X,\{\omega_i\!:\!zi\ge j\}\sim\N(m_\alpha,V_\alpha)$ where $m_\alpha=V_\alpha\left(0.5(\sum_{i:z_i=j}\bm{x}_i-\!\sum_{i:z_i>j}\bm{x}_i)+\Sigma^{-1}_\alpha\mu_\alpha\right)$ and $V_\alpha=\left(\sum_{i:z_i\ge j}\omega_i\bm{x}_i\bm{x}^\T_i+\Sigma^{-1}_\alpha\right)^{-1}$, with density

\begin{equation}\label{full_alpha_wi}
	\begin{aligned}
		f(\ba_j \v \bm{z},X,\{\omega_i\!:\!zi\ge j\})
		&\propto
		f(\ba_j)
		\prod_{i=1}^{m}
		\left[
		\exp\!\left(k_i\bm{x}^\T_i\ba_j - \frac{1}{2}\omega_i(\bm{x}^\T_i\ba_j)^2\right)
		\right]^{\bm{1}(z_i\ge j)}
		\\[.1in]
		&\hspace{-1.5in}\propto
		\exp\!\left[
		-\frac{1}{2}\,
		{\ba_j}^\T\!
		\left(\sum_{i:z_i\ge j}\!\!\omega_i\bm{x}_i\bm{x}^\T_i+\Sigma^{-1}_\alpha\right)
		\ba_j
		+{\ba_j}^\T\!
		\left(\frac{\sum_{i:z_i=j}\bm{x}_i-\!\sum_{i:z_i>j}\bm{x}_i}{2}+\Sigma^{-1}_\alpha\mu_\alpha\right)
		\right].
		\\[.1in]
	\end{aligned}
\end{equation}

%From equation (5) in \cite{polson2013}, it follows  that the distribution in \eqref{full_wi_alpha} is ${\omega_i \v z_i=j,\ba_j,\bm{x}_i\sim\text{PG}(1,\bm{x}^\T_i\ba_j)}$ . The full conditional density in \eqref{full_alpha_wi} is that of a $\text{N}(m_\alpha,V_\alpha)$ where $V_\alpha=\left(\sum_{i:z_i\ge j}\!\!\omega_i\bm{x}_i\bm{x}^\T_i+\Sigma^{-1}_\alpha\right)^{-1}$ and $m_\alpha=V_\alpha\left(0.5(\sum_{i:z_i=j}\bm{x}_i-\!\sum_{i:z_i>j}\bm{x}_i)+\Sigma^{-1}_\alpha\mu_\alpha\right).$

\noindent To sample from the full conditional distribution of $w_i$ in \eqref{full_wi_alpha} we use the Polya-Gamma sampler implementation available in the \texttt{R} package \texttt{pgdraw} \citep{makalic2016}.

\vskip24pt
\noindent {\bf Remark:} If no covariates are considered to model the mixing weights, the following modifications to equation \eqref{eq_posterior} must be made: the  mixing weights $\pi_j(\bm{x}_i\v\ba_j)$ must be replaced by $\pi_j$ and the distributions for $\ba_j$, $\N_{\Ell}(\ba_j\,;\,\bm{\mu}_\alpha,\Sigma_\alpha)$, must be replaced by the Dirichlet kernel $\pi_j^{\gamma_j-1}$.

%\newpage
\section{Sampling from the joint posterior}\label{supp:posterior}

Denote the set of all model parameters by
$$
\bm{\Psi}=
\left(\bigcup_{d=1}^{D}\Big\{
\Theta_d,\btau_d,\blam_d,\bm{\mu}_d,\mu_{0d},\sigma^2_{0d}
\Big\}\right)
\bigcup
\left(\bigcup_{i=1}^{m}\Big\{
\Bzi,\Bi,\Si,z_i
\Big\}\right)
\bigcup \Big\{\ba_1,\dots,\ba_J\Big\}.
$$

\noindent The joint posterior distribution of the proposed model is given by

\begin{equation}\label{eq_posterior}
	\begin{aligned}
		f(\bm{\Psi} \v \mathcal{Y}_1,\dots,\mathcal{Y}_m,\bm{x}_1,\dots,\bm{x}_m)
		%		f(\bm{\Psi} \v \bY,\bm{x}_1,\dots,\bm{x}_m)
		&\propto\;
		\Bigg[\prod_{i=1}^{m}
		\,\N_{n_i D}(\vect(\Yi) \,;\, \vect(\Bzi \!+\! H_i\Bi) \,,\, \Si)
		\Bigg]\\%[.1in]
		&\hspace{-2.1in}\times\;
		\Bigg[\prod_{i=1}^{m}
		\,\IW(\Si \,;\, \Sigma_0,\omega)
		\Bigg]
		\times
		\Bigg[\prod_{d=1}^{D}
		\Bigg(\prod_{i=1}^{m}\,\N(\beta_{0id} ;\, \mu_{0d},\sigma^{2}_{0d})\Bigg)
		\,\N(\mu_{0d} ;\, 0,s^2_0)
		\,\IG(\sigma^{2}_{0d} ;\, a_0,b_0)
		\Bigg]
		\\[.1in]
		&\hspace{-2.10in}\times\;
		\Bigg[\prod_{i=1}^{m}\prod_{j=1}^{J}
		\left(\,\prod_{d=1}^{D}
		\N_p(\bbt_{id} ;\, \bth_{jd} \,, \lambda^{2}_{jd} I_p)
		\right)^{\bm{1}(z_i=j)}
		\mkern-25mu\pi_j(\bm{x}_i\v\ba_j)^{\bm{1}(z_i=j)}
		\Bigg]
		\times
		\Bigg[\prod_{j=1}^{J}
		\,\N_{\Ell}(\ba_j \,;\, \bm{\mu}_\alpha,\Sigma_\alpha)
		\Bigg]
		\\[.1in]
		&\hspace{-2.10in}\times\;
		\Bigg[\prod_{d=1}^{D}
		\Bigg(\prod_{j=1}^{J}
		\,\N_p(\bth_{jd} ;\, \bm{\mu}_d ,\, \tau^2_{jd} K^{-1})\,
		\,f(\lambda^{2}_{jd})\,
		\,\IG(\tau^2_{jd} ;\, a_\tau,b_\tau)
		\Bigg)
		h\!\left(\Theta_d\right)
		\N_p(\bm{\mu}_d ;\, \bm{0},s^2_\mu I_p)
		\Bigg],
		\\[.0in]
	\end{aligned}
\end{equation}
where $f(\lambda^2_{jd})\propto(\lambda^2_{jd})^{-1/2}\,\bm{1}\left(\lambda^2_{jd}\!\in\!(0,A_d^2)\right)$, which follows from \eqref{prior_lam2}, and $h\!\left(\Theta_d\right)$  is given in equations \eqref{def_h} and \eqref{eq_distance_approx} in the main text. If no covariates are considered to influence the mixing weights, the following modifications to \eqref{eq_posterior} must be made: the  mixing weights $\pi_j(\bm{x}_i\v\ba_j)$ must be replaced by $\pi_j$ and the distributions for $\ba_j$, $\N_{\Ell}(\ba_j\,;\,\bm{\mu}_\alpha,\Sigma_\alpha)$, must be replaced by the Dirichlet kernel $\pi_j^{\gamma_j-1}$. In what follows, we give a detailed description of the  proposed MCMC algorithm to sample from the posterior distribution in \eqref{eq_posterior}. One important feature of the proposed algorithm is the split-merge step, that has proved essential for more accurate cluster identification, and is described in Section \ref{supp:SM}.

\subsection{MCMC algorithm}\label{supp:mcmc}

For the MCMC algorithm described next, consider the following notation:

\begin{itemize}
	
	\item[\tb] $\Theta_j=[\bth_{j1},\dots,\bth_{jD}]$,

	\vspace{.05in}
	\item[\tb] $\btau_j=(\tau^2_{j1},\dots,\tau^2_{jD})$
	\quad and \quad
	$T_j=\text{diag}(\btau_j)\otimes I_p$,
	
	\vspace{.05in}
	\item[\tb] $\blam_j=(\lambda^2_{j1},\dots,\lambda^2_{jD})$
	\quad and \quad
	$L_j=\text{diag}(\blam_j)\otimes K^{-1}$,
	
	\vspace{.05in}
	\item[\tb] $\bmu_0=(\mu_{01},\dots,\mu_{0D})$,
%	\quad and \quad
%	${\bm{\sigma}_0=(\sigma^{2}_{01},\dots,\sigma^{2}_{0D})}$

	\vspace{.05in}
	\item[\tb] $M=\left[\bmu_1,\dots,\bmu_D\right]$,
	
	\vspace{.05in}
	\item[\tb] $n_j\!=\displaystyle\sum_{i=1}^{m}\bm{1}(z_i=j)$, that is, the number of individuals allocated to mixture component $j$, for $j=1,\dots,J$,
	
	\vspace{.05in}
	\item[\tb] $\E_i=\Yi-\Bzi-H_i\Bi$,

\end{itemize}

\noindent {where we use $[\;]$ to denote matrices, $(\;)$ for vectors and diag(\textbf{v}) represents} a diagonal matrix with main diagonal equal to vector \textbf{v}. To sample from the joint posterior in \eqref{eq_posterior}, we propose the following MCMC algorithm:

\vskip12pt
\begin{enumerate}[leftmargin=*]
	
	\vskip12pt
	\item Execute one split-merge step described in Section \ref{supp:SM}.

	\vskip12pt
	\item In the presence of covariates:
	
	\begin{enumerate}
		
		\vspace{.05in}
		\item[i.] for $j=1,\dots,J$, update $\ba_j$ from \eqref{full_alpha_j} as described in Section \ref{supp:covariates};
		
		\vspace{.05in}
		\item[ii.] for $i=1,\dots,m$:
		
		\begin{itemize}
			\item[\tb] compute $\pi_j(\bm{x}_i\v\ba_j)$ from \eqref{def_weights_x}, for $j=1,\dots,J$;
			
			\vspace{.02in}
			\item[\tb] update $z_i$ from the full conditional (marginalized with respect to $\mathcal{B}_1,\dots,\mathcal{B}_m$),

			\vspace{-.1in}
			\begin{equation}\label{full_zi_Int}
			\hspace{-2cm}\begin{aligned}
				%&\Pr(z_i=j \v \Si, \Theta_j, \bm{\Lambda}, X, \ba_1,\dots,\ba_J)
				%&\Pr(z_i=j \v -)\\[.1in]
				\Pr(z_i=j \v -)
				&\,\propto\,
				\pi_j(\bm{x}_i\v\ba_j)
				\int
				\N(\vect(\Yi) ;\, \vect(\Bzi \!+\! H_i\Bi) ,\, \Si)
				\;\N(\vect(\Bi) ;\, \vect(\Theta_j) ,\, L_j)
				\;\d(\vect(\Bi))
				\\[.1in]
				&\,\propto\,
				\pi_j(\bm{x}_i\v\ba_j)
				\;|L_j|^{-\frac{1}{2}}
				\exp\left\{-\frac{1}{2} \,\vect(\Theta_j)^\T L_j^{-1}\vect(\Theta_j)\right\}
				\;|V_i|^\frac{1}{2}
				\exp\left\{\frac{1}{2} \,m_i^\T V_i \,m_i\right\},
			\end{aligned}
			\end{equation}
			
			\vspace{.1in}
			\noindent where \;$V_i^{-1}\!=\!(\Si^{-1}\!\otimes H_i^\T) + L_j^{-1}$\;  and \;$m_i\!=\!(\Si^{-1}\!\otimes H_i^\T)\,\vect(\Yi-\Bzi) + L_j^{-1}\vect(\Theta_j)$.

%			\noindent For the univariate or independent specifications of the proposed model, {we replace \eqref{full_zi_Int} by the following, cheaper to evaluate expression:}
%			
%			\vspace{-.3in}
%			\begin{equation}\label{full_zi_Int_d}
%				\begin{aligned}
%					%&\Pr(z_i=j \v \Bzi, \Si, \Theta_j, \bm{\Lambda}, X, \ba_1,\dots,\ba_J)
%					&\Pr(z_i=j \v -)
%					\\[.05in]
%					&\hspace{-0.3in}\,\propto\,
%					\pi_j(\bm{x}_i\v\ba_j)
%					\,\prod_{d=1}^{D}\int
%					\N\left( \byid ;\, \beta_{0id}\bm{1}_{n_i} \!+\! H_i\bbt_{id} ,\, \sigma_{id}^2 \right)
%					\,\N\left(\bbt_{id} ;\, \bth_{jd} ,\, \lambda_{jd}^2\right)
%					\,\d\bbt_{id}
%					\\[.05in]
%					&\hspace{-0.3in}\,\propto\,
%					\pi_j(\bm{x}_i\v\ba_j)
%					\,\prod_{d=1}^{D}
%					\,\left(\lambda_{jd}^2\right)^{-\frac{p}{2}}
%					\exp\left\{-\frac{1}{2} \;\theta_{jd}^\T \;\lambda_{jd}^{-2}\; \theta_{jd}\right\}
%					\,|V_i|^\frac{1}{2}
%					\exp\left\{\frac{1}{2} \,m_i^\T V_i \,m_i\right\},
%				\end{aligned}
%			\end{equation}
%			
%			\vspace{.1in}
%			\noindent with \;$V_i^{-1}\!=\!H_i^\T H_i\sigma_{id}^{-2} + \lambda_{jd}^{-2}I_p$\; and \;$m_i\!=\!H_i^\T(\byid\!-\beta_{0id}\bm{1}_{n_i})\sigma_{id}^{-2} + \lambda_{jd}^{-2}I_p\bth_{jd}$\,.
			
		\end{itemize}
		
	\end{enumerate}

	\vspace{.1in}
	\noindent In the absence of covariates:

	\begin{enumerate}
	
		\vspace{.05in}
		\item[i.] for $j=1,\dots,J$, update $\pi_j$ from $\pi_j\v- \,\sim \text{Dirichlet}\big(\gamma_j+n_j\big)$;
		
%		\vspace{-.2in}
%		\begin{equation}\label{full_pi}
%		\pi_j\v- \;\sim\; \text{Dirichlet}\big(\gamma_j+n_j\big)
%		\end{equation}
		
%		%\vspace{.05in}
%		\item[ii.] for $i=1,\dots,m$,  update $z_i$ from \eqref{full_zi_Int} or \eqref{full_zi_Int_d}, with $\pi_j(\bm{x}_i\v\ba_j)$ replaced by $\pi_j$.

		\vspace{.05in}
		\item[ii.] for $i=1,\dots,m$, update $z_i$ from \eqref{full_zi_Int}, with $\pi_j(\bm{x}_i\v\ba_j)$ replaced by $\pi_j$.

	\end{enumerate}

	\vskip24pt
	\item For $j=1,\dots,J$, update $\Theta_j$ from the full conditional (marginalized with respect to $\mathcal{B}_1,\dots,\mathcal{B}_m$)
	
%	%\vspace{-.1in}
%	\begin{equation}\label{full_theta_dep}
%	\begin{aligned}
%		f(\bth_{jd} \bv -)
%		&\;\propto\;
%		\N(\bth_{jd} \,; V_j\,m_j \,, V_j )\; h(\Theta_d),
%		\quad d=1,\dots,D,
%%		\\[.1in]
%%		V_j^{-1}
%%		&\;=\;
%%		\tau^{-2}_{jd} K \;+\; \lambda^{-2}_{jd} n_j I_p
%%		\\[.1in]
%%		m_j
%%		&\;=\;
%%		\tau^{-2}_{jd} K\bmu_d \;+\; \lambda^{-2}_{jd}\sum_{i:z_i=j}\!\bbt_{id}
%	\end{aligned}
%	\end{equation}
%	
%	\vspace{.1in}
%	\noindent with \;$V_j^{-1}\!=\!\tau^{-2}_{jd}K + \lambda^{-2}_{jd}n_j I_p$\; and
%	\;$m_j\!=\!\tau^{-2}_{jd} K\bmu_d + \lambda^{-2}_{jd}\!\displaystyle\sum_{i:z_i=j}\!\bbt_{id}$
	
	\vspace{-.1in}
	\begin{equation}\label{full_Theta_dep_Int}
	\begin{aligned}
		f(\vect(\Theta_j) \v -)
		&\;\propto\;
		\N\left(\vect(\Theta_j) \,; V_j \, m_j \,,\, V_j \right)\;\prod_{d=1}^{D}h(\Theta_d)\,,\\
		&\hspace{-1.2in}\text{with}
		\\[-.1in]
		V_j^{-1}\!
		&\;=\;
		%\Bigg(
		T_j^{-1} + n_j L_j^{-1} - L_j^{-1}\Bigg(\sum_{i:z_i=j}V_i\Bigg)L_j^{-1}\,,
		%\Bigg)^{-1},
		\\%[.05in]
		m_j
		&\;=\;
		T_j^{-1}\vect(M) + L_j^{-1}\Bigg(\sum_{i:z_i=j}\!V_i\,(\Si^{-1}\otimes H_i^\T)\,\vect(\Yi-\Bzi)\Bigg)\,,
		\\[.05in]
		V_i^{-1}\!
		&\;=\;
		(\Si^{-1}\!\otimes H_i^\T) + L_j^{-1}\,.
	\end{aligned}
	\end{equation}

	\vskip12pt
	\item For $i=1,\dots,m$, update $\Bi$ from the full conditional

	\begin{equation}\label{full_beta_dep}
	\begin{aligned}
		\vect(\Bi) \bv - %\Bzi,\Si,z_i,\Theta_{z_i},\blam_{z_i},\Yi
		&\;\sim\;
		\text{N}( V_i\,m_i , V_i )\,,
		\\[.01in]
		&\hspace{-1.2in}\text{with}
		\\[-.06in]
		V_i^{-1}&\;=\;
		\Si^{-1}\!\otimes H_i^\T H_i + (\text{diag}(\blam_{z_i})\!\otimes I_p)^{-1}\,,
		\\[.1in]
		m_i&\;=\;
		(\Si^{-1}\!\otimes H_i^\T)\vect(\Yi-\Bzi) + (\text{diag}(\blam_{z_i})\!\otimes I_p)^{-1} \vect(\Theta_{z_i})\,.
	\end{aligned}
	\end{equation}
	
	%\noindent where $\text{diag}(\blam_{z_i})$ denotes a $D{\times}D$ diagonal matrix with main diagonal elements $\lambda^{2}_{z_i 1},\dots,\lambda^{2}_{z_i D}$.
	
%	\noindent For the univariate or independent specifications of the proposed model, {we replace \eqref{full_beta_dep} by the following, cheaper to evaluate expression:} 
%	
%	\vspace{-.1in}
%	\begin{equation}\label{full_beta_ind}
%		\bbt_{id} \bv -
%		%\beta_{0id}\,, \sigma^{2}_{id}\,, \bth_{z_i d}\,, \lambda^{2}_{z_i d}\,, \byid
%		\;\sim\;
%		\text{N}( V_i\,m_i \,, V_i ),
%	\end{equation}
%	where $V_i^{-1}\;=\; H_i^\T H_i\sigma^{-2}_{id} \;+\; I_p\lambda^{-2}_{z_i d}$ and $m_i\;=\; H_i^\T(\byid-\beta_{0id}\bm{1}_p)\sigma^{-2}_{id} \;+\; \bth_{z_i d}\lambda^{-2}_{z_i d}$.	

	%\vskip24pt
	\vskip12pt
	\item For $j=1,\dots,J$, update $\tau^{2}_{jd}$ from the full conditional

	\vspace{-.05in}
	\begin{equation}\label{full_tau2}
		\tau^{2}_{jd} \bv - %\bth_{jd}, \bmu_d
		\;\sim\;
		\text{IG}\big(
		a_\tau + 0.5p \;\;,\;\;
		b_\tau + 0.5\,(\bth_{jd}-\bmu_d)^\T K (\bth_{jd}-\bmu_d)
		\big),
		\quad d=1,\dots,D.
	\end{equation}

	%\vskip24pt
	\vskip12pt
	\item For $j=1,\dots,J$, update $\lambda^{2}_{jd}$ from the full conditional

	\begin{equation}\label{full_lam2}
		\lambda^{2}_{jd} \bv - %\bth_{jd}\,, \bm{z}, \{\bbt_{id}: z_i\!=\!j\}\\[.1in]
		\;\sim\;
		\text{IG}\Big(0.5(n_jp-1)
		\;,\;
		0.5\!\displaystyle\sum_{i:z_i=j} (\bbt_{id}-\bth_{jd})^\T (\bbt_{id}-\bth_{jd})\Big)
		\;\bm{1}\big(\lambda^{2}_{jd}\!\in\!(0,A_d^2)\big),
		\quad d=1,\dots,D.
	\end{equation}

	%\vskip24pt
	\vskip12pt
	\item Update $\bmu_d$ from the full conditional

	\vspace{-.1in}
	\begin{equation}\label{full_mu}
	\begin{aligned}
		\bmu_d \v - %\btau_d, \Theta_d
		&\;\sim\;
		\text{N}_p (V_\mu\,m_\mu \,,\, V_\mu),
		\quad d=1,\dots,D,
%		\\[.1in]
%		V_\mu&\;=\;
%		K\sum_{j=1}^{J}\tau^{-2}_{jd}+I_p s^{-2}_\mu
%		\\[.05in]
%		m_\mu&\;=\;
%		\sum_{j=1}^{J}\bth_{jd}\,\tau^{-2}_{jd}
	\end{aligned}
	\end{equation}
	
	\vspace{.05in}
	\noindent where \;$V_\mu \!=\! K\sum_{j=1}^{J}\tau^{-2}_{jd}+I_p s^{-2}_\mu$\; and 
	\;$m_\mu\!=\!\sum_{j=1}^{J}\bth_{jd}\,\tau^{-2}_{jd}$.

	\vskip24pt
	\item For $i=1,\dots,m$, update $\Bzi$ from the full conditional
	
	\begin{equation}\label{full_beta0i_dep}
	\begin{aligned}
		\bbt_{0i} \bv -
		&\;\sim\;
		\text{N}( V_{0i}\,m_{0i} \,, V_{0i} ),
%		\\[.05in]
%		&\hspace{-1.2in}\text{with}
%		\\[-.02in]
%		V_{0i}^{-1}&\;=\;
%		n_i\Si^{-1} \!+ \Sigma_0^{-1},
%		\\[.1in]
%		m_{0i}&\;=\;
%		\Si^{-1}(\Yi-H_i\Bi)^\T{1}_{n_i} + \Sigma_0^{-1}\bmu_0,
	\end{aligned}
	\end{equation}
	
	\vspace{.05in}
	\noindent where \;$V_{0i}^{-1}\!=\!n_i\Si^{-1} \!+ \Sigma_0^{-1}$\;,  \;$m_{0i}\!=\!\Si^{-1}(\Yi-H_i\Bi)^\T{1}_{n_i} + \Sigma_0^{-1}\bmu_0$\,, and  ${(\Yi-H_i\Bi)^\T{1}_{n_i}}$ is a column vector of size $D$ with each element $d$ equal to the sum of the $d$th column of $\Yi-H_i\Bi$, for $d=1,\dots,D$. 
	
%	For the univariate or independent specifications of the proposed model, 	{we replace \eqref{full_beta0i_dep} by the following, cheaper to evaluate expression:} 
%
%	\vspace{-.1in}
%	\begin{equation}\label{full_beta0i_ind}
%		\begin{aligned}
%		\beta_{0id} \bv - %\bbt_{id}, \sigma^{2}_{id}, \mu_{0d}, \sigma^{2}_{0d}, \byid
%		&\;\sim\;
%		\text{N}( V_{0i}\,m_{0i} , V_{0i} ),
%		\quad d=1,\dots,D,
%		\end{aligned}
%	\end{equation}
%	where \;$V_{0i}\!=\!\sigma^{2}_{0d}\,\sigma^{2}_{id} \;\big/\! \left({n_i\,\sigma^{2}_{0d} + \sigma^{2}_{id}}\right)$\;, \;$m_{0i}\!=\!\displaystyle\sum_{t=1}^{n_i}\left(y_{idt}-h^\T_{it}\bbt_{id}\right)/\sigma^{2}_{id} \,+\, \mu_{0d}\big/\sigma^{2}_{0d}$\, and $h_{it}$ represents the $t$th row  of the matrix $H_i$.

	\vspace{.5in}
	\item For $i=1,\dots,m$, update $\Si$ from the full conditional

	\vspace{-.05in}
	\begin{equation}\label{full_Si}
		\Si \bv - %\Bi,\Bzi,\Yi
		\;\sim\;
		%\text{IW}( \omega+n_i \,,\, (\Yi\!-\!H_i\Bi\!-\!\Bzi)^\T(\Yi\!-\!H_i\Bi\!-\!\Bzi) + W_0),
		\text{IW}\left( \omega+n_i \,,\, \E_i^\T \E_i + \Sigma_0\right).
	\end{equation}
	
%	\color{red}
%	\vspace{-.05in}
%	\noindent For the univariate or independent specifications of the proposed model, \eqref{full_Si} may be replaced by an expression that requires lower computational cost, given by
%	
%	\vspace{-.2in}
%	\begin{equation}\label{full_Si_ind}
%		\sigma^{2}_{id} \v \bbt_{id}, \beta_{0id}, \byid
%		\sim
%		\text{IG}\left(
%		0.5\,n_i + \TR{a_\sigma}
%		\,,\,
%		%\bm{\epsilon}^\T_i\bm{\epsilon}_i
%		\textbf{e}^\T_i\textbf{e}_i
%		%	(\byid \!-\! H_i\bbt^{(d)}_i \!-\! \beta^{(d)}_{0i}{1}_{n_i})^\T
%		%	(\byid \!-\! H_i\bbt^{(d)}_i \!-\! \beta^{(d)}_{0i}{1}_{n_i})
%		+ \TR{b_\sigma}\right),
%	\end{equation}
%	
%	\vspace{-.05in}
%	\noindent where $\textbf{e}_i = \byid \!-\! H_i\bbt_{id} \!-\! \beta_{0id}{1}_{n_i}$.
%
%	\color{black}

	\vskip24pt
	\item For $d=1,\dots,D$, update $\mu_{0d}$ from the full conditional

	\begin{equation}\label{full_mu0}
		\mu_{0d} \mid -
		\;\sim\;
		\text{N}\left(\,
		\frac{s_0^2\sum_{i=1}^{m}\beta_{0id}}{m\,s_0^2 + \sigma^{2}_{0d}}
		\,,\,
		\frac{s_0^2 \, \sigma^{2}_{0d}}{m\,s_0^2 + \sigma^{2}_{0d}}
		\,\right).
	\end{equation}

	\vskip24pt
	\item For $d=1,\dots,D$, update $\sigma^{2}_{0d}$ from the full conditional
	
	\begin{equation}\label{full_sig02}
		\sigma^{2}_{0d} \bv - %\beta_{0id}, \mu_{0d}
		\;\sim\;
		\text{IG}\Big( 0.5\,m + a_0 \,,\, 0.5\sum_{i=1}^{m}(\beta_{0id} - \mu_{0d})^2 + b_0 \Big).
	\end{equation}

	\qed

\end{enumerate}

Some specific features regarding the proposed MCMC algorithm are discussed next. The full conditional distribution of the variance parameters $\lambda^{2}_{jd}$ is a Inverse-Gamma truncated in the interval $(0,A_d^2)$, as shown in \eqref{full_lam2}. To sample from $\lambda^{2}_{jd}$, we use the adaptive rejection sampling method (ARS) introduced by \cite{gilks1992}. Since \eqref{full_lam2} is not a log-concave density, we consider a reparameterization of the model by transforming $\lambda^{2}_{jd}$ to obtain a log-concave density. This allows us sample from the posterior distribution of $\lambda^{2}_{jd}$ via the ARS method. The open-source implementations of the ARS did not work properly for small values of the truncation parameter $A_d$, which motivated us to develop our own implementation, as discussed in Section \ref{supp:ARS}.

The sample from the component-specific mean vectors $\bth_{jd}$ is done jointly for all dimensions $d=1,\dots,D$, that is, we sample from the full conditional distribution of $\Theta_j$, $j=1,\dots,J$, available in \eqref{full_Theta_dep_Int}. We use the Metropolis-Hastings method with independent Normal proposal distributions for each element $\bth_{jd}$, $d=1,\dots,D$, in $\Theta_j$, with mean equal to the current values of $\bth_{jd}$ and covariance matrix equal to $\varepsilon K^{-1}$ when component $j$ is occupied or $\varepsilon_0 K^{-1}$ when component $j$ is empty. The covariance matrix $K^{-1}$ is the same defined for the Normal prior distribution of the vectors $\Theta_d$ in Equation \eqref{prior_Theta_d} of the main text. Since an empirical search for appropriate values of $\varepsilon$ and $\varepsilon_0$ could be costly in the case of a large data dimension $D$, we adapt the values of $\varepsilon$ and $\varepsilon_0$ during the burn-in period in order to achieve acceptance rates between $5\%$ and $25\%$ for all the mixture components $j=1,\dots,J$. In practice, the acceptance rates are within the desired range of $5\%$ to $25\%$ for the occupied components, but they are usually between $5\%$ and $50\%$ for the empty components. %It is necessary because each data dimension $d$ could require specific values for $\varepsilon_d$ and $\varepsilon_{0d}$ in the case of the independent version of the proposed model.

%In the case that no repulsion is considered, which is equivalent to assume $h(\Theta_d)=1$ for $d=1,\dots,D$, the component-specific vectors $\bth_{j1},\dots,\bth_{jD}$ in $\Theta_j$ are conditionally independent, and they are directly sampled from the full conditional distribution $\bth_{jd} \bv - \;\sim\; \N(V_j\,m_j \,, V_j ), \quad d=1,\dots,D,$ where $m_j$ and $V_j$ are the same defined in \eqref{full_theta_Int_d}.

In the case that no repulsion is considered, which is equivalent to assume $h(\Theta_d)=1$ for $d=1,\dots,D$, the component-specific vectors $\bth_{j1},\dots,\bth_{jD}$ in $\Theta_j$ are conditionally independent, and they are directly sampled from the following full conditional distribution, for each $j=1,\dots,J$:

%\vspace{-.1in}
\begin{equation}\label{full_theta_Int_d}
	\begin{aligned}
		\bth_{jd}\bv - 
		&\;\sim\;
		\N\left(\bth_{jd} \,; V_j\,m_j \,, V_j \right)\;,
		\quad d=1,\dots,D,
		\\%[.05in]
		&\hspace{-1.2in}\text{with}
		\\[-.1in]
		V_j^{-1}\!
		&\;=\;
		\tau^{-2}_{jd} K \,+\, n_j \lambda^{-2}_{jd}I_p \,-\, \lambda_{jd}^{-2}\left(\sum_{i:z_i=j}V_i\right)\lambda_{jd}^{-2}\,,
		\\[.05in]
		m_j
		&\;=\;
		\tau_{jd}^{-2}K\bmu_d \,+\, \lambda_{jd}^{-2}\sum_{i:z_i=j}\! V_i H_i^\T \sigma_{id}^{-2}(\byid-\beta_{0id}\bm{1}_{n_i})\,,
		\\[.05in]
		V_i^{-1}\!
		&\;=\;
		H_i^\T H_i\sigma_{id}^{-2} + \lambda_{jd}^{-2}I_p\,.
	\end{aligned}
\end{equation}

\vspace{.2in}
The full conditional distributions of $\Theta_j$ and the allocation vector $\bm{z}$ considered in the proposed MCMC algorithm are both marginalized with respect to $\mathcal{B}_1,\dots,\mathcal{B}_m$. This marginalization provided a better mixing in the posterior chain, but the marginalized full conditionals are more computationally expensive. To reduce computational cost, we consider the non-marginalized full conditional distributions of $\bm{z}$ and $\Theta_j$, $j=1,\dots,J$, in the Gibbs sampling scan of the split-merge step. The non-marginalized full conditional distributions of $z_i$, $i=1,\dots,m$, and $\Theta_j$, $j=1,\dots,J$, are given by

\vspace{.05in}
\begin{equation}\label{full_zi}
	\Pr(z_i=j \v -)
	\,=\,
	\frac
	{\pi_j(\bm{x}_i\v\ba_j)\prod_{d=1}^{D}\N(\bbt_{id}\,; \bth_{jd},\lambda^{2}_{jd}I_p)}
	{\sum_{\ell=1}^{J}\pi_\ell(\bm{x}_i\v\ba_\ell)\prod_{d=1}^{D}\N(\bbt_{id}\,; \bth_{\ell d},\lambda^{2}_{\ell d}I_p)}\,,
	\quad j=1,\dots,J,
\end{equation}

\vspace{.1in}
\noindent and
\vspace{-.2in}

\begin{equation}\label{full_theta_dep}
	\begin{aligned}
		\bth_{jd} \bv -
		&\;\sim\;
		\N(\bth_{jd} \,; V_j\,m_j \,, V_j )\,,
		\quad d=1,\dots,D,
		%		\\[.1in]
		%		V_j^{-1}
		%		&\;=\;
		%		\tau^{-2}_{jd} K \;+\; \lambda^{-2}_{jd} n_j I_p
		%		\\[.1in]
		%		m_j
		%		&\;=\;
		%		\tau^{-2}_{jd} K\bmu_d \;+\; \lambda^{-2}_{jd}\sum_{i:z_i=j}\!\bbt_{id}
	\end{aligned}
\end{equation}

\vspace{.1in}
\noindent respectively, with \;$V_j^{-1}\!=\!\tau^{-2}_{jd}K + \lambda^{-2}_{jd}n_j I_p$ and \;$m_j\!=\!\tau^{-2}_{jd} K\bmu_d \,+\, \lambda^{-2}_{jd}\sum_{i:z_i=j}\!\bbt_{id}$.  If no covariates are considered to model the covariate-dependent mixing weights $\pi_j(\bm{x}_i\v\ba_j)$, they are replaced by $\pi_j$. The split-merge step is very important for {an efficient MCMC algorithm. This step speeds up convergence (i.e. less MCMC iterations), and it proved to be an essential step to achieve a good mixing of the MCMC chain, especially in the case of larger data dimensions $D$. The split-merge procedure is discussed in next section.} An \texttt{R} package containing an implmentation of the proposed MCMC algorithm is available at \url{https://github.com/rcpedroso/MHRMMx}. The code is fully written in \texttt{c++}, which significantly speeds up execution.

%\newpage
\subsection{Split-merge algorithm}\label{supp:SM}

In this section, we discuss the split-merge step of the posterior simulation algorithm for the proposed \mbox{MFRMMx} model. This step adapts the method introduced by \cite{jain2007} to our finite mixture model for multivariate functional curves, with a repulsive dependence between the mixture components. Denote by $\gamma=\{\bz,\bTh,\bLam,\bTau\}$ the current state of the parameters in $\bz$, $\bTh$, $\bLam$ and $\bTau$ in the MCMC chain and let $\Arrowvert J\Arrowvert=\{1,\dots,J\}$. The split-merge step in the MCMC algorithm to sample from the MFRMMx model is executed as follows:

%\vspace{.1in}
\begin{enumerate}[leftmargin=*]
	
	\item Randomly select two distinct individual indexes $i_0$ and $i_1$\footnote{If $i_0=i_1$ and all the $J$ mixture components are occupied, the split-merge step is skipped in the current MCMC iteration, because the split proposal requires an empty component.}.
	
	\vspace{.1in}
	\item Define the set $S = \{i: z_i=z_{i_0} \text{ or } z_i=z_{i_1}\}\backslash\{i_0,i_1\}$ \;and\; define $\znew=\min\{j:n_j=0\}$.
	
	\vspace{.1in}
	\item\label{Launch} Define the launch states $\gamma^{L_s}\!=\{\bTh^{L_s},\bTau^{L_s},\bLam^{L_s},\bz^{L_s}\}$ and
	$\gamma^{L_m}\!=\{\bTh^{L_m},\bTau^{L_m},\bLam^{L_m},\bz^{L_m}\}$. $L_s$ and $L_m$ are equivalent here to the upper indexes $L_{split}$ and $L_{merge}$ in \cite{jain2007}, respectively.

	\begin{enumerate}[label=(\alph*),leftmargin=*]
		
		\vspace{.1in}
		\item Obtain $\gamma^{L_s}$ as follows:
		
		\begin{itemize}[leftmargin=*]
			
			\vspace{.05in}
			\item Do $\gamma^{L_s}=\gamma$ (that is, do $\bTh^{L_s}\!=\bTh$, \,$\bTau^{L_s}\!=\bTau$, \,$\bLam^{L_s}\!=\bLam$ and \,$\bz^{L_s}\!=\bz$).
			
			\vspace{.05in}
			\item If $\ziz=\zio$ set $\zizLs=\znew$.
			
			\vspace{.05in}
			\item Modify $\gamma^{L_s}$ by executing $n_{GS}$ iterations of the following \textbf{intermediate} restricted Gibbs sampling scan, always conditional to the newly updated parameter values in $\gamma^{L_s}$ and to the current values of the other related model parameters that are not directly involved in the split-merge step:
			
			\begin{enumerate}[label=(\roman*)]
				
				\vspace{.1in}
				\item update $\bth^{\,L_s}_{\ell d}$ from the Normal density defined in \eqref{full_theta_dep}, for $\ell=\zizLs,\,\zioLs$ and ${d=1,\dots,D}$;
				
				\vspace{.1in}
				\item update $\tau^{2\,L_s}_{\ell d}$ from the Inverse-Gamma density defined in \eqref{full_tau2}, for $\ell=\zizLs,\,\zioLs$ and ${d=1,\dots,D}$;
				
				\vspace{.1in}
				\item update $\lambda^{2\,L_s}_{\ell d}$ from the truncated Inverse-Gamma density defined in \eqref{full_lam2}, for $\ell=\zizLs,\,\zioLs$ and $d=1,\dots,D$;
				
				\vspace{.1in}
				\item update $\ziLs$ from

				\vspace{-.2in}
				\begin{equation*}%\label{full_JN_zi_Ls}
					\Pr\left( \ziLs=\ell \bv \Bi,\Theta^{L_s},\bLam^{L_s},\pi_\ell \right)
					\,=\,
					\frac{\pi_\ell \displaystyle\prod_{d=1}^{D}
						\N(\bbt_{id}\,;\, \bth_{\ell d}^{\,L_s} ,\,\lambda^{2\,L_s}_{\ell d}I_p)}
					{\displaystyle\sum_{j\,\in{\left\{\zizLs,\,\zioLs\right\}}}\!\!\!\!
							\pi_j \prod_{d=1}^{D}
						\N(\bbt_{id}\,;\, \bth_{jd}^{\,L_s} ,\,\lambda^{2\,L_s}_{jd}I_p)},
					\end{equation*}

				\noindent for $i\in S$, with $\ell\in\left\{\zizLs,\,\zioLs\right\}$.

			\end{enumerate}

		\end{itemize}
		
		\vspace{.2in}
		\item Obtain $\gamma^{L_m}$ as follows:
		
			\begin{itemize}[leftmargin=*]
				
			\vspace{.05in}
			\item Do $\gamma^{L_m}=\gamma$ (that is, do $\bTh^{L_m}\!=\bTh$, \,$\bTau^{L_m}\!=\bTau$, \,$\bLam^{L_m}\!=\bLam$ and \,$\bz^{L_m}\!=\bz$).
			
			\vspace{.05in}
			\item If $\ziz\ne\zio$ set $\zizLm=\zio$.
			
			\vspace{.05in}
			\item For $i\in S$, set $\ziLm=\zio$.
			
			\vspace{.05in}
			\item Modify $\gamma^{L_m}$ by executing $n_{GS}$ iterations of the following \textbf{intermediate} restricted Gibbs sampling scan, always conditional to the newly updated parameter values in $\gamma^{L_m}$ and to the current values of the other related model parameters that are not directly involved in the split-merge step:

			\begin{enumerate}[label=(\roman*)]
				
				\vspace{.1in}
				\item update $\bth^{\,L_m}_{\ell d}$ from the Normal density defined in \eqref{full_theta_dep}, for $\ell=\zioLm$ and ${d=1,\dots,D}$;
				
				\vspace{.1in}
				\item update $\tau^{2\,L_m}_{\ell d}$ from the Inverse-Gamma density defined in \eqref{full_tau2}, for $\ell=\zioLm$ and ${d=1,\dots,D}$;
				
				\vspace{.1in}
				\item update $\lambda^{2\,L_m}_{\ell d}$ from the truncated Inverse-Gamma density defined in \eqref{full_lam2}, for $\ell=\zioLm$ and ${d=1,\dots,D}$.
				
			\end{enumerate}
			
		\end{itemize}
	
	\end{enumerate}

	\vspace{.1in}
	\item If $\ziz=\zio$, propose to split the mixture component $\ziz$ in two different mixture components, as follows:
	
	\begin{enumerate}[label=(\alph*),leftmargin=*]
		
		\item\label{GS_Ls_final} Define the candidate state $\gamma^s=(\bTh^s,\bTau^s,\bLam^s,\bz^s)$ and obtain $\gamma^s$ as follows:
		
		\begin{itemize}[leftmargin=*]
			
			\vspace{.05in}
			\item Do $\gamma^s=\gamma^{L_s}$ (that is, do \,$\bTh^s\!=\bTh^{L_s}$, \,$\bTau^s\!=\bTau^{L_s}$, \,$\bLam^s\!=\bLam^{L_s}$ and \,$\bz^s\!=\bz^{L_s}$), where $\gamma^{L_s}$ was obtained in the last iteration of the \textbf{intermediate} restricted Gibbs sampling scan executed in \ref{Launch}. %\TRB{To make it easier to understand the next steps of the algorithm, it may be useful to note that $\zizs=\zizLs=\znew$ and $\zios=\zioLs=\zio$.}
			
			\vspace{.05in}
			\item Modify $\gamma^s$ by running one iteration of the following \textbf{final} restricted Gibbs sampling scan, always conditional to the newly updated parameter values in $\gamma^s$ and to the current values of the other related model parameters that are not directly involved in the split-merge step:
			
			\begin{enumerate}[label=(\roman*)]
				
				\vspace{.1in}
				\item update $\bth^{\,s}_{\ell d}$ from the Normal density defined in \eqref{full_theta_dep}, for $\ell=\zizs,\,\zios$ and ${d=1,\dots,D}$;
				
				\vspace{.1in}
				\item update $\tau^{2\,s}_{\ell d}$ from the Inverse-Gamma density defined in \eqref{full_tau2}, for $\ell=\zizs,\,\zios$ and ${d=1,\dots,D}$;
				
				\vspace{.1in}
				\item update $\lambda^{2\,s}_{\ell d}$ from the truncated Inverse-Gamma density defined in \eqref{full_lam2}, for $\ell=\zizs,\,\zios$ and $d=1,\dots,D$;

				\vspace{.1in}
				\item update $z_i^s$ from
				
				\vspace{-.1in}
				\begin{equation*}%\label{full_JN_zi_s}
					\Pr\left( z_i^s=\ell \bv \Bi,\Theta^{s},\bLam^{s},\pi_\ell \right)
					\,=\,
					\frac{\pi_\ell \displaystyle\prod_{d=1}^{D}
						\N(\bbt_{id}\,;\, \bth_{\ell d}^{\,s} \,,\,\lambda^{2\,s}_{\ell d}I_p)}
					{\displaystyle\sum_{j\,\in{\left\{\zizs,\,\zios\right\}}}\!\!\!\!
						\pi_j \prod_{d=1}^{D}
						\N(\bbt_{id}\,;\, \bth_{jd}^{\,s} \,,\,\lambda^{2\,s}_{jd}I_p)},
				\end{equation*}
				
				\noindent for $i\in S$, with $\ell\in\left\{\zizs,\,\zios\right\}$.
				
			\end{enumerate}
		
		\end{itemize}
		
		\vspace{.2in}
		\item Compute the proposal densities $q(\gamma^s\v\gamma)$ and $q(\gamma\v\gamma^s)$ to be used to calculate the acceptance probability of the proposed state $\gamma^s$:
		
		\begin{itemize}[leftmargin=*]
			
			\vspace{.1in}
			\item $q(\gamma^s\v\gamma)$ is the product of the densities used to update $\gamma^s$ from $\gamma^{L_s}$ in the course of the \textbf{final} restricted Gibbs sampling scan described in \ref{GS_Ls_final}. That is,
			
			\begin{equation}\label{qs_}
				\begin{aligned}
					q(\gamma^s\v\gamma)
					&\,=\,\prod_{d=1}^{D} \; \prod_{\ell\in\left\{\zizs,\zios\right\}}
					\!\!f(\bth^{s}_{\ell d} \v -)
					\;f(\tau^{2\,s}_{\ell d} \v -)
					\;f(\lambda^{2\,s}_{\ell d} \v -)
					%\\[.05in]
					%&\,\times\,
					\;\times\;
					%\prod_{i\in S}\Pr(z_i^s \bv \Bi,\Theta^{s},\bTau^{s},\bLam^{s},\pi_\ell)
					\prod_{i\in S}\Pr(z_i^s \v -)
				\end{aligned}
			\end{equation}

			\noindent where $f(\bth^{s}_{\ell d}\v-)$, $f(\tau^{2\,s}_{\ell d}\v-)$ and $f(\lambda^{2\,s}_{\ell d}\v-)$ are the sampling densities and $\Pr(z_i^s\v-)$ is the sampling probability considered in the \textbf{final} restricted Gibbs sampling scan described in \ref{GS_Ls_final}, evaluated at the respective sampled values $\bth^{s}_{\ell d}$\,, $\tau^{2\,s}_{\ell d}$\,, $\lambda^{2\,s}_{\ell d}$ and $z_i^s$. %These full conditional densities are also conditioned to the current values of other model parameters that are not elements of the candidate state $\gamma^s$, which means that they are not being updated during the split-merge procedure.

			\vspace{.2in}
			\item $q(\gamma\v\gamma^s)$ is the product of the densities that would be used to update the elements in $\gamma^{L_m}$ to their original values $\gamma$ in a hypothetical restricted Gibbs sampling scan. That is,
			
			\vspace{-.1in}
			\begin{equation}\label{q_s}
			\begin{aligned}
				q(\gamma\v\gamma^s)
				&\,=\,\prod_{d=1}^{D}\,
				f(\bth_{\ell d} \v -)
				\,f(\tau^{2}_{\ell d}\v-)
				\,f(\lambda^{2}_{\ell d} \v -)\\
				&\,\times\,\prod_{d=1}^{D}\,
				f(\bth_{\znew\,d}\v\tau^{2}_{\znew\,d})
				\,f(\tau^{2}_{\znew\,d})\,
				\,f(\lambda^{2}_{\znew\,d}),
			\end{aligned}
			\end{equation}

			\noindent with $\ell=\zioLm$, where $f(\bth_{\ell d}\v-)$, $f(\tau^{2}_{\ell d}\v-)$ and $f(\lambda^{2}_{\ell d}\v-)$ are the full conditional densities used to modify $\gamma^{L_m}$ {in step \ref{Launch} above, conditional on} the current values in $\gamma^{L_m}$ (obtained in the last iteration of the \textbf{intermediate} Gibbs sampling scan) and evaluated at the respective original values of $\bth_{\ell d}$\,, $\tau^{2}_{\ell d}$ and $\lambda^{2}_{\ell d}$ in $\gamma$, for $\ell=\zioLm$ and $d=1,\dots,D$. The factor associated to the hypothetical update of $\bz^{L_m}$ to its original value $\bz$ is equal to one because there is an unique way to combine two components into one component. Then, that factor does not appear in \eqref{q_s}. The densities $f(\bth_{\znew\,d}\v\tau^{2}_{\znew\,d})$, $f(\tau^{2}_{\znew\,d})$ and $f(\lambda^{2}_{\znew\,d})$ are the Normal, Inverse-Gamma and truncated Uniform densities defined in \eqref{prior_lam2} and \eqref{prior_Theta_d}, for each respective parameter. These densities are related to a hypothetical update from the candidate parameters of the mixture component $\znew$ to the original empty component.

		\end{itemize}
	
		\vspace{.2in}
		\item Evaluate the $\gamma^s$ proposal by the acceptance probability
		
		\vspace{-.1in}
		\begin{equation}\label{accept_split}
		a(\gamma^s,\gamma)
		\,=\,
		\left\{1\;,\;
		\frac{q(\gamma\v\gamma^s)}{q(\gamma^s\v\gamma)}\,
		\frac{P(\gamma^s)}{P(\gamma)}\,
		\frac{\prod_{i=1}^{m}\prod_{d=1}^{D}
		\N\left(\bbt_{id}\,; \bth^{\,s}_{z_i^{s} d} \,,\, \lambda^{2\,s}_{z_i^{s} d}I_p\right)}
		{\prod_{i=1}^{m}\prod_{d=1}^{D}
		\N\left(\bbt_{id}\,; \bth_{z_i d} \,,\, \lambda^{2}_{z_i d}I_p\right)}
		\,\right\},
		\end{equation}

		\vspace{.1in}
		\noindent where $q(\gamma^s\v\gamma)$ and $q(\gamma\v\gamma^s)$ are defined in \eqref{qs_} and \eqref{q_s}, and $P(\gamma^s)$ and $P(\gamma)$ denote the joint prior distribution of the elements in $\gamma^s$ and $\gamma$, respectively, with

		\begin{equation}\label{prior_split}
			P(\gamma)\,=\,
			\left(\prod_{i=1}^{m} \pi_{z_i}\right)
			\left[\prod_{d=1}^{D}
			\mathrm{C}^{-1}_{Jd}
			\left(\prod_{j=1}^{J}
			\N(\bth_{jd} \,;\, \bmu_d,\tau^2_{jd} K^{-1})
			\;\IG(\tau^2_{jd} \,;\, a_\tau,b_\tau)
			\;f(\lambda^{2}_{jd})
			\right)
			h(\Theta_d)
			\right]
		\end{equation}

		\vspace{-.1in}
		\noindent and
		\vspace{-.1in}
		
		\begin{equation}\label{prior_split_s}
			P(\gamma^s)\,=\,
			\left(\prod_{i=1}^{m} \pi_{z^s_i}\right)
			\left[\prod_{d=1}^{D}
			\mathrm{C}^{-1}_{Jd}
			\left(\prod_{j=1}^{J}
			\N(\bth^s_{jd} \,;\, \bmu_d,\tau^{2\,s}_{jd}\,K^{-1})
			\;\IG(\tau^{2\,s}_{jd} \,;\, a_\tau,b_\tau)
			\;f(\lambda^{2\,s}_{jd})
			\right)
			h(\Theta^s_d)
			\right],
		\end{equation}

		\vspace{.1in}
		\noindent where $f(\lambda^{2}_{jd})$ is defined in \eqref{eq_posterior}. Then, we have that
	
		\vspace{-.1in}
		\begin{equation}\label{prior_split_rate}
		\begin{aligned}
		\frac{P(\gamma^s)}{P(\gamma)}&\;=\;
		\frac
		{\displaystyle\prod_{i\in S\cup\{i_0,i_1\}} \pi_{z^s_i}}
		{\displaystyle\prod_{i\in S\cup\{i_0,i_1\}} \pi_{z_i}}
		\\[.1in]
		&\;\times\;\prod_{d=1}^{D}\;
		\frac{\displaystyle\prod_{j\in\{\zizs,\zios\}}
		\N(\bth^s_{jd}\,;\,\bmu_d,\tau^{2\,s}_{jd}\,K^{-1})
		\;\IG(\tau^{2\,s}_{jd}\,;\,a_\tau,b_\tau)
		\;f(\lambda^{2\,s}_{jd})
		}
		{\displaystyle\prod_{j\in\{\zizs,\zios\}}
		\N(\bth_{jd}\,;\,\bmu_d,\tau^{2}_{jd}\,K^{-1})
		\;\IG(\tau^{2}_{jd}\,;\,a_\tau,b_\tau)
		\;f(\lambda^{2}_{jd})
		}\;
		\\[.1in]
		&\;\times\;\prod_{d=1}^{D}\;
		\frac{g(\dist(\bth^{s}_{\zizs d} \,, \bth^{s}_{\zios d}))}
			{g(\dist(\bth_{\zizs d} \,, \bth_{\zios d}))}\,
		\frac{\displaystyle\prod_{j\,\in\,\Arrowvert J\Arrowvert\backslash\{\zizs,\zios\}}
				g(\dist(\bth^{s}_{jd} \,, \bth^{s}_{\zizs d}))\,
				g(\dist(\bth^{s}_{jd} \,, \bth^{s}_{\zios d}))}
			{\displaystyle\prod_{j\,\in\,\Arrowvert J\Arrowvert\backslash\{\zizs,\zios\}}
				g(\dist(\bth_{jd} \,, \bth_{\zizs d}))\,
				g(\dist(\bth_{jd} \,, \bth_{\zios d}))}.
		\end{aligned}
		\end{equation}

		%\vspace{.1in}
		\noindent {In \eqref{prior_split_rate}, all density components and $g$ functions in \eqref{prior_split} and \eqref{prior_split_s} that are not related to mixture components participating in the computation of the split proposal cancel.} Note that the factors $f(\bth_{\znew\,d}\v\tau^{2}_{\znew\,d})\,f(\tau^{2}_{\znew\,d})\,\,f(\lambda^{2}_{\znew\,d})$ in \eqref{q_s} and $\N(\bth_{\zizs d}\,;\,\bmu_d,\tau^{2}_{\zizs d}\,K^{-1}) \;\IG(\tau^{2}_{\zizs d}\,;\,a_\tau,b_\tau) \;f(\lambda^{2}_{\zizs d})$ in the denominator of the second line of \eqref{prior_split_rate} are equal, given that $\zizs=\zizLs=\znew$, so that they cancel out in the acceptance rate in \eqref{accept_split}. However, the current values of the parameters of the mixture component $\zizs$ in $\gamma$, that is, $\bth_{\zizs d}$\,, $\tau^{2}_{\zizs d}$ and $\lambda^{2}_{\zizs d}$\,, for $d=1,\dots,D$, are still part of the acceptance rate in \eqref{accept_split}, being used in the calculation of the ratio of repulsive factors in the last line of \eqref{prior_split_rate}. The likelihood ratio in \eqref{accept_split} is given by

%		\begin{equation}\label{likelihood_split_rate}
%		\begin{aligned}
%		&\frac{\displaystyle\prod_{i=1}^{m}\,\prod_{d=1}^{D}
%		\N\left(\bbt_{id}\,; \bth^{\,s}_{z_i^{s} d} \,,\, \lambda^{2\,s}_{z_i^{s} d}I_p\right)}
%		{\displaystyle\prod_{i=1}^{m}\,\prod_{d=1}^{D}
%		\N\left(\bbt_{id}\,; \bth_{z_i d} \,,\, \lambda^{2}_{z_i d}I_p\right)}\\
%		&\;=\;
%		%
%		\frac{
%			\displaystyle\prod_{i:z_i^s=\zizs}\prod_{d=1}^{D}
%			%f(\vect(\Bi)\bv\Theta^{s}_{z_i^s},\blam^{s}_{z_i^s})
%			\N\left(\bbt_{id}\,; \bth^{\,s}_{z_i^{s} d} \,,\, \lambda^{2\,s}_{z_i^{s} d}I_p\right)
%			\displaystyle\prod_{i:z_i^s=\zios}\prod_{d=1}^{D}
%			%f(\vect(\Bi)\bv\Theta^{s}_{z_i^s},\blam^{s}_{z_i^s})
%			\N\left(\bbt_{id}\,; \bth^{\,s}_{z_i^{s} d} \,,\, \lambda^{2\,s}_{z_i^{s} d}I_p\right)
%		}
%		{
%			\displaystyle\prod_{i\in S\cup\{i_0,i_1\}}\prod_{d=1}^{D}
%			%f(\vect(\Bi)\bv\Theta_{z_i},\blam_{z_i})
%			\N\left(\bbt_{id}\,; \bth_{z_i d} \,,\, \lambda^{2}_{z_i d}I_p\right),
%		}
%		\end{aligned}
%		\end{equation}
		
		\begin{equation}\label{likelihood_split_rate}
		\hspace{-1.5cm}\begin{aligned}
			\frac{\displaystyle\prod_{i=1}^{m}\,\prod_{d=1}^{D}
				\N\left(\bbt_{id}\,; \bth^{\,s}_{z_i^{s} d} \,,\, \lambda^{2\,s}_{z_i^{s} d}I_p\right)}
			{\displaystyle\prod_{i=1}^{m}\,\prod_{d=1}^{D}
				\N\left(\bbt_{id}\,; \bth_{z_i d} \,,\, \lambda^{2}_{z_i d}I_p\right)}
			\;=\;
			\frac{
				\displaystyle\prod_{i:z_i^s=\zizs}\prod_{d=1}^{D}
				%f(\vect(\Bi)\bv\Theta^{s}_{z_i^s},\blam^{s}_{z_i^s})
				\N\left(\bbt_{id}\,; \bth^{\,s}_{z_i^{s} d} \,,\, \lambda^{2\,s}_{z_i^{s} d}I_p\right)
				\displaystyle\prod_{i:z_i^s=\zios}\prod_{d=1}^{D}
				%f(\vect(\Bi)\bv\Theta^{s}_{z_i^s},\blam^{s}_{z_i^s})
				\N\left(\bbt_{id}\,; \bth^{\,s}_{z_i^{s} d} \,,\, \lambda^{2\,s}_{z_i^{s} d}I_p\right)
			}
			{
				\displaystyle\prod_{i\in S\cup\{i_0,i_1\}}\prod_{d=1}^{D}
				%f(\vect(\Bi)\bv\Theta_{z_i},\blam_{z_i})
				\N\left(\bbt_{id}\,; \bth_{z_i d} \,,\, \lambda^{2}_{z_i d}I_p\right)
			},
		\end{aligned}
		\end{equation}

		\noindent where the likelihood factors related to the individuals $i$ such that $z_i\ne\zizs$ and $z_i\ne\zios$ cancel.
		
	\end{enumerate}

	\vspace{.2in}
	\item If $\ziz\ne\zio$, propose to merge the mixture components $\ziz$ and $\zio$ as follows:
	
	\begin{enumerate}[label=(\alph*),leftmargin=*]
		
		\item\label{GS_Lm_final} Define the candidate state $\gamma^m=(\bTh^m,\bTau^m,\bLam^m,\bz^m)$ and obtain the $\gamma^m$ element values by:
		
		\begin{itemize}[leftmargin=*]
			
			\vspace{.05in}
			\item Do $\gamma^m=\gamma^{L_m}$ (that is, do \,$\bTh^m\!=\bTh^{L_m}$, \,$\bTau^m\!=\bTau^{L_m}$, \,$\bLam^m\!=\bLam^{L_m}$ and \,$\bz^m\!=\bz^{L_m}$), where $\gamma^{L_m}$ was obtained in the last iteration of the \textbf{intermediate} restricted Gibbs sampling scan executed in \ref{Launch}.%\TRB{To make it easier to understand the next steps of the algorithm, it may be useful to note that $\zizm=\zizLm=\zio$ and $\ziom=\zioLm=\zio$.}
			
			\vspace{.05in}
			\item Modify $\gamma^m$ by executing one iteration of the following \textbf{final} restricted Gibbs sampling scan, always conditional to the newly updated parameter values in $\gamma^m$ and to the current values of the other related model parameters that are not directly involved in the split-merge step:

			\begin{enumerate}[label=(\roman*)]
				
				\vspace{.1in}
				\item update $\bth^{\,m}_{\ell d}$ from the Normal density defined in \eqref{full_theta_dep}, for $\ell=\ziom$ and ${d=1,\dots,D}$;
				
				\vspace{.1in}
				\item update $\tau^{2\,m}_{\ell d}$ from the Inverse-Gamma density defined in \eqref{full_tau2}, for $\ell=\ziom$ and ${d=1,\dots,D}$;
				
				\vspace{.1in}
				\item update $\lambda^{2\,m}_{\ell d}$ from the truncated Inverse-Gamma density defined in \eqref{full_lam2}, for $\ell=\ziom$ and ${d=1,\dots,D}$.
				
			\end{enumerate}

			\vspace{.05in}
			\item Modify $\gamma^m$ by updating $\bth^{\,m}_{\ell d}$, $\tau^{2\,m}_{\ell d}$ and $\lambda^{2\,m}_{\ell d}$, for $\ell=\zizm$ and ${d=1,\dots,D}$, from their prior densities, since the mixture component $\zizm$ will be proposed empty:
			
			\begin{enumerate}[label=(\roman*)]
				
				\vspace{.1in}
				\item update $\tau^{2\,m}_{\ell d}$ from the Inverse-Gamma density defined in \eqref{prior_Theta_d}, for $\ell=\zizm$ and ${d=1,\dots,D}$;
				
				\vspace{.1in}
				\item update $\bth^{\,m}_{\ell d}$ from the Normal density defined in \eqref{prior_Theta_d}, for $\ell=\zizm$ and ${d=1,\dots,D}$;
				
				\vspace{.1in}
				\item update $\lambda^{2\,m}_{\ell d}$ from the truncated Inverse-Gamma density defined in \eqref{prior_lam2}, for $\ell=\zizm$ and ${d=1,\dots,D}$.
				
			\end{enumerate}
			
			\vspace{.1in}
			The modification of $\gamma^m$ executed in this item does not happen in practice in the original algorithm in \cite{jain2007}, because the densities associated to $\bth^{\,m}_{\ell d}$, $\tau^{2\,m}_{\ell d}$ and $\lambda^{2\,m}_{\ell d}$, for $\ell=\zizm$ and ${d=1,\dots,D}$, will cancel in the acceptance rate. However, in the proposed model, the values of $\bth^{\,m}_{\ell d}$, for $\ell=\zizm$ and ${d=1,\dots,D}$, sampled here will be used to compute the ratio of repulsive factors in the acceptance rate \eqref{accept_merge}, while the sampled values of $\bth^{\,m}_{\ell d}$, $\tau^{2\,m}_{\ell d}$ and $\lambda^{2\,m}_{\ell d}$ will update the respective model parameters $\bth_{\ell d}$, $\tau^{2}_{\ell d}$ and $\lambda^{2}_{\ell d}$, for $\ell=\zizm$ and ${d=1,\dots,D}$, if the merge step is accepted.

		\end{itemize}

		\vspace{.2in}
		\item Compute the proposal densities $q(\gamma^m\v\gamma)$ and $q(\gamma\v\gamma^m)$ to be used to calculate the acceptance probability of the proposed state $\gamma^m$.
		
		\begin{itemize}[leftmargin=*]
			
			\vspace{.1in}
			\item $q(\gamma^m\v\gamma)$ is the product of the densities used to update $\gamma^m$ from $\gamma^{L_m}$ in the course of the \textbf{final} restricted Gibbs sampling scan described in \ref{GS_Lm_final}. That is,
			
			\vspace{-.1in}
			\begin{equation}\label{qm_}
				\begin{aligned}
					q(\gamma^m\v\gamma)
					&\,=\,\prod_{d=1}^{D}\,
					f(\bth^{\,m}_{\ell d} \v -)
					\;f(\tau^{2\,m}_{\ell d} \v -)
					\;f(\lambda^{2\,m}_{\ell d} \v -)\\
					&\,\times\,\prod_{d=1}^{D}\;
					f(\bth^{\,m}_{\ziz d} \v \tau^{2\,m}_{\ziz d})\,
					\;f(\tau^{2\,m}_{\ziz d})
					\;f(\lambda^{2\,m}_{\ziz d}),
				\end{aligned}
			\end{equation}
			
			\noindent with $\ell=\ziom$\,, where $f(\bth^{\,m}_{\ell d}\v-)$, $f(\tau^{2\,m}_{\ell d}\v-)$ and $f(\lambda^{2\,m}_{\ell d}\v-)$ are the sampling densities considered in the \textbf{final} restricted Gibbs sampling scan described in \ref{GS_Lm_final}, evaluated at the respective sampled values $\bth^{\,ms}_{\ell d}$\,, $\tau^{2\,m}_{\ell d}$ and $\lambda^{2\,m}_{\ell d}$. %These full conditional densities are also conditioned to the current values of other model parameters that are not elements of the candidate state $\gamma^m$, which means that they are not being updated during the split-merge procedure.
			The probability of the transition from $\bz^{L_m}$ to $\bz^m$ is equal to one because there is a unique way to move between the merged components $\bz^{L_m}$ and $\bz^m$. Then, there is no probability factor related to that transition displayed in \eqref{qm_}. The densities $f(\bth^{\,m}_{\ziz d}\v\tau^{2\,m}_{\ziz d})$, $f(\tau^{2\,m}_{\ziz d})$ and $f(\lambda^{2\,m}_{\ziz d})$ in the second line of \eqref{qm_} represent the Normal, Inverse-Gamma and truncated Uniform prior densities defined in \eqref{prior_lam2} and \eqref{prior_Theta_d}, for each respective parameter. These densities are related to the update from the original parameters of the mixture component $\ziz$ to the new parameters of an empty mixture component.

			\vspace{.2in}
			\item $q(\gamma\v\gamma^m)$ is the product of the densities that would be used to update the elements in $\gamma^{L_s}$ to their original values $\gamma$ in a hypothetical restricted Gibbs sampling scan. That is,

			\vspace{-.1in}
			\begin{equation}\label{q_m}
				\begin{aligned}
					q(\gamma\v\gamma^m)
					&\,=\,
					\prod_{d=1}^{D}\;\prod_{\ell\in\{\zizLs,\,\zioLs\}}\!\!\!
					f(\bth_{\ell d} \v -)
					\;f(\tau^{2}_{\ell d} \v -)
					\;f(\lambda^{2}_{\ell d} \v -)
					\;\times\;\prod_{i\in S}^{}\,
					%\Pr(z_i \v \bz^{L_s},\Bi,\Theta^{s},\blam^{s},\bm{\pi}).
					\Pr(z_i \v -).
				\end{aligned}
			\end{equation}

			\vspace{.1in}
			\noindent where $f(\bth_{\ell d}\v-)$, $f(\tau^{2}_{\ell d}\v-)$ and $f(\lambda^{2}_{\ell d}\v-)$ are the full conditional densities used to {modify $\gamma^{L_s}$ in \ref{Launch}, conditional on} the current values in $\gamma^{L_s}$ (obtained in the last iteration of the \textbf{intermediate} Gibbs sampling scan) and evaluated at the respective original values of $\bth_{\ell d}$\,, $\tau^{2}_{\ell d}$ and $\lambda^{2}_{\ell d}$ in $\gamma$, for $\ell\in\{\zizLs,\,\zioLs\}$ and $d=1,\dots,D$.
			
		\end{itemize}
		
		\vspace{.2in}
		\item Evaluate the $\gamma^m$ proposal by the acceptance probability
		
		\begin{equation}\label{accept_merge}
			a(\gamma^m,\gamma)
			\,=\,
			\left\{1\;,\;
			\frac{q(\gamma\v\gamma^m)}{q(\gamma^m\v\gamma)}\,
			\frac{P(\gamma^m)}{P(\gamma)}\,
			\frac{\prod_{i=1}^{m}\prod_{d=1}^{D}
			\N\left(\bbt_{id}\,; \bth^{\,m}_{z_i^{m} d} \,,\, \lambda^{2\,m}_{z_i^{m} d}\,I_p\right)}
			{\prod_{i=1}^{m}\prod_{d=1}^{D}
			\N\left(\bbt_{id}\,; \bth_{z_i d} \,,\, \lambda^{2}_{z_i d}\,I_p\right)}
			\,\right\},
		\end{equation}

		\vspace{.1in}
		\noindent where $q(\gamma^m\v\gamma)$ and $q(\gamma\v\gamma^m)$ are defined in \eqref{qm_} and \eqref{q_m}, $P(\gamma^m)$ and $P(\gamma)$ denote the joint prior distributions of the elements of $\gamma^m$ and $\gamma$, respectively, with $P(\gamma)$ defined in \eqref{prior_split} and

		\begin{equation}\label{prior_merge_m}
			P(\gamma^m)\,=\,
			\left(\prod_{i=1}^{m} \pi_{z^m_i}\right)
			\left[\prod_{d=1}^{D}
			\mathrm{C}^{-1}_{Jd}
			\left(\prod_{j=1}^{J}
			\N(\bth^{\,m}_{jd} \,;\, \bmu_d,\tau^{2\,m}_{jd}\,K^{-1})
			\;\IG(\tau^{2\,m}_{jd} \,;\, a_\tau,b_\tau)
			\;f(\lambda^{2\,m}_{jd})
			\right)
			h(\Theta^m_d)
			\right],
		\end{equation}

		\noindent where $f(\lambda^{2}_{jd})$ is defined in \eqref{eq_posterior}. Then, we have that

		\vspace{-.1in}
		\begin{equation}\label{prior_merge_rate}
			\begin{aligned}
				\frac{P(\gamma^m)}{P(\gamma)}
				&=\,\frac
				{\displaystyle\prod_{i\in S\cup\{i_0,i_1\}} \pi_{z^m_i}}
				{\displaystyle\prod_{i\in S\cup\{i_0,i_1\}} \pi_{z_i}}
				\\[.05in]
				&\hspace{-.5in}\times\; \prod_{d=1}^{D}\;
				\frac{
					\displaystyle\prod_{j\in\{\ziz,\zio\}}
					\N(\bth^{\,m}_{jd}\,;\,\bmu_d,\tau^{2\,m}_{jd}\,K^{-1})
					\;\IG(\tau^{2\,m}_{jd}\,;\,a_\tau,b_\tau)
					\;f(\lambda^{2\,m}_{jd})
				}
				{	
					\displaystyle\prod_{j\in\{\ziz,\zio\}}
					\N(\bth_{jd}\,;\,\bmu_d,\tau^{2}_{jd}\,K^{-1})
					\;\IG(\tau^{2}_{jd}\,;\,a_\tau,b_\tau)
					\;f(\lambda^{2}_{jd})
				}\\[.1in]
				&\hspace{-.5in}\times\; \prod_{d=1}^{D}\;
				\frac{g(\dist(\bth^{m}_{\ziz d}\,,\bth^{m}_{\zio d}))}
				{g(\dist(\bth_{\ziz d}\,,\bth_{\zio d}))}\,
				\frac{\displaystyle\prod_{j\,\in\,\Arrowvert J\Arrowvert\backslash\{\ziz,\zio\}}
					g(\dist(\bth^{m}_{jd}\,,\bth^{m}_{\ziz d}))\,
					g(\dist(\bth^{m}_{jd}\,, \bth^{m}_{\zio d}))}
				{\displaystyle\prod_{j\,\in\,\Arrowvert J\Arrowvert\backslash\{\ziz,\zio\}}
					g(\dist(\bth_{jd}\,,\bth_{\ziz d}))\,
					g(\dist(\bth_{jd}\,,\bth_{\zio d}))}
			\end{aligned}
		\end{equation}
		
		\vspace{.1in}
		\noindent In \eqref{prior_merge_rate}, all density components and $g$ functions that are not related to mixture components participating in the computation of the split proposal cancel. Note that the factors $f(\bth^{\,m}_{\ziz d} \v \tau^{2\,m}_{\ziz d})\;f(\tau^{2\,m}_{\ziz d})\;f(\lambda^{2\,m}_{\ziz d})$ in \eqref{qm_} and $\N(\bth^{\,m}_{\ziz d}\,;\,\bmu_d,\tau^{2\,m}_{\ziz d}\,K^{-1})$ \;$\IG(\tau^{2\,m}_{\ziz d}\,;\,a_\tau,b_\tau)\;f(\lambda^{2\,m}_{\ziz d})$ in the numerator of the second line of \eqref{prior_merge_rate} cancel out in the acceptance rate in \eqref{accept_merge}. However, the proposal sampled values $\bth^{\,m}_{\ziz d}$\,, $\tau^{2\,m}_{\ziz d}$\, and $\lambda^{2\,m}_{\ziz d}$\,, for $d=1,\dots,D$, are still part of the acceptance rate in \eqref{accept_merge}, being used in the calculation of the ratio of repulsive factors in the last line of \eqref{prior_merge_rate}. Finally, the likelihood ratio in \eqref{accept_merge} is given by
		
		%\vspace{.2in}

%		\begin{equation}\label{likelihood_merge_rate}
%		\begin{aligned}
%		&\frac{\displaystyle\prod_{i=1}^{m}\,\prod_{d=1}^{D}
%		\N\left(\bbt_{id}\,;\bth^{\,m}_{z_i^{m} d}\,,\,\lambda^{2\,m}_{z_i^{m} d}I_p\right)}
%		{\displaystyle\prod_{i=1}^{m}\,\prod_{d=1}^{D}
%		\N\left(\bbt_{id}\,; \bth_{z_i d} \,,\, \lambda^{2}_{z_i d}I_p\right)}\\
%		&\;=\;
%		%
%		\frac{
%			\displaystyle\prod_{i\in S\cup\{i_0,i_1\}}\prod_{d=1}^{D}
%			\N\left(\bbt_{id}\,;\bth^{\,m}_{z_i^{m} d}\,,\,\lambda^{2\,m}_{z_i^{m} d}I_p\right)
%		}
%		{
%			\displaystyle\prod_{i:z_i=\ziz}\prod_{d=1}^{D}
%			\N\left(\bbt_{id}\,; \bth_{z_i d} \,,\, \lambda^{2}_{z_i d}I_p\right)
%			\displaystyle\prod_{i:z_i=\zio}\prod_{d=1}^{D}
%			\N\left(\bbt_{id}\,; \bth_{z_i d} \,,\, \lambda^{2}_{z_i d}I_p\right)
%		}
%		\end{aligned}
%		\end{equation}
				
		\begin{equation}\label{likelihood_merge_rate}
		\hspace{-1.5cm}\begin{aligned}
			\frac{\displaystyle\prod_{i=1}^{m}\,\prod_{d=1}^{D}
				\N\left(\bbt_{id}\,;\bth^{\,m}_{z_i^{m} d}\,,\,\lambda^{2\,m}_{z_i^{m} d}I_p\right)}
			{\displaystyle\prod_{i=1}^{m}\,\prod_{d=1}^{D}
				\N\left(\bbt_{id}\,; \bth_{z_i d} \,,\, \lambda^{2}_{z_i d}I_p\right)}
			\;=\;
			\frac{
				\displaystyle\prod_{i\in S\cup\{i_0,i_1\}}\prod_{d=1}^{D}
				\N\left(\bbt_{id}\,;\bth^{\,m}_{z_i^{m} d}\,,\,\lambda^{2\,m}_{z_i^{m} d}I_p\right)
			}
			{
				\displaystyle\prod_{i:z_i=\ziz}\prod_{d=1}^{D}
				\N\left(\bbt_{id}\,; \bth_{z_i d} \,,\, \lambda^{2}_{z_i d}I_p\right)
				\displaystyle\prod_{i:z_i=\zio}\prod_{d=1}^{D}
				\N\left(\bbt_{id}\,; \bth_{z_i d} \,,\, \lambda^{2}_{z_i d}I_p\right)
			},
		\end{aligned}
		\end{equation}

		\noindent where the likelihood factors related to the individuals $i$ such that $z_i\ne\ziz$ and $z_i\ne\zio$ cancel.
		
		\vspace{-.1in}
		\begin{flushright}$\qed$\end{flushright}
		
	\end{enumerate}

\end{enumerate}

Next, we will discuss some key aspects of the proposed split-merge algorithm we just described. If we are not considering repulsion in the model, which is equivalent to assume $h(\Theta_d)=1$, $d=1,\dots,D$, the last line of \eqref{prior_split_rate} and \eqref{prior_merge_rate} must be removed. For the MFPPMx model, that has been defined without a repulsive dependency assumption between the cluster-specific parameter vectors in $\Theta_d$, $d=1,\dots,D$, the last line of \eqref{prior_split_rate} and \eqref{prior_merge_rate} is always removed. Another difference between the split-merge step executed to the MFPPMx and MFPPMx models arises from the fact that the prior and, consequently, the full conditional distributions of the allocation vector $\bz$, both discussed in Section \ref{supp:mfppmx} of this supplementary material, depend on nonparametric cluster cohesions and nonparametric similarity functions, which have to be recomputed in each iteration of the Gibbs sampling scan. Besides, the prior and full conditional distributions of $\bz$ in the MFRMMx depend on the parametric mixing weights $\bm{\pi}$ and the split-merge step is executed conditional on the current mixing weights values, that are not updated during the Gibbs sampling scan. Finally, the third item in 5.\ref{GS_Lm_final}, that generates proposal parameters for the mixture componet $\zizm$, that will be proposed empty, does not need to be executed in the case of the MFPPMx model, because that proposal parameters are only used to compute the ratios of repulsive factors in the last line of \eqref{prior_split_rate} and \eqref{prior_merge_rate}, which does not apply to the MFPPMx model.

%\newpage
\subsection{Adaptive rejection sampling method}
\label{supp:ARS}

Although the some versions of adaptive rejection sampling (ARS) introduced by \cite{gilks1992}  are available (see, for instance, the \texttt{R} packages \texttt{ars} and \texttt{armspp}), when  integrated into the \texttt{C++} implementation of our proposed MCMC algorithm  to sample from the distribution \eqref{full_lam2} these methods presented  numerical problems,  mainly, when  lower values of the truncation parameter $A_d$ were considered. {This motivated us to produce} our own \texttt{C++} implementation of the ARS method, specifically to sample from the full conditional distribution \eqref{full_lam2}. This implementation is available at \url{https://github.com/rcpedroso/MHRMMx}. To this end, we consider the following  parameter transformation that satisfies the log-concave density requirement of the ARS method.

Let  $X$  be a random variable with a  Inverse-Gamma distribution truncated at the interval $(0,L)$, with shape parameter $a>0$ and scale parameter $b>0$ and pdf given by

\begin{equation}\label{f_ig_trunc}
%	f(x) \propto\; x^{-(a+1)} \exp\left(-\frac{b}{x}\right)
	f(x) \propto\; x^{-(a+1)}\, \me^{-b/x}\,
	\bm{1}(\{0 < x < L\}),
\end{equation}

\noindent where $\bm{1}(A)$ denotes the indicator function of event $A$.

\begin{proposition}\label{prop_ars}
If a random variable $X$ has an Inverse-Gamma distribution truncated at the interval $(0,L)$, with shape parameter $a>0$ and scale parameter $b>0$, then the density function of the random variable ${W=\log(\xi X+1)}$, with $\xi>0$, is log-concave if ${\xi>(a+1)/b}-2/L$.

\begin{proof}
Since the logarithm function is strictly increasing, the random variable $W$ will assume values in the interval $(0,\log(\xi L+1))$. Given that $X=\displaystyle(\me^W-1)/\xi$,  for $w\in(0,\log(\xi L+1))$, the pdf  of $W$ is

\begin{equation*}
	\begin{aligned}
		f(w)
		%
%		\propto\;&
%		(\exp(w)-1)^{-(a+1)}
%		\exp\left(-\frac{b\xi}{\exp(w)-1}\right)
%		\left|\frac{d}{dw}\frac{\exp(w)-1}{\xi}\right|\\[.2in]
%		%
%		\propto\;&
%		(\exp(w)-1)^{-(a+1)}
%		\exp\left(w-\frac{b\xi}{\exp(w)-1}\right)\\[.1in]
		%
		\propto\;&
		(\me^w-1)^{-(a+1)}
		\exp\left(-\frac{b\xi}{\me^w-1}\right)
		\left|\frac{d}{dw}\frac{\me^w-1}{\xi}\right|\\[.1in]
		\propto\;&
		(\me^w-1)^{-(a+1)}
		\exp\left(w-\frac{b\xi}{\me^w-1}\right),
	\end{aligned}
\end{equation*}

Consequently, the log-density of $W$ and its second derivative, respectively, are

\begin{equation*}
	\begin{aligned}
		\log f(w)\propto\;&
		-(a+1)\log(\me^w-1) + w - \frac{b\xi}{\me^w-1}\\[.1in]
	\end{aligned}
\end{equation*}

\vspace{-.3in}
\noindent and
\vspace{-.1in}

\begin{equation*}
	\begin{aligned}
		\frac{d^2}{dw^2}\log f(w)\propto\;&
		(a+1)\frac{\me^w}{(\me^w-1)^2}
		- b\xi\me^w\frac{\me^w+1}{(\me^w-1)^3}.\\[.1in]
	\end{aligned}
\end{equation*}

\vspace{-.2in}
\noindent Then,

\vspace{-.1in}
\begin{equation*}
	\begin{aligned}
		\frac{d^2}{dw^2}\log f(w) < 0
		\;\iff\; &(a+1) - b\xi\, \frac{\me^w+1}{\me^w-1} < 0\\[.1in]
		\;\iff\; &\frac{a+1}{b} < \xi\, \frac{\me^w+1}{\me^w-1}\\[.1in]
		\;\iff\; &\xi > \frac{a+1}{b} \cdot \frac{\me^w-1}{\me^w+1}.\\[.1in]
	\end{aligned}
\end{equation*}

\vspace{-.2in}
\noindent Let $h(w)=(\me^w-1)/(\me^w+1)$. Because $\displaystyle\frac{d}{dw}h(w) = 2\me^w/(\me^w+1)^2 > 0$, we have that $h(w)$ is a strictly increasing function. Then, for $w\in(0,\log(\xi L+1))$, $h(w)$ has a maximum value equal to $\xi L/(\xi L+2)$  for $w=\log(\xi L+1)$. Therefore, $f(w)$ is log-concave if 
\begin{equation*}
	\xi > \frac{a+1}{b} \cdot \frac{\xi L}{\xi L+2},
\end{equation*}
or, equivalently, if 
\begin{equation}\label{xi_limit}
	\xi > \frac{a+1}{b} - \frac{2}{L}.
\end{equation}

\end{proof}

\end{proposition}

\noindent In our ARS implementation, we assume $\xi=(a\!+\!1)/b$, which satisfies \eqref{xi_limit}.

%\section{Right-truncated distribution of $\sigma_i$}\label{app_ars_sig}
%
%
%\begin{proposition}
%If a random variable $X$ has a density function given by
%
%\begin{equation}\label{f_sig_trunc}
%	f(x) \propto\; x^{-n}\, \exp\left(-\frac{a}{2}x^{-2} - bx^{-1}\right)\,
%	\bm{1}(\{0 < x < L\}),\\[.1in]
%\end{equation}
%
%\noindent where $\bm{1}(A)$ denotes the indicator function of event $A$, $n\in\mathbb{N}$ and $a,b\in\mathbb{R}$, then $W=X^{-1}$ is a random variable with a log-concave density if $n\ge 2$ and $a>0$\footnote{Prove that $a$ in \eqref{f_sig_trunc} is always positive}.
%
%\begin{proof}
%	
%Given that $X=W^{-1}$, then $W$ has density function given by
%
%\begin{equation*}
%	\begin{aligned}
%		f(w)
%		%
%		\propto\;&
%		w^{n} \exp\left(-\frac{a}{2}w^{2} - bw\right)
%		\left|\frac{d}{dw}w^{-1}\right|\\[.2in]
%		%
%		\propto\;&
%		w^{n-2} \exp\left(-\frac{a}{2}w^{2} - bw\right)\\[.1in]
%	\end{aligned}
%\end{equation*}
%
%\noindent if $W\in(1/L,\infty)$ and 0 otherwise. It follows that the log-density of $W$ and its second derivative are, respectively,
%
%\begin{equation*}
%	\begin{aligned}
%		\log f(w)\propto\;&
%		(n-2)\log(w) -\frac{a}{2}w^{2} - bw\\[.1in]
%	\end{aligned}
%\end{equation*}
%
%\noindent and
%
%\vspace{-.1in}
%\begin{equation*}
%	\begin{aligned}
%		\frac{d^2}{dw^2}\log f(w)\propto\;&
%		-\frac{n-2}{w^2}-a.\\[.1in]
%	\end{aligned}
%\end{equation*}
%
%\noindent Then, $\displaystyle\frac{d^2}{dw^2}\log f(w) <0$ if $n\ge 2$ and $a>0$.
%
%	
%\end{proof}
%	
%\end{proposition}

%\newpage
\section{Multivariate functional PPM\lowercase{x} model (MFPPM\lowercase{x})}\label{supp:mfppmx}

In this section we extend the HPPMx model introduced by \cite{page2015} to the multivariate case. The MFPPMx model identifies clusters of{experimental units} represented by functional curves based on curve shape similarity and individual-specific covariates. It is important to point out that while  the HPPMx model was  defined to cluster possibly unfinished data sequences, the MFPPMx model is restricted to the case of completely observed data sequences. 
 
Unlike the MFRMMx model in \eqref{eq_model}, the MFPMMx model assumes a random partition $\rho$ of the set of observation indexes $\{1,\dots,m\}$ into a random number $J$ of  clusters $S_j$, such that $i\in S_j$ means that individual $i$ is allocated to cluster $S_j$.  Considering the same notation defined  in Section \ref{sec:model} for the MFRMMx model,  for all $i=1,\dots,m$, $j=1,\dots,J$ and $d=1,\dots,D$, the  MFPPMx model is hierarchically defined by

\begin{equation}\label{eq_mfppmx}
	%\centering
	%\left.
	\begin{aligned}
		\vect(\Yi) \v \Bzi,\Bi,\Si &
		\;\overset{\text{ind}}{\sim}\;
		\N_{n_i D}(\vect(\Bzi \!+\! H_i\Bi) \,,\, \Si\otimes I_{n_i}),
		\\%[.0in]
		\Si &
		\;\overset{\text{iid}}{\sim}\;
		\IW(\Sigma_0,\omega),
		\\[.0in]
		\beta_{0id} \v \mu_{0d},\sigma_{0d}^{2} &
		\;\overset{\text{ind}}{\sim}\;
		\N(\mu_{0d},\sigma_{0d}^{2})
		\quad\text{with}\;\;
		\mu_{0d}\!\sim \N(0,s^2_0)
		\;\;\text{and}\;\;
		\sigma_{0d}^{2} \!\sim \IG\left(a_0,b_0\right),
		%\sqrt{\sigma^2_i} \sim \U(0,A_\sigma)
		\\[.0in]
		\bbt_{id} \v z_i,\bth_{z_i d},\lambda^{2}_{z_i d} &
		\;\overset{\text{ind}}{\sim}\;
		\N_p(\bth_{z_i d},\lambda^{2}_{z_i d} I_p)
		\quad\text{with}\;\;
		\sqrt{\lambda^{2}_{jd}}\;\overset{\text{iid}}{\sim}\;\U(0,A_d),
		\\[.0in]
		%
		%			f(\bth_{jd} \v \bm{\mu}_d,\tau^2_{jd}) &
		%			%\;\propto\;
		%			\;\overset{\text{ind}}{\propto}\;
		%			\N_p\left(\bm{\mu}_d,\tau^2_{jd} K^{-1}\right)
		%			h\!\left(\Theta^{(d)}\right)
		%			\\[.05in]
		%			%
		\Theta_d\,,\btau_d \v \bm{\mu}_d &
		%\;\propto\;
		\;\overset{\text{ind}}{\propto}\;
		\prod_{j=1}^{J}
		\N_p(\bth_{jd} \,;\, \bm{\mu}_d,\tau^2_{jd} K^{-1})
		\;\IG(\tau^2_{jd} \,;\, a_\tau,b_\tau),
		\\%[.05in]
		\bm{\mu}_d &
		\;\overset{\text{iid}}{\sim}\;
		\N_p(\bm{0},s^2_\mu I_p),
		\\[.0in]
		%
		%			\tau^2_{jd} &
		%			\;\overset{\text{iid}}{\sim}\;
		%			\IG(a_\tau,b_\tau)
		%			\\[.05in]
		%			%
		\Pr(\rho=\{S_1,\dots,S_J\}) &\;\propto\; \prod_{j=1}^{J}c(S_j)g(\mathcal{X}_j),
		\\[.0in]
		%
	%	&\hspace{-1.5in}\text{for $i=1,\dots,m$, $j=1,\dots,J$ and $d=1,\dots,D$}.
		\\[-.3in]
	\end{aligned}
	%\right\}
\end{equation}

 \noindent for $i\!=\!1,\dots,m$, $j\!=\!1,\dots,J$ and $d\!=\!1,\dots,D$. The MFPPMx model assumes that $\rho$ has a PPMx prior distribution \citep{muller2011} which is a covariate-dependent product distribution determined by cohesion functions $c(S_j)\ge0$ and similarity functions $g(\mathcal{X}_j)\ge0$, where $\mathcal{X}_j$ represents the set of covariate values of the individuals allocated to cluster $j$. The cohesion functions $c(S_j)$, $j=1,\dots, J$, {indicates how likely the elements in $S_j$ are thought co-cluster} \textit{a priori}. The similarity functions $g(\mathcal{X}_j)$ make individuals with similar covariate values more likely to co-cluster, \textit{a priori}. A detailed discussion about similarity functions for continuous or categorical covariates can be found in \cite{muller2011} and \cite{page2018}. The cohesion and the similarity functions (for continuous and categorical covariates) assumed by the MFPPMx model are

\begin{equation}\label{prior_rho_c}
	c(S_j)=M\times(n_j-1)!
	\;,\quad j=1,\dots,J,
\end{equation}

\noindent and

\begin{equation}\label{prior_rho_g}
	g(\mathcal{X}_j) = \prod_{\ell=1}^{\Ell}g_\ell(\bm{x}_{j,\ell})
	\;,\quad j=1,\dots,J,
\end{equation}

\noindent respectively, where $M=1$,  $n_j$ is the number of individuals in the cluster $j$ and $\bm{x}_{j,\ell}=(x_{i,\ell})_{i\in S_j}$ is the vector of values $x_{i,\ell}$ of the $\ell$th-covariate for the individuals $i$ allocated to cluster $j$, for $\ell=1,\dots,\Ell$. In particular, the MFPPMx assumes

\begin{equation}\label{prior_rho_g_con}
	\begin{aligned}
		g_\ell(\bm{x}_{j,\ell}) &= \!\int\prod_{i\in S_j}
		\N(x_{i,\ell} \,;\, m_j,1)
		\N(m_j \,;\, 0,10)
		\d m_j\\
		&=
		\frac{s}{\sqrt{10(2\pi)^{n_j}}}
		\exp\bigg\{
		\!-0.5\bigg[\sum_{i\in S_j}x_{i,\ell}^2 - s^2\bigg(\sum_{i\in S_j}x_{i,\ell}\bigg)^2\bigg]
		\bigg\}
	\end{aligned}
\end{equation}

\noindent if the $\ell$th-covariate is continuous, with $s^2=\big(n_j+10^{-1}\big)^{-1}$, and

\begin{equation}\label{prior_rho_g_cat}
	\begin{aligned}
		g_\ell(\bm{x}_{j,\ell}) &= \!\int
		\bigg( \prod_{i\in S_j} \prod_{c\in\mathcal{C}_\ell} \xi_{c}^{\bm{1}(x_{i,\ell}=c)}\bigg)
		\Dir(\bm{\xi} \,;\, \{\gamma_c\}_{c\in\mathcal{C}_\ell})
		\d\bm{\xi}
		\\[.1in]
		&=
		\frac
		{\Gamma(\sum_{c\in\mathcal{C}_\ell}\gamma_c)}
		{\prod_{c\in\mathcal{C}_\ell}\Gamma(\gamma_c)}
		\cdot\frac
		{\prod_{c\in\mathcal{C}_\ell}\Gamma(n_{j,c}+\gamma_c)}
		{\Gamma(n_j + \sum_{c\in\mathcal{C}_\ell}\gamma_c)}
	\end{aligned}
\end{equation}

\noindent if the $\ell$th-covariate is categorical, where $\mathcal{C}_\ell$ is the set of categories of the $\ell$th-covariate, $\bm{\xi}=(\xi_{c})_{c\in\mathcal{C}_\ell}$, with $\xi_{c}\in(0,1)$ for all $c\in\mathcal{C}_\ell$ and $\sum_{c\in\mathcal{C}_\ell}\xi_{c}=1$, and $n_{j,c}=\sum_{i\in S_j}\bm{1}(x_{i,\ell}=c)$ for all $c\in\mathcal{C}_\ell$.

The MFPPMx in \eqref{eq_mfppmx} and the MFRMMx model in \eqref{eq_model}  assume the same observational model and the same prior distributions for almost all parameters, except for  the allocation of individuals to clusters
for which we consider a random partition $\rho$ and for $\Theta_d$, $d=1,\dots,D$. The MFPPMx assumes that, for each dimension $d$, the prior distribution for the location parameter vectors $\bth_{1d},\dots,\bth_{Jd}$ in $\Theta_d$ does not include repulsion or any other type of dependency by assuming  $\bth_{jd}\v\bm{\mu}_d,\tau^{2}_{jd} \overset{\text{iid}}{\sim} \N_p\left(\bm{\mu},\tau^{2}_j K^{-1}\right)$ for $j=1,\dots,J$. For the parameters in $\bth_{jd}=(\theta_{jd,1},\dots,\theta_{jd,p})^\T$, we assume a first-order random walk  $\theta_{jd,\ell}=\theta_{jd,\ell-1}+u_{jd,\ell}$\,, for $\ell=2,\dots,p$, and $\theta_{jd,1}\sim\N(0,\tau^{2}_{jd})$, which  defines the matrix $K$. Assuming autocorrelation between adjacent elements of B-splines coefficients characterizes the Bayesian penalized B-splines, the P-splines \citep{lang2004, page2015}. {This strategy allows for a better control of the influence of} number and position of the knots of B-splines in the shape and smoothness of the curve. The P-spline method permits desirable smoothness to be achieved under a moderate number of equidistant knots. A small number of knots may result in excessively smooth curve shapes, while a large number of knots may cause overfitting. Additionally, treating the number and positions of the B-spline knots as unknown, parallel to a clustering estimation procedure, could be {computationally expensive.} 
 
 %Regarding the allocation of individuals into clusters, while the proposed model assumes covariate-dependent mixing weights $\pi_j(\bm{x}_i\v\ba_j)$ as the prior probability of the individual $i$ belongs to cluster $j$, in the MFPPMx the covariate-dependet product prior distribution assumed over the partition $\rho$ implies a covariate-dependet prior distribution on the allocation vector $\bm{z}$.

Based on \eqref{eq_mfppmx}, the joint posterior distribution of the MFPPMx model is given by

\begin{equation}\label{eq_posterior_MFPPMx}
	\begin{aligned}
		f(\bm{\Psi} \v \mathcal{Y}_1,\dots,\mathcal{Y}_m,\bm{x}_1,\dots,\bm{x}_m)
		%		f(\bm{\Psi} \v \bY,\bm{x}_1,\dots,\bm{x}_m)
		&\propto\;
		\Bigg[\prod_{i=1}^{m}
		\,\N_{n_i D}(\vect(\Yi) \,;\, \vect(\Bzi \!+\! H_i\Bi) \,,\, \Si)
		\Bigg]\\[.1in]
		&\hspace{-1.8in}\times\;
		\Bigg[\prod_{i=1}^{m}
		\,\IW(\Si \,;\, \Sigma_0,\omega)
		\Bigg]
		\times
		\Bigg[\prod_{d=1}^{D}
		\Bigg(\prod_{i=1}^{m}\,\N(\beta_{0id} ;\, \mu_{0d},\sigma^{2}_{0d})\Bigg)
		\,\N(\mu_{0d} ;\, 0,s^2_0)
		\,\IG(\sigma^{2}_{0d} ;\, a_0,b_0)
		\Bigg]
		\\[.1in]
		&\hspace{-1.8in}\times\;
		\Bigg[\prod_{i=1}^{m}\prod_{j=1}^{J}
		\left(\,\prod_{d=1}^{D}
		\N_p(\bbt_{id} ;\, \bth_{jd} \,, \lambda^{2}_{jd} I_p)
		\right)^{\bm{1}(z_i=j)}
		\Bigg]
		\times
		\Bigg[\prod_{j=1}^{J}
		c(S_j)g(\mathcal{X}_j)
		\Bigg]
		\\[.1in]
		&\hspace{-1.8in}\times\;
		\Bigg[\prod_{d=1}^{D}
		\Bigg(\prod_{j=1}^{J}
		\,\N_p(\bth_{jd} ;\, \bm{\mu}_d ,\, \tau^2_{jd} K^{-1})\,
		\,f(\lambda^{2}_{jd})\,
		\,\IG(\tau^2_{jd} ;\, a_\tau,b_\tau)
		\Bigg)
		%h\!\left(\Theta_d\right)
		\N_p(\bm{\mu}_d ;\, \bm{0},s^2_\mu I_p)
		\Bigg],
		%\\[.1in]
	\end{aligned}
\end{equation}

\noindent where $\bm{\Psi}$ represents here the set of all the parameters of the MFPPMx model. The MCMC algorithm presented in Section \ref{supp:mcmc} of this supplementary material can be used to sample from the joint posterior distribution in \eqref{eq_posterior_MFPPMx}. The MCMC algorithm  changes only in the split-merge step (see Section \ref{supp:SM} of this supplementary material) and in the sample of the cluster allocation vector $\bm{z}$. Based on \cite{page2015}, the samples from $\bm{z}$ can be done in the MCMC algorithm considering the following multinomial weights:

%\begin{equation}\label{full_zi_Int_MFPPMx}
%	\begin{aligned}
%		&\hspace{-2.52cm}\Pr(z_i\!=\!j \v -)\propto\!
%		\;|L_j|^{-\frac{1}{2}}
%		\exp\left\{-\frac{1}{2}\bigg[
%		\vect(\Theta_j)^\T L_j^{-1}\vect(\Theta_j)-m_i^\T V_i \,m_i
%		\bigg]\right\}
%		|V_i|^\frac{1}{2}
%		\,\frac
%		{c(S_j^{-i}\!\cup\{i\})\,g(\mathcal{X}_j^{-i}\!\cup \{\bm{x}_i\})}
%		{c(\{i\})\,g(\{\bm{x}_i\})},\\[.1in]
%		&\hspace{-.5in}\text{for $j\!=\!1,\dots,J$,}\\[.1in]
%		\Pr(z_i\!=\!J\!+\!1 \v -)
%		&\;\propto\;
%		|L_\ell|^{-\frac{1}{2}}
%		\exp\left\{-\frac{1}{2}\bigg[
%		\vect(\Theta_\ell)^\T L_\ell^{-1}\vect(\Theta_\ell)-m_i^\T V_i\,m_i
%		\bigg]\right\}
%		|V_i|^\frac{1}{2}
%		\,c(\{i\})g(\{\bm{x}_i\}),\\[.1in]
%		&\hspace{-.6in}\quad\text{with } \ell\!=\!J\!+\!1,
%	\end{aligned}
%\end{equation}

\begin{equation}\label{full_zi_Int_MFPPMx}
	\hspace{-0.8cm}\begin{aligned}
		\Pr(z_i\!=\!j \v -)
		&\propto\!
		\;|L_j|^{-\frac{1}{2}}
		\exp\left\{-\frac{1}{2}\bigg[
		\vect(\Theta_j)^\T L_j^{-1}\vect(\Theta_j)-m_i^\T V_i \,m_i
		\bigg]\right\}
		|V_i|^\frac{1}{2}
		\,\frac
		{c(S_j^{-i}\!\cup\{i\})\,g(\mathcal{X}_j^{-i}\!\cup \{\bm{x}_i\})}
		{c(\{i\})\,g(\{\bm{x}_i\})},
	\end{aligned}
\end{equation}

\noindent for $j=1,\dots,J$, and

\begin{equation}\label{full_zi_Int_MFPPMx_new}
	\begin{aligned}
		\Pr(z_i\!=\!J\!+\!1 \v -)
		&\;\propto\;
		|L_\ell|^{-\frac{1}{2}}
		\exp\left\{-\frac{1}{2}\bigg[
		\vect(\Theta_\ell)^\T L_\ell^{-1}\vect(\Theta_\ell)-m_i^\T V_i\,m_i
		\bigg]\right\}
		|V_i|^\frac{1}{2}
		\,c(\{i\})g(\{\bm{x}_i\}),
	\end{aligned}
\end{equation}

\noindent with $\ell\!=\!J\!+\!1$, for $i=1,\dots,m$, where $J$ is the number of clusters at the currently imputed state, $m_i$ and $V_i$ are as defined in \eqref{full_zi_Int}, the cohesion function $c$ and the similarity function $g$ are defined in \eqref{prior_rho_c} and \eqref{prior_rho_g}, respectively, $S_j^{-i}$ is the cluster $j$ with individual $i$ removed, $\mathcal{X}_j^{-i}$ represents the covariates for those individuals in cluster $j$ with the covariates of individual $i$, denoted by $\bm{x}_i$, removed, and $J\!+\!1$ denotes the index of an empty candidate cluster with parameters being sampled from their respective prior distributions. The full conditional distribution presented in \eqref{full_zi_Int_MFPPMx} and \eqref{full_zi_Int_MFPPMx_new} is a multivariate version of the full conditional distribution of $\bm{z}$ considered by \cite{page2015}, with the matrices $\Bi$, $i=1,\dots,m$, integrated out.

%\newpage
\newpage
\FloatBarrier
\section{Additional figures from the simulation study}\label{supp:sim4}

This section presents additional figures related to the simulated data of Section \ref{sec:simulation}.

\begin{figure}[!h]
\centering
	{$d=1$\hspace{2.4cm}$d=2$\hspace{2.4cm}$d=3$\hspace{2.4cm}$d=4$}\\[-.25in]
	%%% j = 1
	\raisebox{1.4cm}{\rotatebox[origin=c]{90}{$j=1$}}
	\hspace{-.7cm}\includegraphics[width=4.0cm, height=3.6cm, trim=0cm 0cm 0 0]{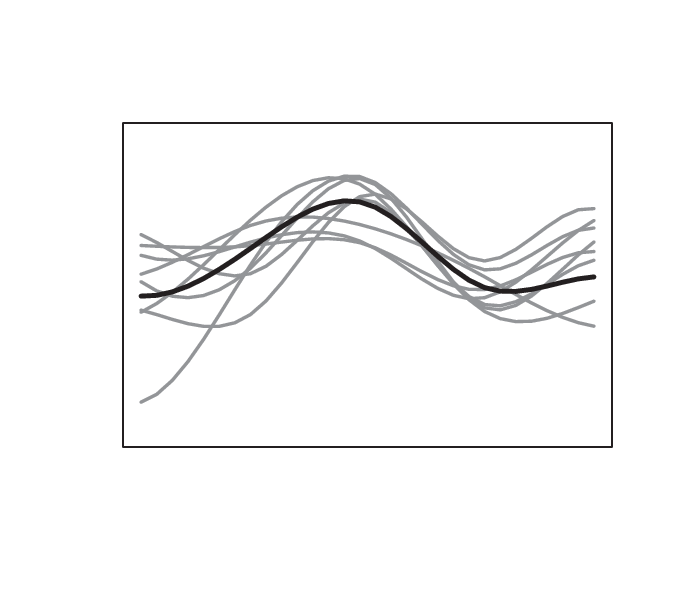}
	\label{fig_sim4_indiv_curves_d1_j1}\hspace{-.37in}
	\includegraphics[width=4.0cm, height=3.6cm, trim=0cm 0cm 0 0]{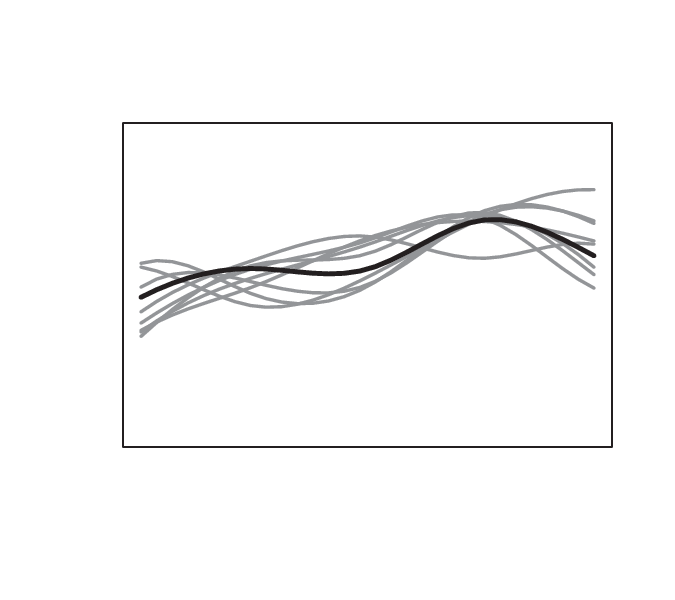}
	\label{fig_sim4_indiv_curves_d2_j1}\hspace{-.37in}
	\includegraphics[width=4.0cm, height=3.6cm, trim=0cm 0cm 0 0]{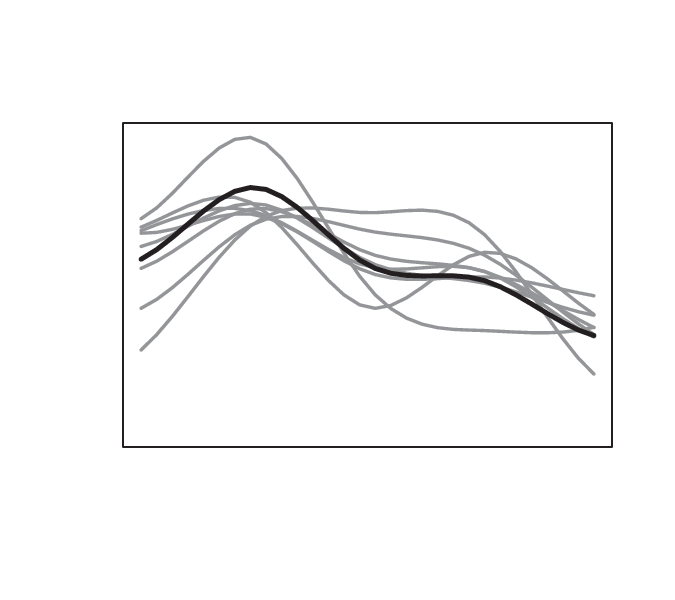}
	\label{fig_sim4_indiv_curves_d3_j1}\hspace{-.37in}
	\includegraphics[width=4.0cm, height=3.6cm, trim=0cm 0cm 0 0]{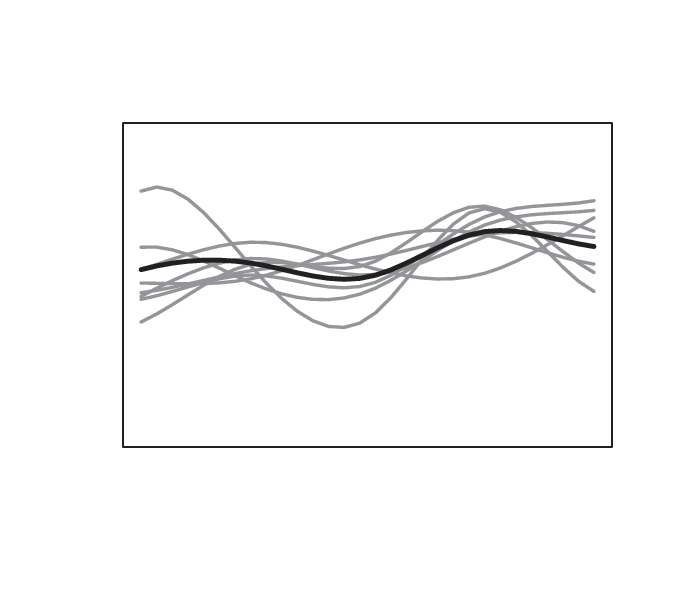}
	\label{fig_sim4_indiv_curves_d4_j1}
	\\[-.55in]
	%%% j = 2
	\raisebox{1.8cm}{\rotatebox[origin=c]{90}{$j=2$}}
	\hspace{-.7cm}\includegraphics[width=4.0cm, height=3.6cm, trim=0cm 0cm 0 0]{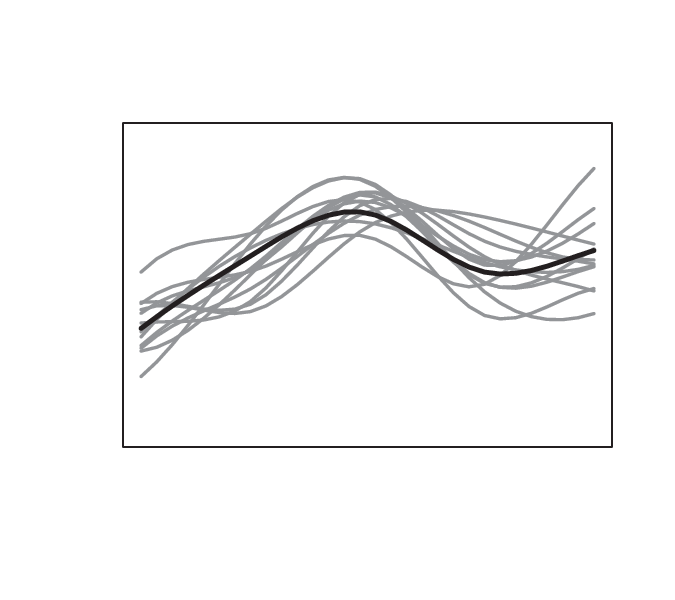}
	\label{fig_sim4_indiv_curves_d1_j2}\hspace{-.37in}
	\includegraphics[width=4.0cm, height=3.6cm, trim=0cm 0cm 0 0]{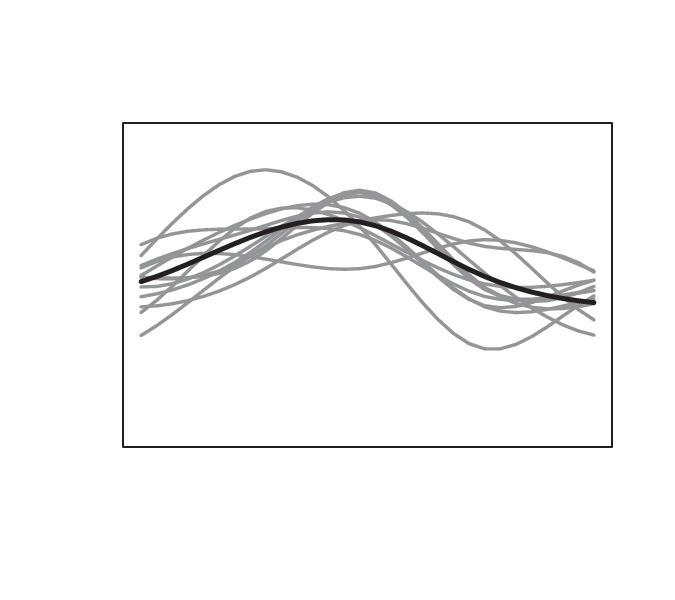}
	\label{fig_sim4_indiv_curves_d2_j2}\hspace{-.37in}
	\includegraphics[width=4.0cm, height=3.6cm, trim=0cm 0cm 0 0]{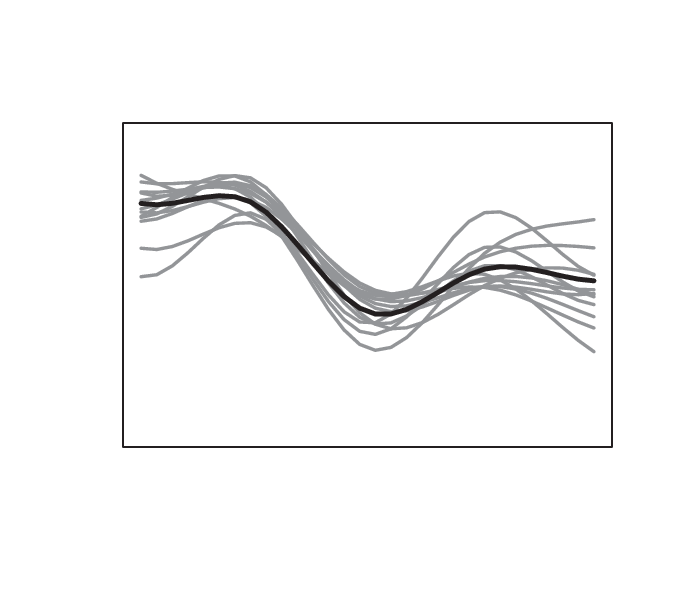}
	\label{fig_sim4_indiv_curves_d3_j2}\hspace{-.37in}
	\includegraphics[width=4.0cm, height=3.6cm, trim=0cm 0cm 0 0]{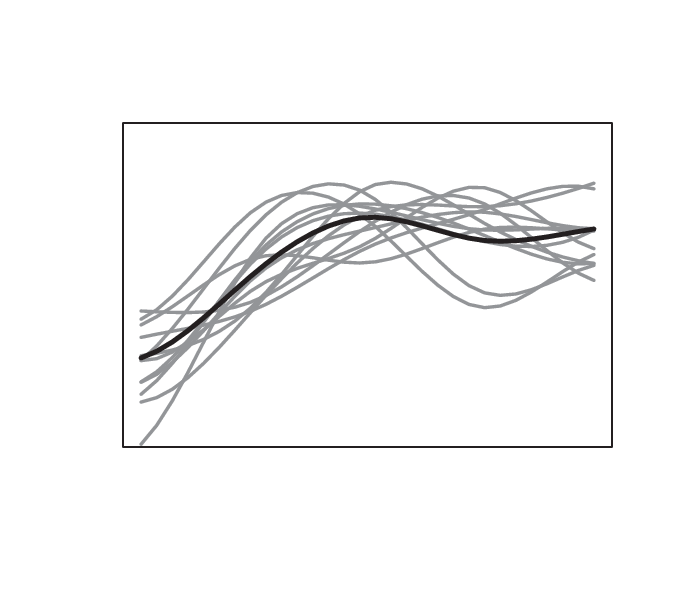}
	\label{fig_sim4_indiv_curves_d4_j2}
	\\[-.55in]
	%%% j = 3
	\raisebox{1.8cm}{\rotatebox[origin=c]{90}{$j=3$}}
	\hspace{-.7cm}\includegraphics[width=4.0cm, height=3.6cm, trim=0cm 0cm 0 0]{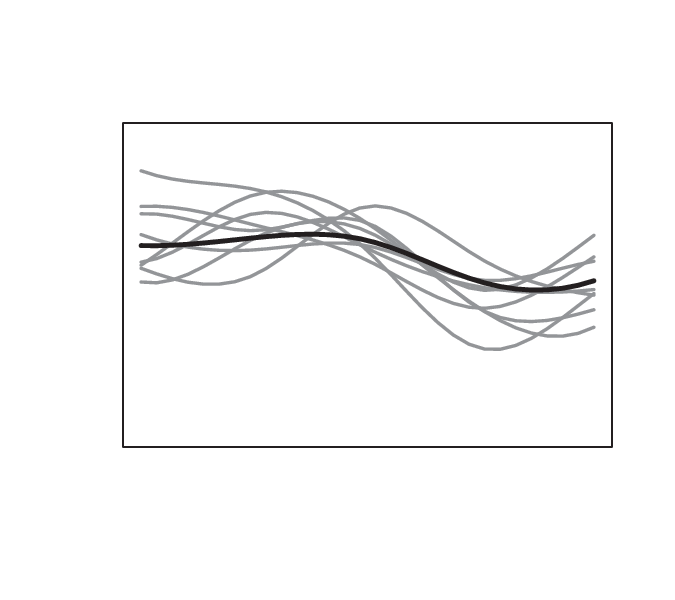}
	\label{fig_sim4_indiv_curves_d1_j3}\hspace{-.37in}
	\includegraphics[width=4.0cm, height=3.6cm, trim=0cm 0cm 0 0]{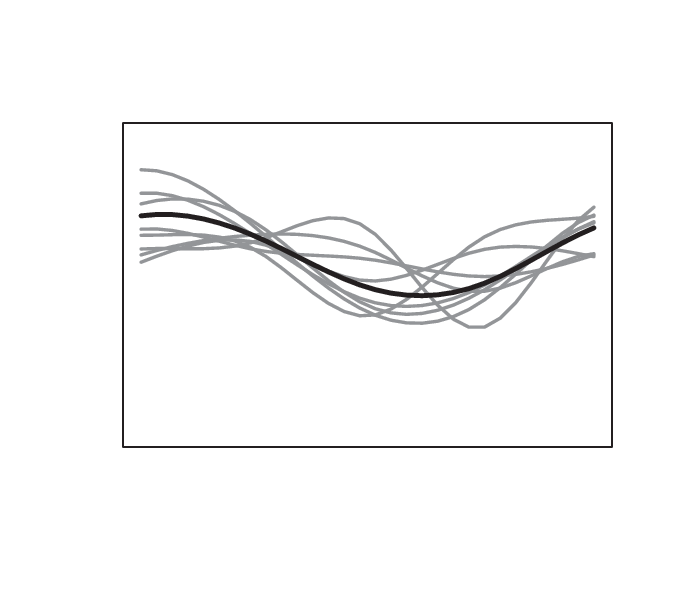}
	\label{fig_sim4_indiv_curves_d2_j3}\hspace{-.37in}
	\includegraphics[width=4.0cm, height=3.6cm, trim=0cm 0cm 0 0]{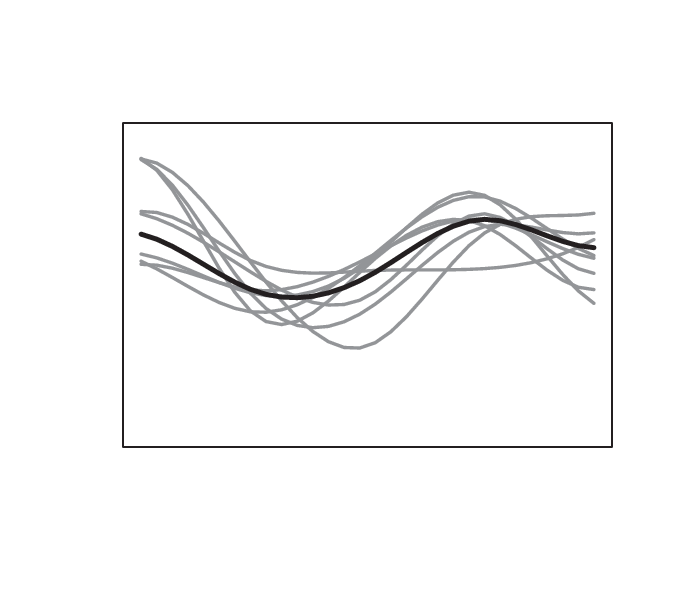}
	\label{fig_sim4_indiv_curves_d3_j3}\hspace{-.37in}
	\includegraphics[width=4.0cm, height=3.6cm, trim=0cm 0cm 0 0]{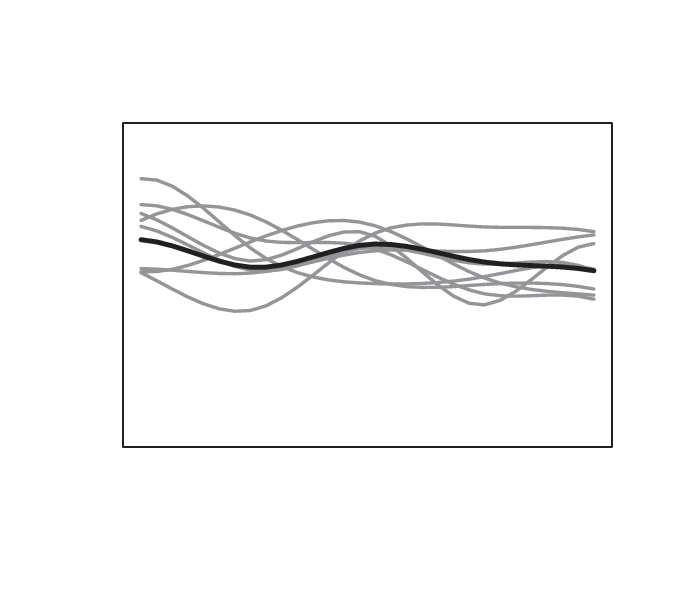}
	\label{fig_sim4_indiv_curves_d4_j3}
	\\[-.55in]
	%%% j = 4
	\raisebox{1.8cm}{\rotatebox[origin=c]{90}{$j=4$}}
	\hspace{-.7cm}\includegraphics[width=4.0cm, height=3.6cm, trim=0cm 0cm 0 0]{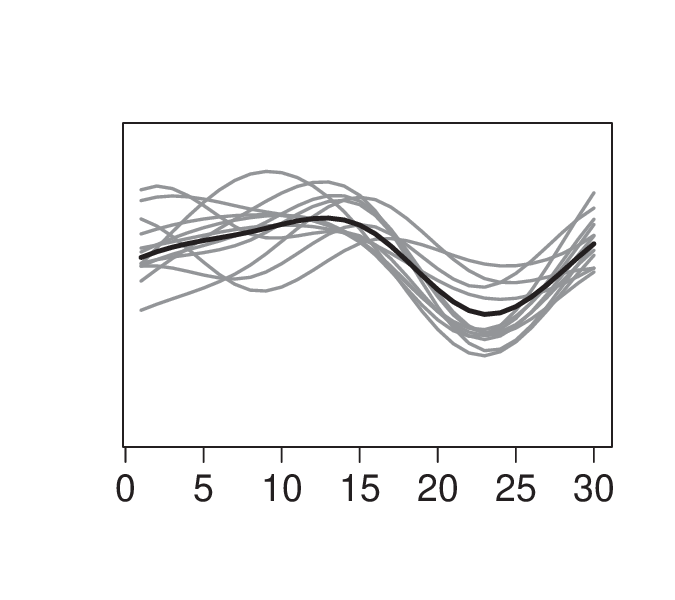}
	\label{fig_sim4_indiv_curves_d1_j4}\hspace{-.37in}
	\includegraphics[width=4.0cm, height=3.6cm, trim=0cm 0cm 0 0]{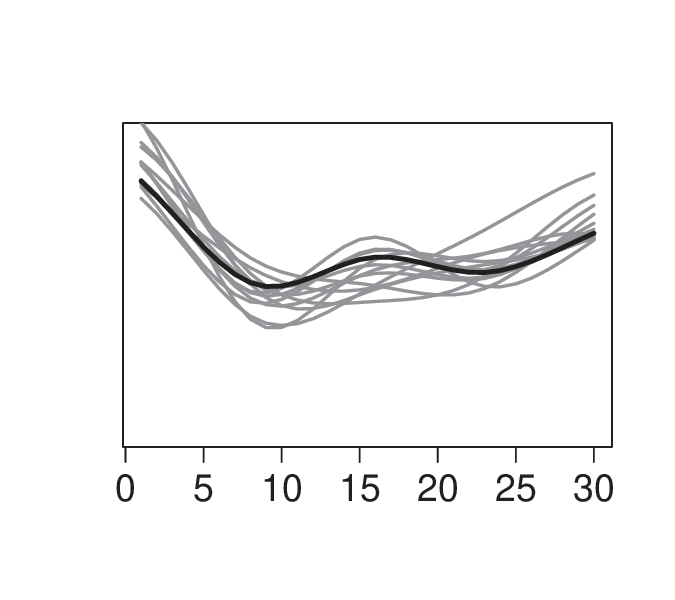}
	\label{fig_sim4_indiv_curves_d2_j4}\hspace{-.37in}
	\includegraphics[width=4.0cm, height=3.6cm, trim=0cm 0cm 0 0]{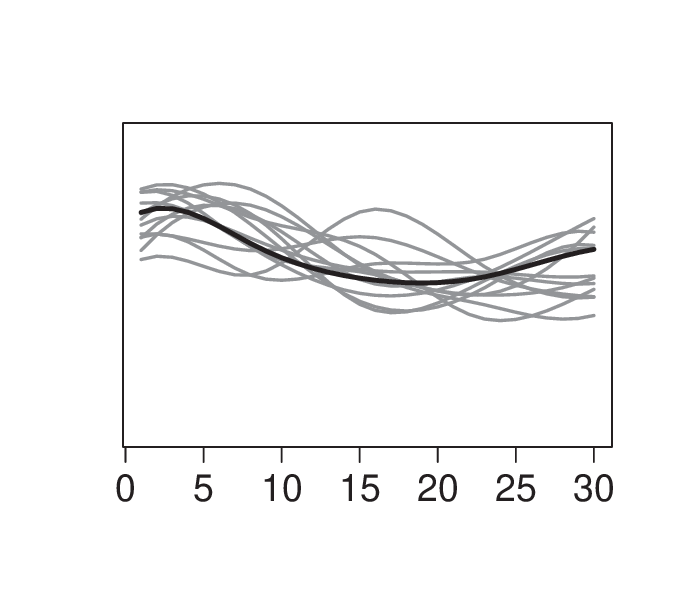}
	\label{fig_sim4_indiv_curves_d3_j4}\hspace{-.37in}
	\includegraphics[width=4.0cm, height=3.6cm, trim=0cm 0cm 0 0]{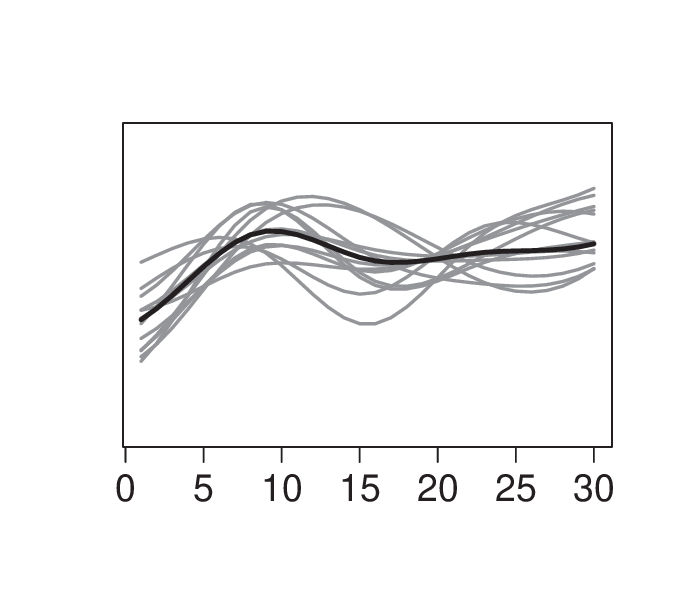}
	\label{fig_sim4_indiv_curves_d4_j4}
	%\\[-.55in]
\vspace{-.35in}
\caption{Cluster-specific mean curves (black) and respective individual curves (gray) of each cluster $j\!=\!1,\dots,4$ and dimension $d\!=\!1,\dots,4$.}
\label{fig:sim4_indiv_curves}
%\end{figure}
%
\vskip12pt
%
%\begin{figure}[b]
\centering
	{$d=1$\hspace{2.4cm}$d=2$\hspace{2.4cm}$d=3$\hspace{2.4cm}$d=4$}\\[-.25in]
	%%% j = 1
	\raisebox{1.8cm}{\rotatebox[origin=c]{90}{$j=1$}}
	\hspace{-.7cm}\includegraphics[width=4.0cm, height=3.6cm, trim=0cm 0cm 0 0]{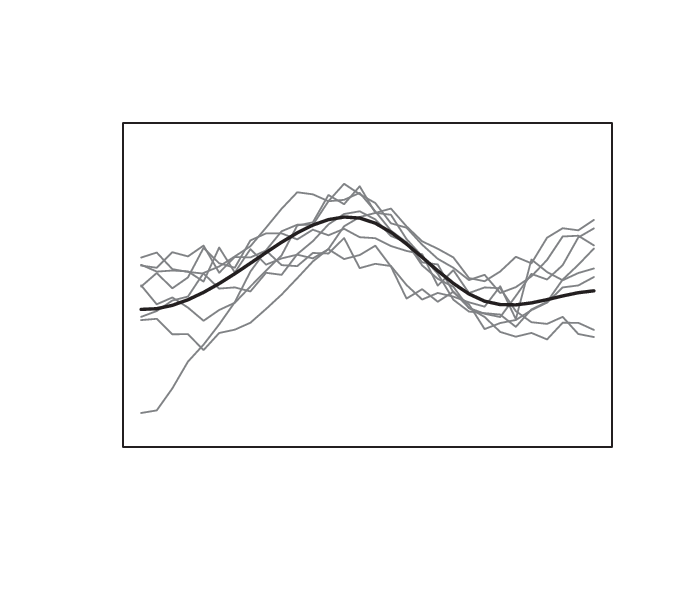}
	\label{fig_sim4_indiv_data_d1_j1}\hspace{-.37in}
	\includegraphics[width=4.0cm, height=3.6cm, trim=0cm 0cm 0 0]{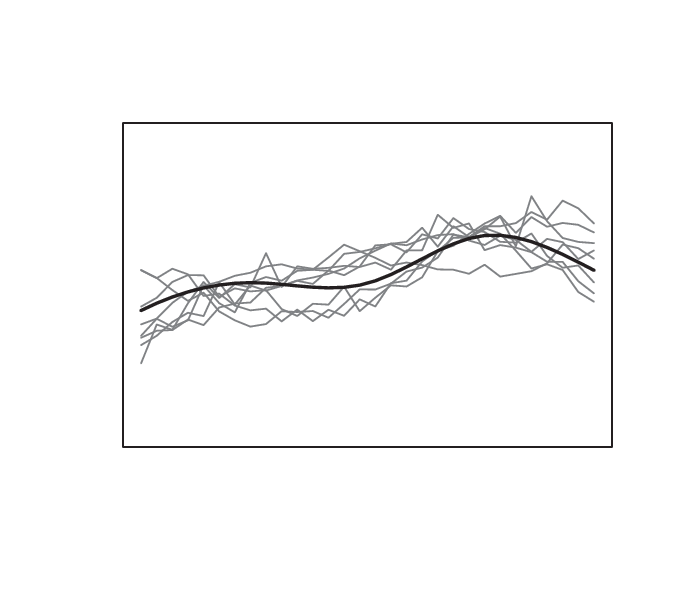}
	\label{fig_sim4_indiv_data_d2_j1}\hspace{-.37in}
	\includegraphics[width=4.0cm, height=3.6cm, trim=0cm 0cm 0 0]{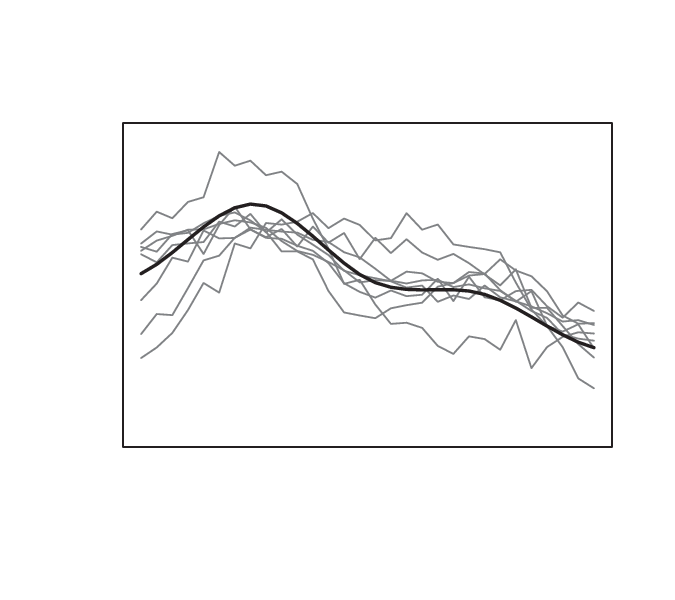}
	\label{fig_sim4_indiv_data_d3_j1}\hspace{-.37in}
	\includegraphics[width=4.0cm, height=3.6cm, trim=0cm 0cm 0 0]{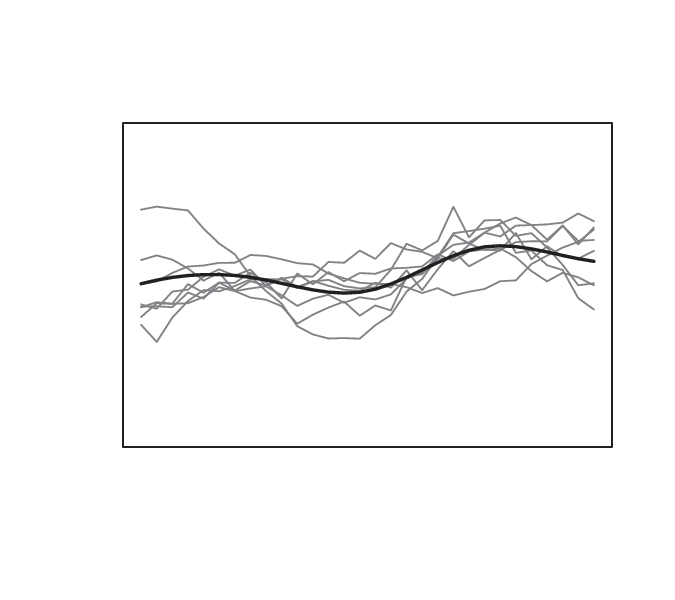}
	\label{fig_sim4_indiv_data_d4_j1}
	\\[-.55in]
	%%% j = 2
	\raisebox{1.8cm}{\rotatebox[origin=c]{90}{$j=2$}}
	\hspace{-.7cm}\includegraphics[width=4.0cm, height=3.6cm, trim=0cm 0cm 0 0]{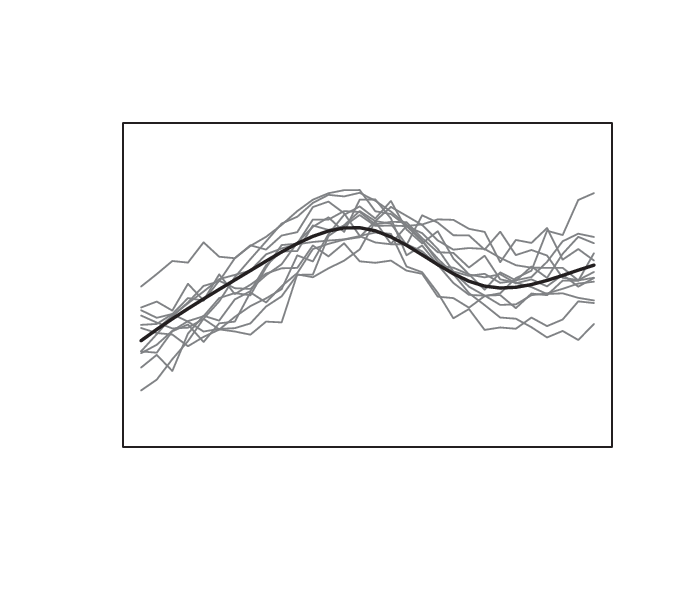}
	\label{fig_sim4_indiv_data_d1_j2}\hspace{-.37in}
	\includegraphics[width=4.0cm, height=3.6cm, trim=0cm 0cm 0 0]{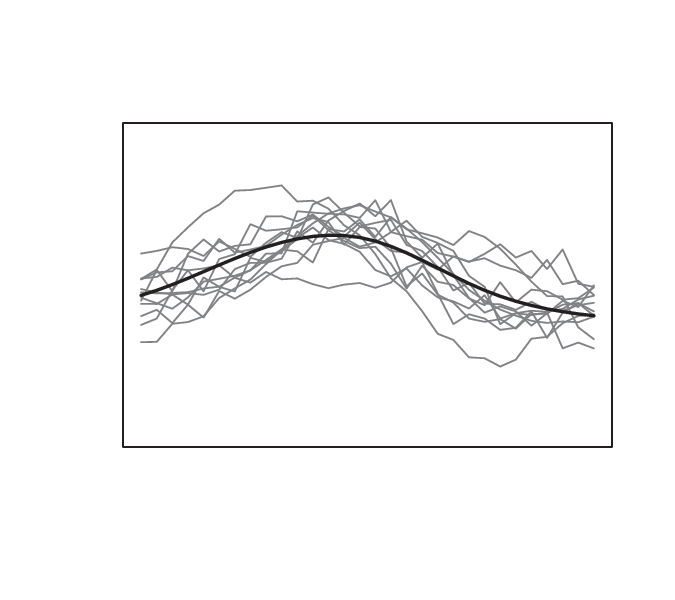}
	\label{fig_sim4_indiv_data_d2_j2}\hspace{-.37in}
	\includegraphics[width=4.0cm, height=3.6cm, trim=0cm 0cm 0 0]{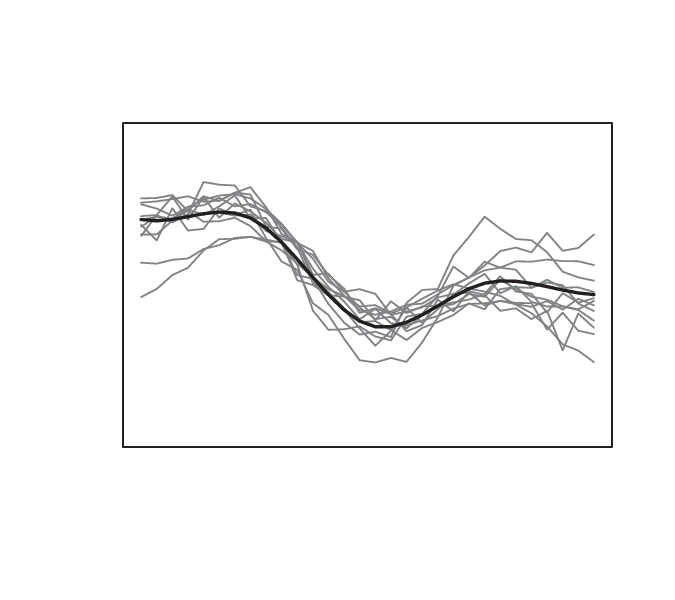}
	\label{fig_sim4_indiv_data_d3_j2}\hspace{-.37in}
	\includegraphics[width=4.0cm, height=3.6cm, trim=0cm 0cm 0 0]{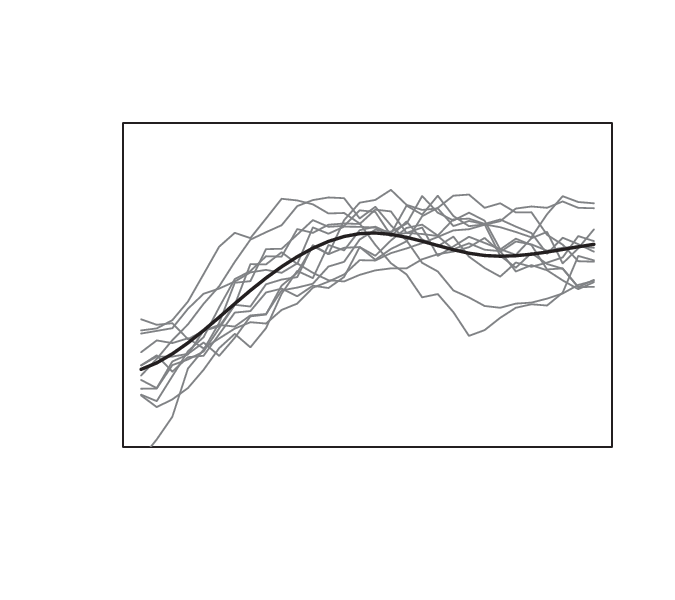}
	\label{fig_sim4_indiv_data_d4_j2}
	\\[-.55in]
	%%% j = 3
	\raisebox{1.8cm}{\rotatebox[origin=c]{90}{$j=3$}}
	\hspace{-.7cm}\includegraphics[width=4.0cm, height=3.6cm, trim=0cm 0cm 0 0]{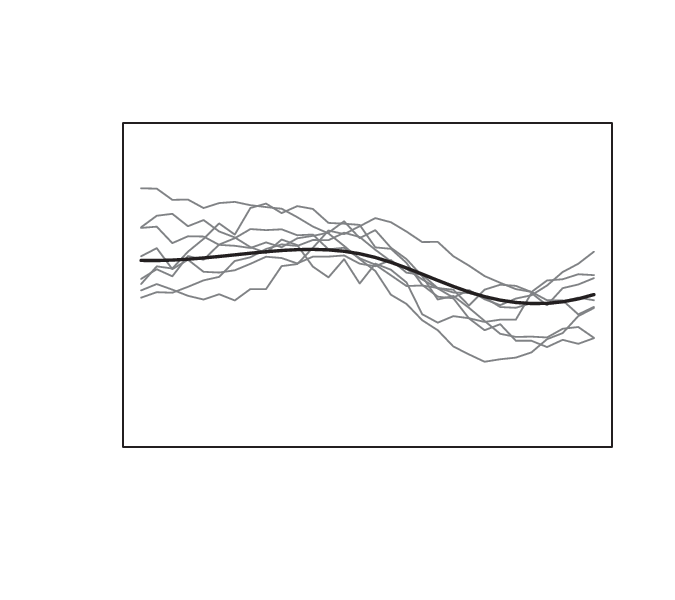}
	\label{fig_sim4_indiv_data_d1_j3}\hspace{-.37in}
	\includegraphics[width=4.0cm, height=3.6cm, trim=0cm 0cm 0 0]{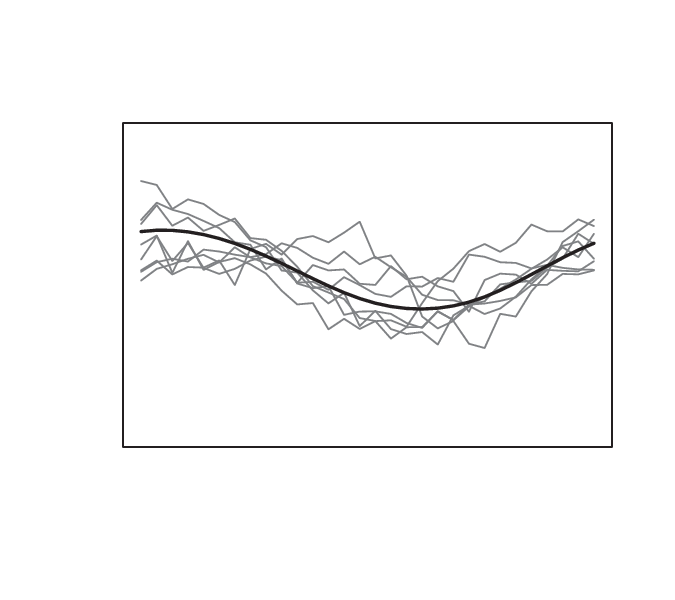}
	\label{fig_sim4_indiv_data_d2_j3}\hspace{-.37in}
	\includegraphics[width=4.0cm, height=3.6cm, trim=0cm 0cm 0 0]{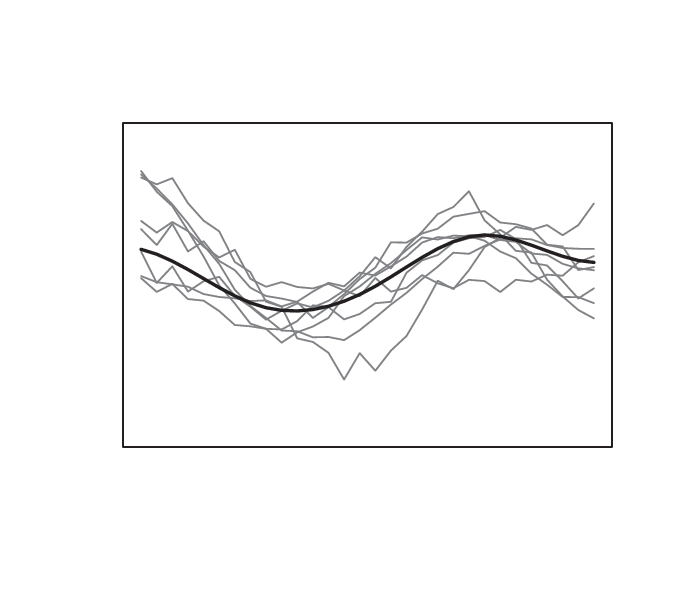}
	\label{fig_sim4_indiv_data_d3_j3}\hspace{-.37in}
	\includegraphics[width=4.0cm, height=3.6cm, trim=0cm 0cm 0 0]{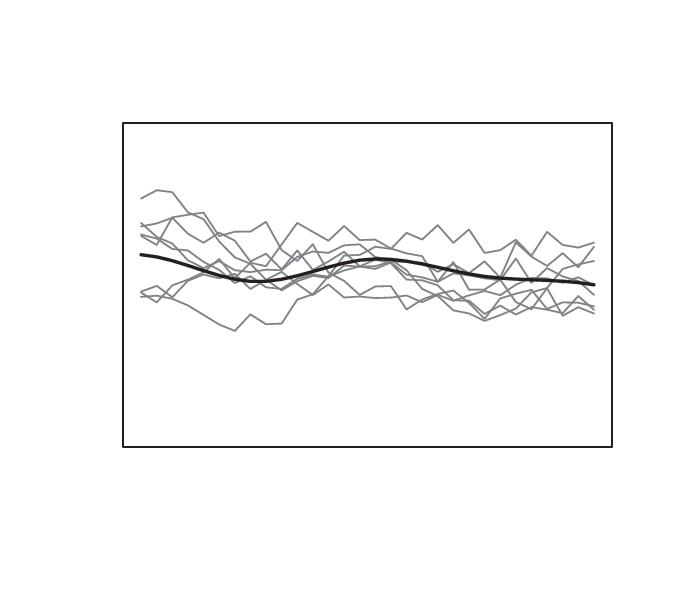}
	\label{fig_sim4_indiv_data_d4_j3}
	\\[-.55in]
	%%% j = 4
	\raisebox{1.8cm}{\rotatebox[origin=c]{90}{$j=4$}}
	\hspace{-.7cm}\includegraphics[width=4.0cm, height=3.6cm, trim=0cm 0cm 0 0]{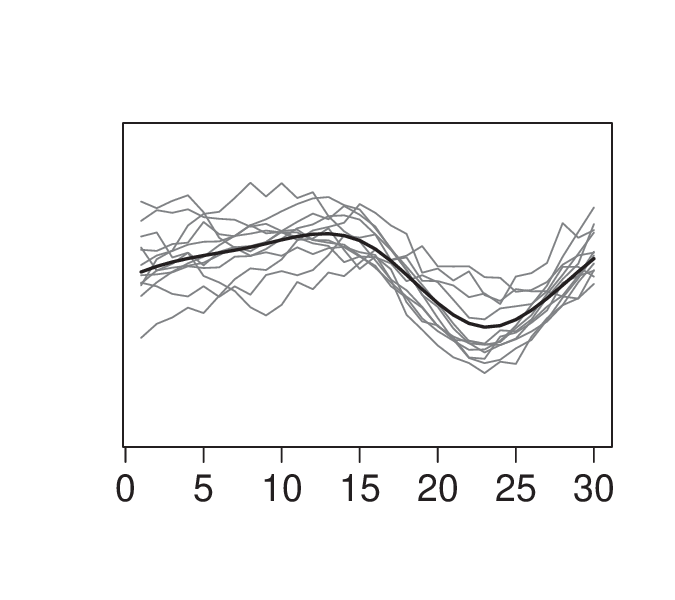}
	\label{fig_sim4_indiv_data_d1_j4}\hspace{-.37in}
	\includegraphics[width=4.0cm, height=3.6cm, trim=0cm 0cm 0 0]{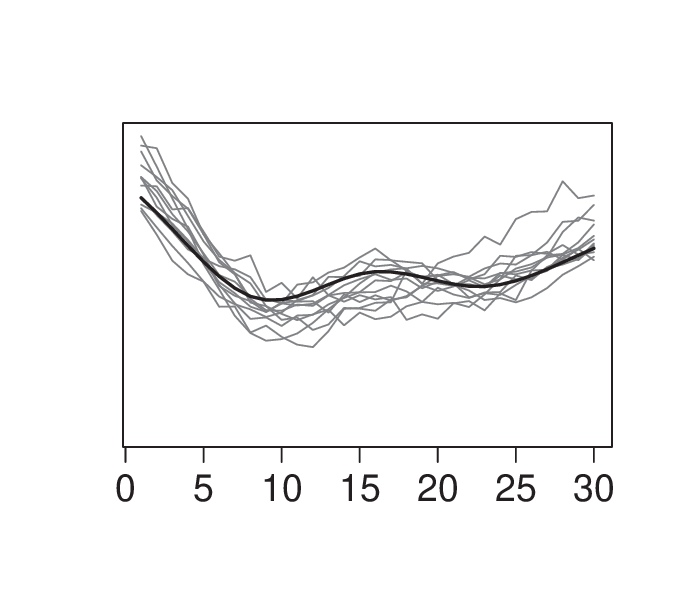}
	\label{fig_sim4_indiv_data_d2_j4}\hspace{-.37in}
	\includegraphics[width=4.0cm, height=3.6cm, trim=0cm 0cm 0 0]{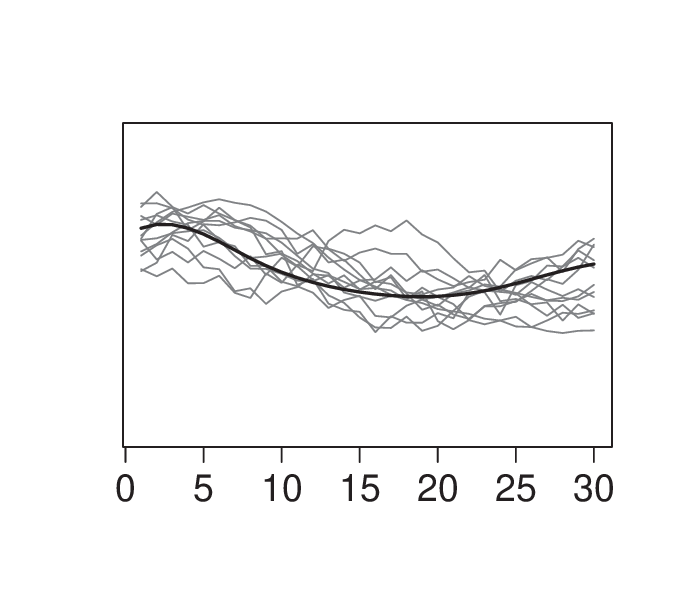}
	\label{fig_sim4_indiv_data_d3_j4}\hspace{-.37in}
	\includegraphics[width=4.0cm, height=3.6cm, trim=0cm 0cm 0 0]{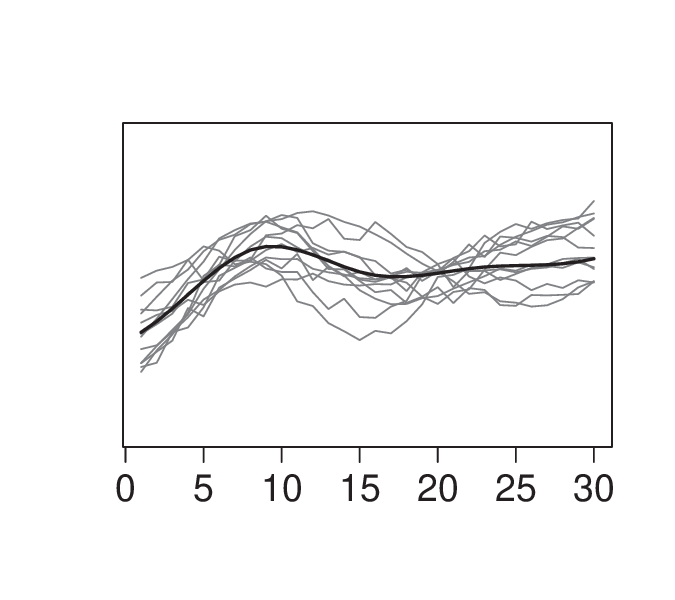}
	\label{fig_sim4_indiv_data_d4_j4}
	%\\[-.55in]
\vspace{-.35in}
\caption{Cluster-specific mean curves (black) and respective individual data sequences (gray) of each cluster $j\!=\!1,\dots,4$ and dimension $d\!=\!1,\dots,4$.}
\label{fig:sim4_indiv_data}
\end{figure}

%\newpage
\FloatBarrier
\newpage
\FloatBarrier
\section{Univariate case: NBA dataset}\label{supp:NBA}

We discuss in this section the application of the proposed model to the data set analysed by \cite{page2015}, refered to as NBA dataset, which consists of individual univariate sequences of game-by-game performance measurements, the so called Hollinger's Game Score (HGS) introduced by \cite{hollinger2002}, of active and retired players of the National Basketball Association (NBA). \cite{page2015} applied a univariate functional clustering model, which is equivalent to the MFPPMx model described in Section \ref{supp:mfppmx} with $D=1$, to cluster 408 NBA players, of which 263 are retired players and the remaining 145 are active players, that had data sequences considered unfinished. Their main goal was to predict the future performance of beginners and active NBA players by borrowing information from the curves of the retired players who were allocated to the same cluster. Our goal here is to evaluate the effect of repulsion on clustering of functional data with high noise level. We only consider the $m=263$ retired players for whom we have full data sequences. Just as was done by \cite{page2015}, three individual-specific covariates were considered to influence the clustering: age of the player in his first game, order of selection of the players in their respective NBA drafts (draft order) and the players experience before being drafted into the NBA. As illustrative examples, Figure \ref{fig:NBA_i} shows the sequences of game-by-game HGS values observed during the entire careers of four retired NBA players.

\begin{figure}[!htb]
	\centering
	\stackunder[5pt]{
		\raisebox{1.6cm}{\rotatebox[origin=c]{90}{\scriptsize HGS}}
		\includegraphics[width=7cm, height=3cm, trim=0 0.5cm 0 0]{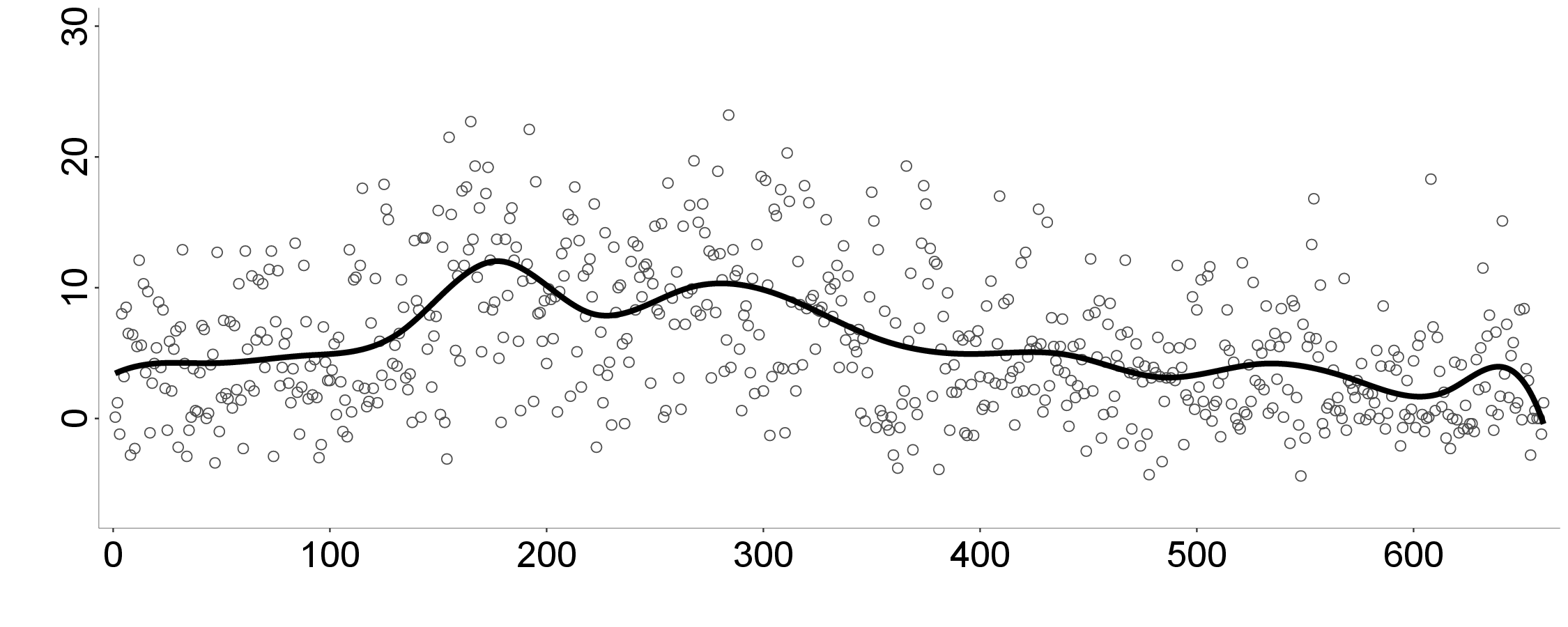}}{}
	\stackunder[5pt]{
		\includegraphics[width=7cm, height=3cm, trim=0 0.5cm 0 0]{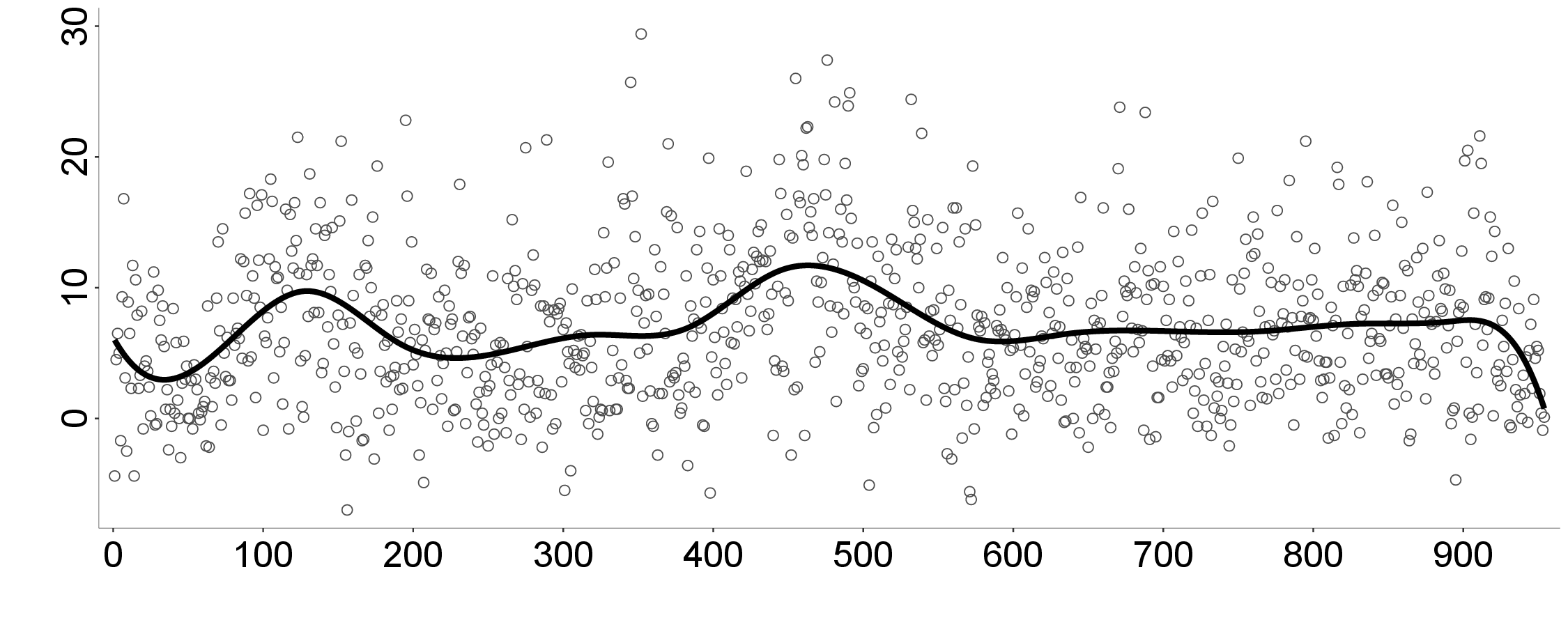}}{}
	\\%[-.1in]
	\stackunder[5pt]{
		\raisebox{1.6cm}{\rotatebox[origin=c]{90}{\scriptsize HGS}}
		\includegraphics[width=7cm, height=3cm, trim=0 0.5cm 0 0]{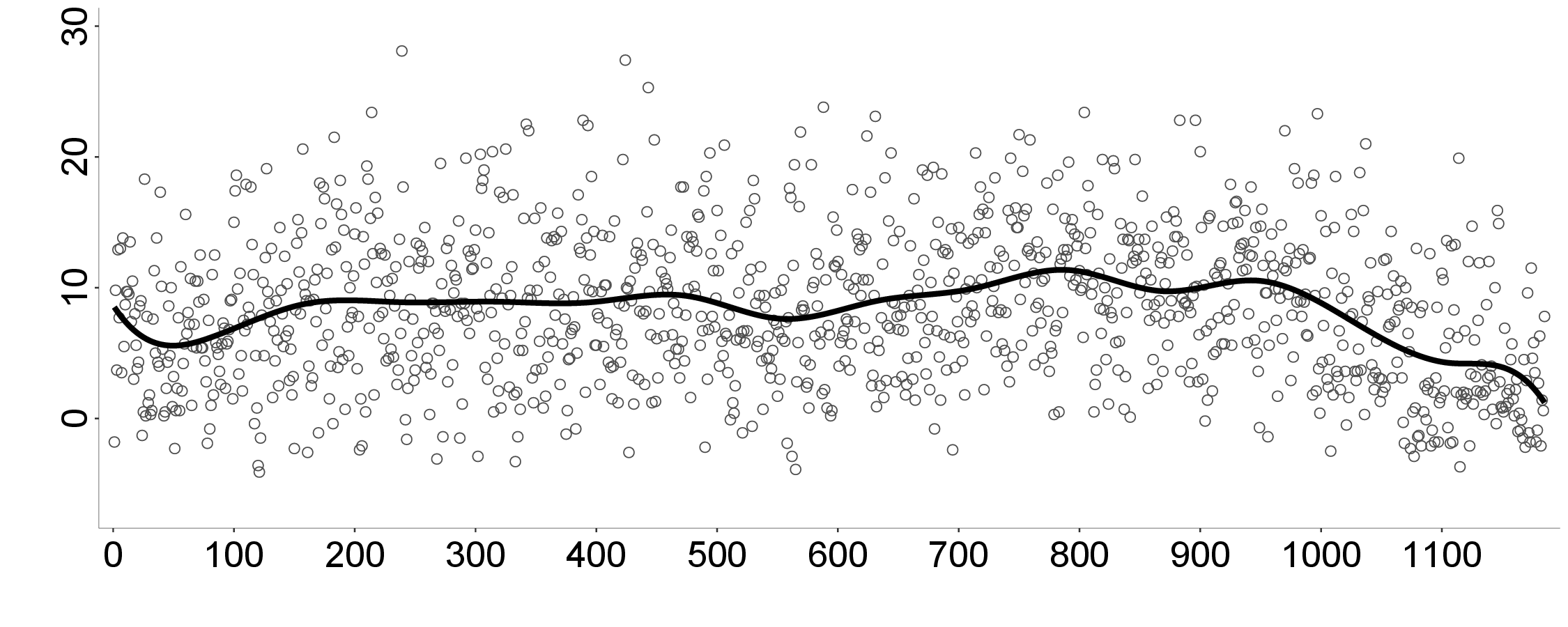}}{}
	\stackunder[5pt]{
		\includegraphics[width=7cm, height=3cm, trim=0 0.5cm 0 0]{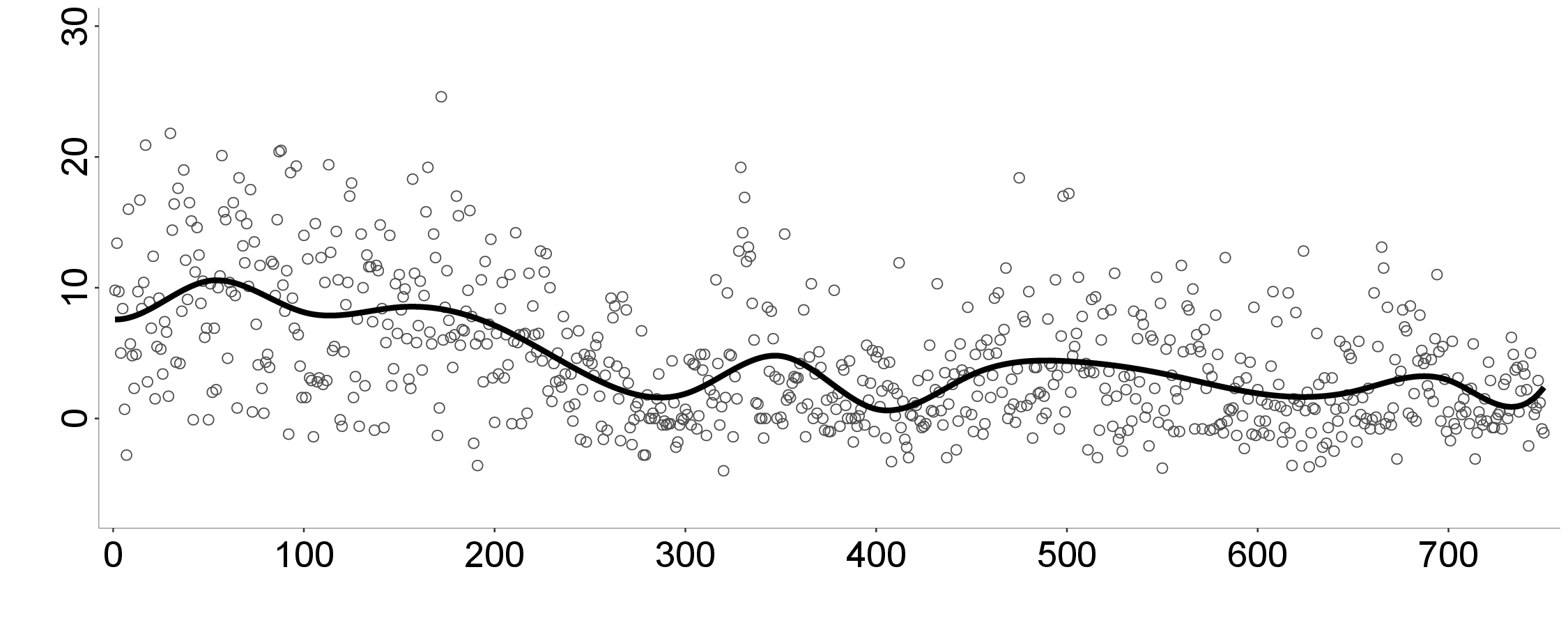}}{}
	\\[-.1in]
	{\scriptsize \hspace{0.8cm} Games \hspace{6.2cm} Games}\\
	%\vspace{-.05in}
	\caption{Game-by-game HGS measurements (dots) observed throughout the career of four NBA retired players and the respective cubic B-spline functions (solid lines).}
	\label{fig:NBA_i}
\end{figure}

\vspace{-.2in}
We fit the proposed MFRMMx model assuming different levels of repulsion $\phi=0,10^{-3},10^{-2},10^{-1},1,5,$ $10,20,50,100,200$, $D=1$ and $J=60$ mixture components. We also considered different number of knots $p=10,15$ and of the truncation parameter $A=0.1,1,10$. The other hyperparameter values are the same of the simulation study discussed in Section \ref{sec:simulation} of the main text, except for $b_\tau$, which we fixed at $b_\tau=0.05$, following \cite{page2015}. 

Table \ref{tab:NBA_nclr} shows that, as expected, the number of clusters reduces when the repulsive parameter $\phi$ increases. Clusters with only one observation, so called singletons, tend to disappear for the higher values of $\phi$. Table \ref{tab:NBA_nclr} also shows that the truncation parameter $A$ also influences the number of clusters. As observed in the simulation study (Section \ref{sec:simulation}), increasing the values of $A$ reduces the number of clusters. The truncation parameter $A$ also influences the heterogeneity level of the curves within each cluster. To illustrate, Figures \ref{fig:NBA_clusters_A01_p15_phi20} and \ref{fig:NBA_clusters_A1_p15_phi20} show the clusters identified by the MFRMMx with $A=0.1$ and $A=1$, respectively, both obtained with $p=15$ and $\phi=20$. In the case where $A=0.1$, Figure \ref{fig:NBA_clusters_A01_p15_phi20} shows that the individual curves and the cluster-specific mean curves are very similar, with no visually noticeable differences in the curve shapes within almost all clusters (except cluster $j\!=\!1$). On the other hand, when $A=1$, we can see in Figure \ref{fig:NBA_clusters_A1_p15_phi20} that the individual-specific mean curves present some heterogeneity within each cluster, being the individual and the cluster-specific mean curves different. Depending on the application, some heterogeneity in the curve shapes within each cluster may be interesting, along with the identification of well-differentiated clusters. For example, in \cite{page2015}, the future performance predictions of beginners and active NBA players who shared a common cluster would vary more or less around the mean curve of their respective cluster, under higher or lower values of the truncation parameter $A$, respectively. Moreover, well-differentiated future predictions between different clusters and a reduced number of singletons could be obtained by appropriate calibration of the repulsive parameter $\phi$.

\begin{table}\setlength{\tabcolsep}{0.05cm}%[!htb]
%\begin{adjustwidth}{-1.0cm}{0cm}
	\centering
	%\small
	\footnotesize
	\begin{tabular}{
%		l@{\hskip .2in}
%		rr@{\hskip .3in}rr@{\hskip .3in}
%		rr@{\hskip .3in}rr@{\hskip .3in}
%		rr@{\hskip .3in}rr
		l@{\hskip .2in}
		r@{\hskip .1in}r@{\hskip .2in}r@{\hskip .1in}r@{\hskip .5in}
		r@{\hskip .1in}r@{\hskip .2in}r@{\hskip .12in}r@{\hskip .5in}
		r@{\hskip .1in}r@{\hskip .2in}r@{\hskip .1in}r
		}
		\toprule\\[-.15in]
		\multicolumn{1}{c}{}
		& \multicolumn{4}{c}{\hspace{-.5in} $A\!=\!0.1$}
		& \multicolumn{4}{c}{\hspace{-.5in} $A\!=\!1$}
		& \multicolumn{4}{c}{\hspace{-.1in} $A\!=\!10$}
		\\[-.02in]
		\cmidrule(l{-.05in}r{.45in}){2-5}
		\cmidrule(l{-.05in}r{.45in}){6-9}
		\cmidrule(l{-.05in}r{.0in}){10-13}\\[-.15in]
		\multicolumn{1}{c}{}
		& \multicolumn{2}{c}{\hspace{-.25in} NC}
		& \multicolumn{2}{c}{\hspace{-.55in} NS}
		& \multicolumn{2}{c}{\hspace{-.25in} NC}
		& \multicolumn{2}{c}{\hspace{-.55in} NS}
		& \multicolumn{2}{c}{\hspace{-.25in} NC}
		& \multicolumn{2}{c}{\hspace{-.08in} NS}
		\\[-.02in]
		\cmidrule(l{-.05in}r{.2in}){2-3}
		\cmidrule(l{-.02in}r{.45in}){4-5}
		\cmidrule(l{-.05in}r{.18in}){6-7}
		\cmidrule(l{-.02in}r{.45in}){8-9}
		\cmidrule(l{-.05in}r{.18in}){10-11}
		\cmidrule(l{-.02in}r{.0in}){12-13}\\[-.15in]
		Model\hspace{1.0cm}$p\!=$
		& \multicolumn{2}{c}{\hspace{-.2in}10\hspace{.2in}15}
		& \multicolumn{2}{c}{\hspace{-.5in}10\hspace{.2in}15}
		& \multicolumn{2}{c}{\hspace{-.2in}10\hspace{.2in}15}
		& \multicolumn{2}{c}{\hspace{-.5in}10\hspace{.2in}15}
		& \multicolumn{2}{c}{\hspace{-.2in}10\hspace{.1in}15}
		& \multicolumn{2}{c}{\hspace{-.0in}10\hspace{.15in}15}\\[-.01in]
		\midrule\\[-.15in]
		MFPPMx    			       & 41.7 & 44.7 &  1.0 &  0.9 & 22.3 & 24.9 & 1.2 & 0.9 & 7.0 & 7.6 & 1.0 & 0.7 \\[-.01in]
		\midrule\\[-.15in]
		MFRMMx   	      & \multicolumn{12}{c}{} \\%[.05in]
		no repulsion      & 51.1 & 56.4 &  8.9 &  8.8 & 20.7 & 25.0 & 2.5 & 2.9 & 7.8 & 9.0 & 1.2 & 1.0 \\%[.01in]
		$\phi\!=\!0.001$  & 52.8 & 58.8 &  9.3 & 10.5 & 18.4 & 22.7 & 2.8 & 3.8 & 8.6 & 9.6 & 1.2 & 1.0 \\%[.01in]
		$\phi\!=\!0.01$   & 54.1 & 57.4 &  9.5 &  9.0 & 18.0 & 23.9 & 3.3 & 3.8 & 8.5 & 9.2 & 1.2 & 1.0 \\%[.01in]
		$\phi\!=\!0.1$    & 53.1 & 57.8 &  9.5 &  9.2 & 20.8 & 24.9 & 4.3 & 5.8 & 8.0 & 9.0 & 1.4 & 1.0 \\%[.01in]
		$\phi\!=\!1$      & 32.0 & 35.0 &  2.3 &  2.0 & 12.4 & 15.1 & 2.1 & 2.6 & 7.2 & 8.9 & 1.0 & 1.1 \\%[.01in]
		$\phi\!=\!5$      & 21.0 & 29.0 &    - &  1.0 & 11.9 & 10.1 & 1.2 & 1.9 & 6.5 & 8.2 & 1.4 & 1.6 \\%[.01in]
		$\phi\!=\!10$     & 21.4 & 22.0 &  0.8 &  2.0 & 11.3 & 12.6 & 2.5 & 3.9 & 5.6 & 6.9 & 1.0 & 1.2 \\%[.01in]
		$\phi\!=\!20$     & 16.0 & 15.0 &  1.0 &  1.0 & 10.6 &  9.5 & 3.0 & 1.8 & 4.9 & 5.7 & 0.8 & 1.1 \\%[.01in]
		$\phi\!=\!50$     & 10.0 & 11.0 &  1.0 &    - &  5.7 &  5.1 & 1.0 & 1.0 & 3.8 & 3.9 & 0.5 & 0.5 \\%[.01in]
		$\phi\!=\!100$    &  6.0 & 10.0 &    - &    - &  5.0 &  5.8 &   - &   - & 3.3 & 3.3 & 0.2 & 0.2 \\%[.01in]
		$\phi\!=\!200$    &  5.0 &  5.0 &    - &    - &  3.0 &  4.0 &   - &   - & 2.3 & 2.5 & 0.2 & 0.3 \\%[.02in]
		\bottomrule
	\end{tabular}
%\end{adjustwidth}
\vspace{.1in}
\caption{Number of clusters (column NC) and number of singletons (column NS) obtained by fitting the MFRMMx and MFPPMx models to the NBA dataset with $D=1$, under the influence of covariates, for different values of $p$, $A$ and $\phi$.}
\label{tab:NBA_nclr}
\end{table}

%\newpage
%\input{fig_NBA_clusters_A01_p15_phi20-ylim.tex}
%\input{fig_NBA_clusters_A01_p15_phi20.tex}
\begin{figure}[!h]
\begin{adjustwidth}{-2.5cm}{-2cm}
\centering
	\stackunder[-29pt]{
	%\raisebox{1.3cm}{\rotatebox[origin=c]{90}{\scriptsize $\phi=0.1$}}
		\includegraphics[width=4.0cm, height=4.4cm, trim=0 0cm 0 0]
		{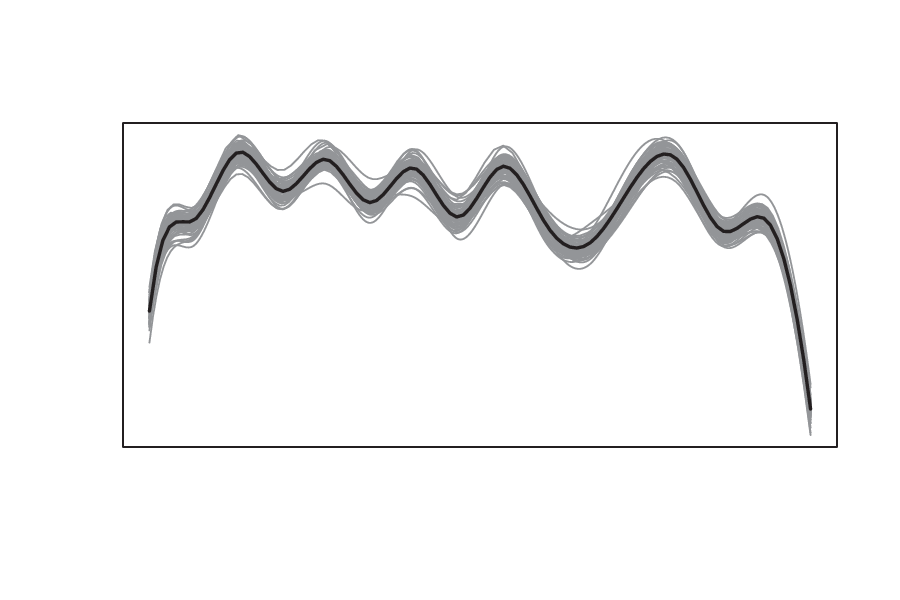}}{\quad$n_1=59$}
	\hspace{-.35in}
	\stackunder[-29pt]{
		\includegraphics[width=4.0cm, height=4.4cm, trim=0 0cm 0 0]
		{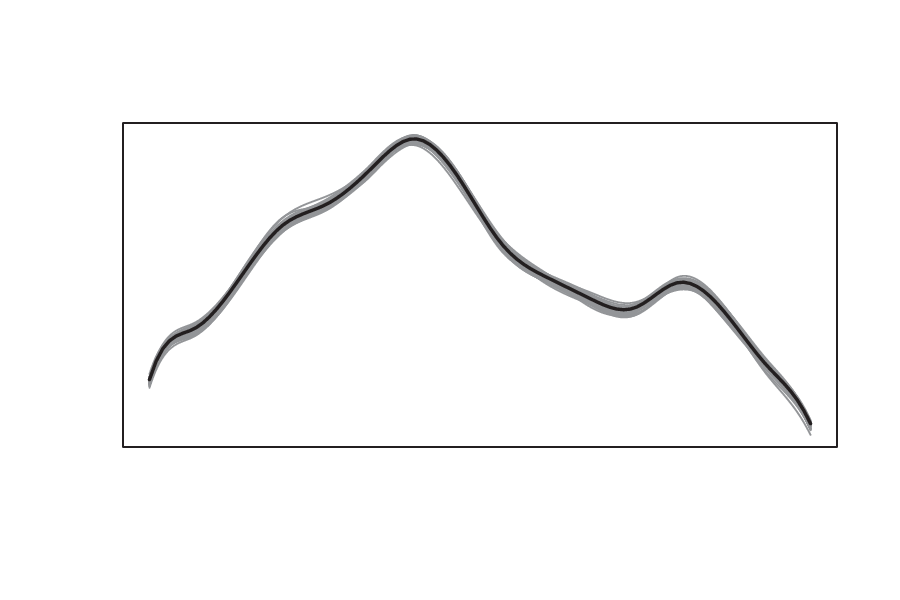}}{\quad$n_2=47$}
	\hspace{-.35in}
	\stackunder[-29pt]{
		\includegraphics[width=4.0cm, height=4.4cm, trim=0 0cm 0 0]
		{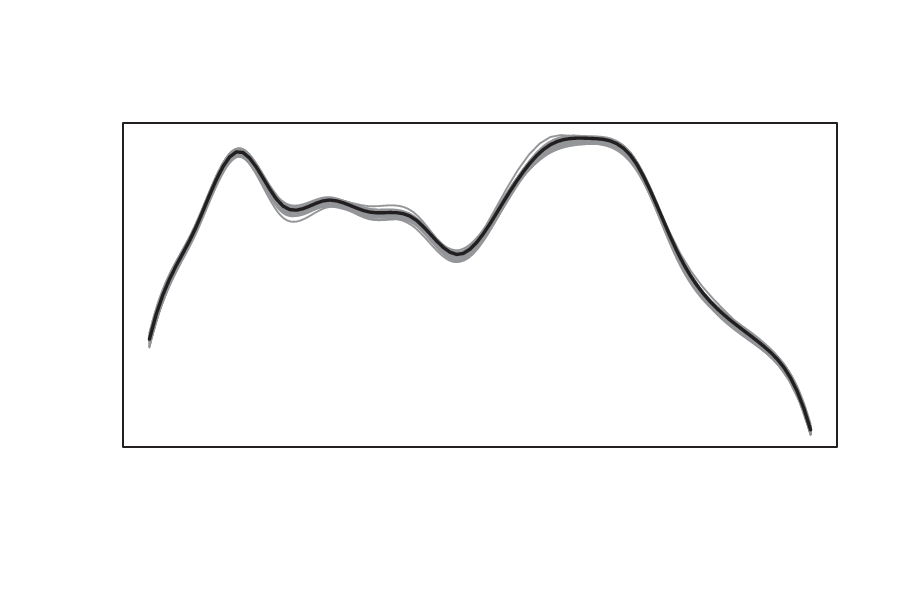}}{\quad$n_3=34$}
	\hspace{-.35in}
	\stackunder[-29pt]{
	%\raisebox{1.3cm}{\rotatebox[origin=c]{90}{\scriptsize $\phi=0.1$}}
		\includegraphics[width=4.0cm, height=4.4cm, trim=0 0cm 0 0]
		{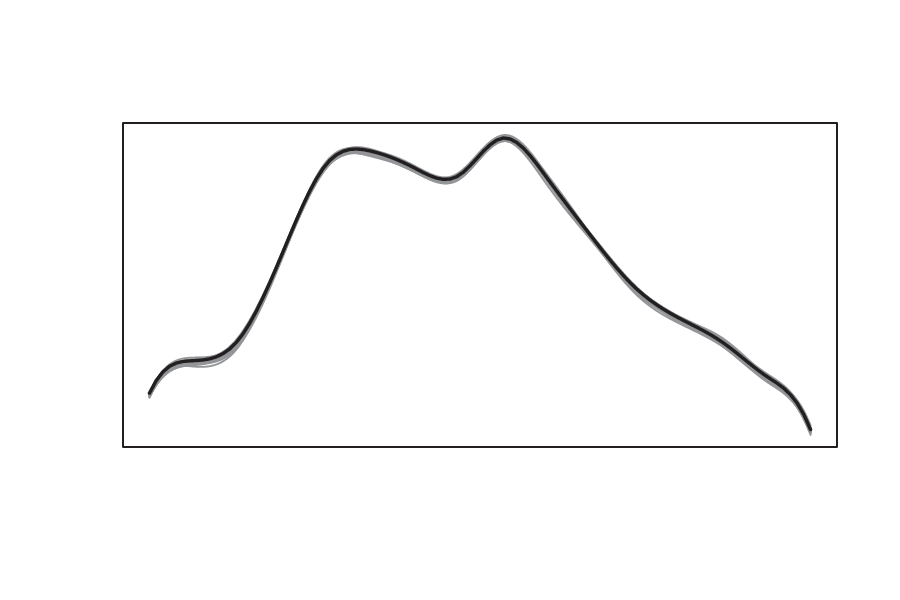}}{\quad$n_4=22$}
	\hspace{-.35in}
	\stackunder[-29pt]{
		\includegraphics[width=4.0cm, height=4.4cm, trim=0 0cm 0 0]
		{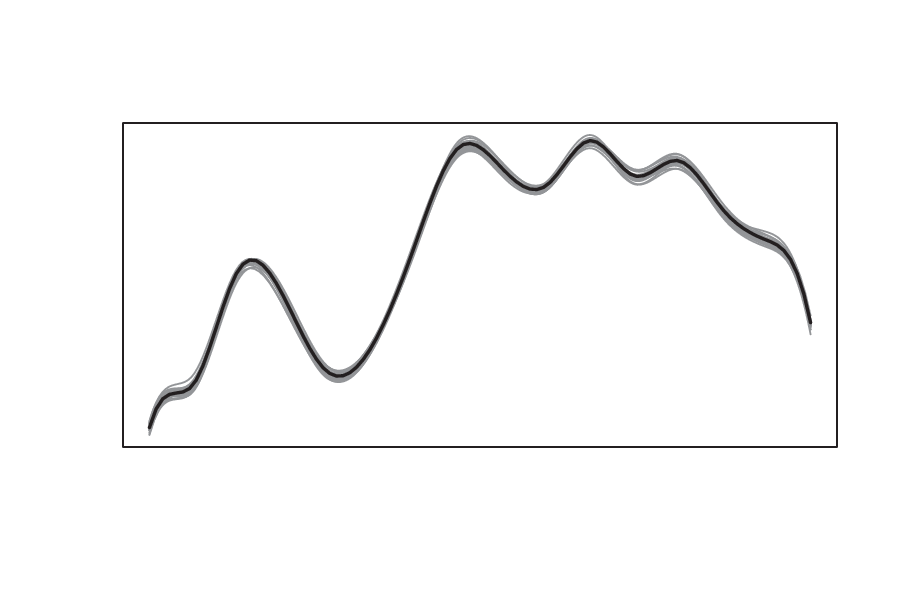}}{\quad$n_5=22$}
	\\[-.6in]
	\stackunder[-29pt]{
		\includegraphics[width=4.0cm, height=4.4cm, trim=0 0cm 0 0]
		{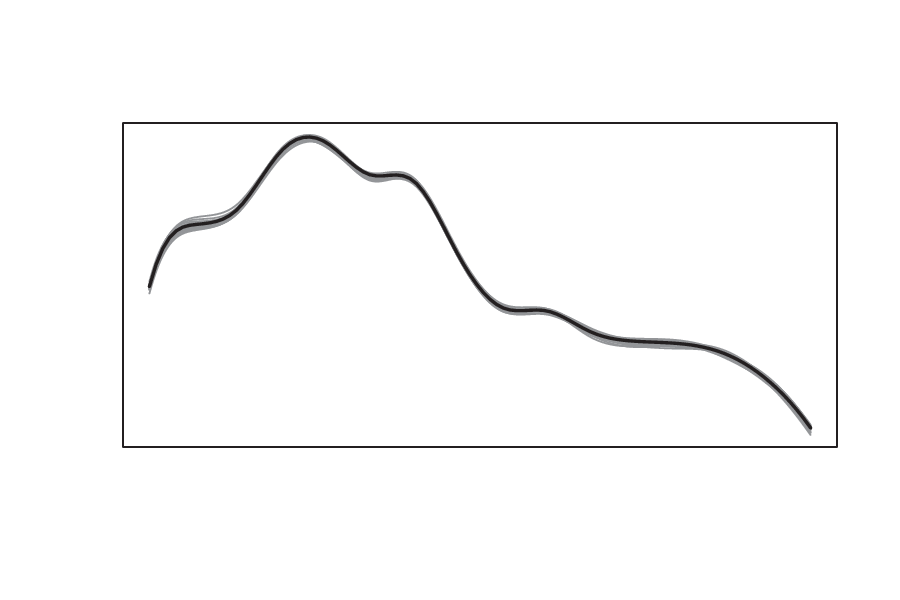}}{\quad$n_6=20$}
	\hspace{-.35in}
	\stackunder[-29pt]{
		\includegraphics[width=4.0cm, height=4.4cm, trim=0 0cm 0 0]
		{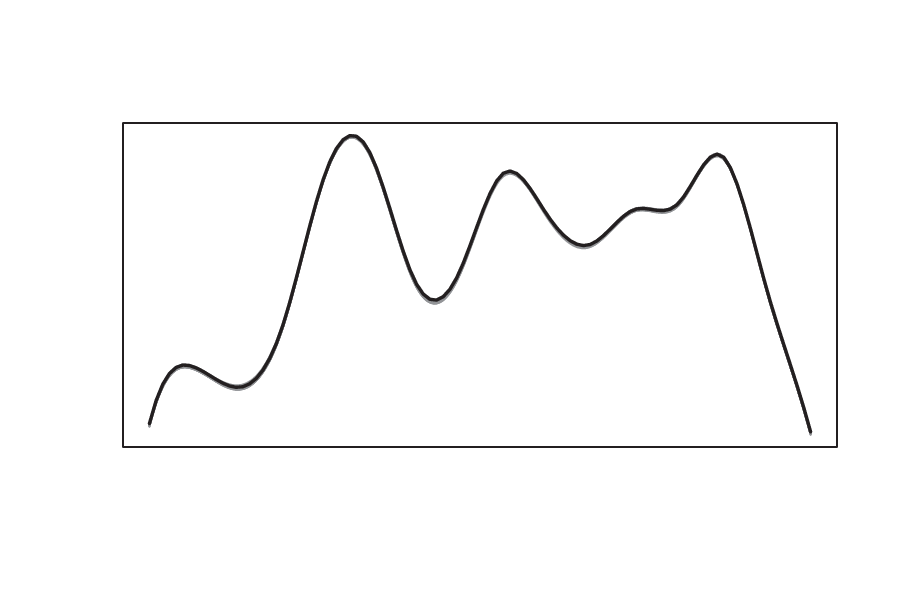}}{\quad$n_7=17$}
	\hspace{-.35in}
	\stackunder[-29pt]{
		\includegraphics[width=4.0cm, height=4.4cm, trim=0 0cm 0 0]
		{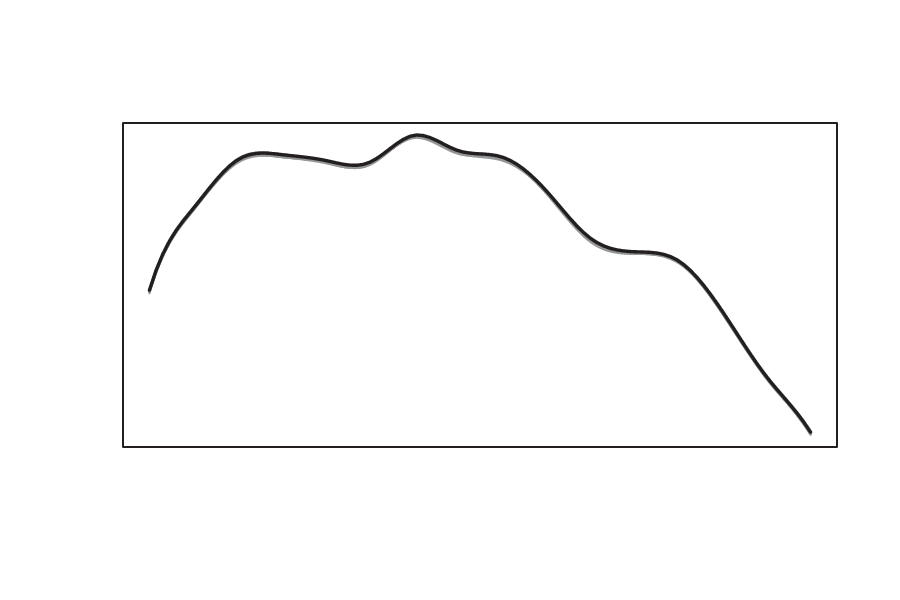}}{\quad$n_8=12$}
	\hspace{-.35in}
	\stackunder[-29pt]{
		\includegraphics[width=4.0cm, height=4.4cm, trim=0 0cm 0 0]
		{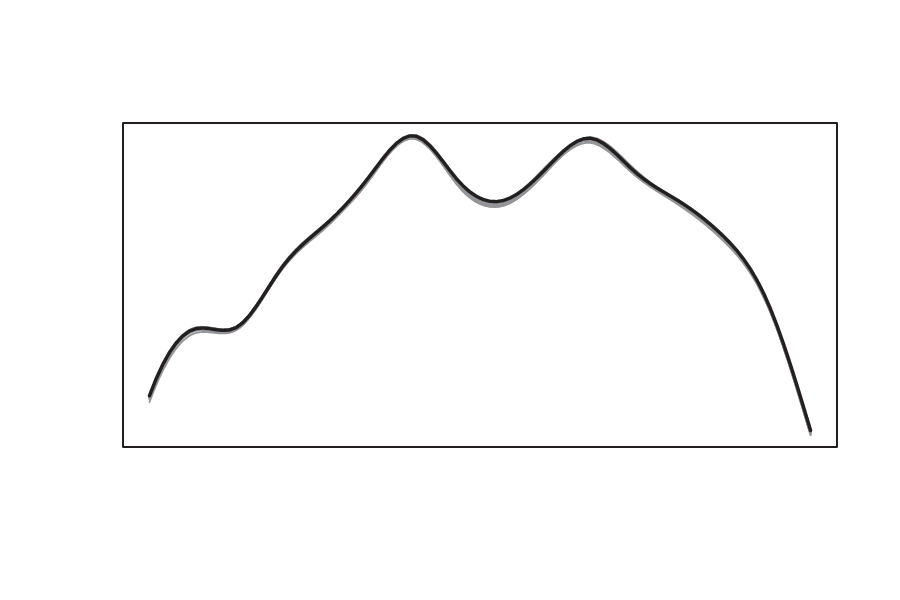}}{\quad$n_9=10$}
	\hspace{-.35in}
	\stackunder[-29pt]{
		\includegraphics[width=4.0cm, height=4.4cm, trim=0 0cm 0 0]
		{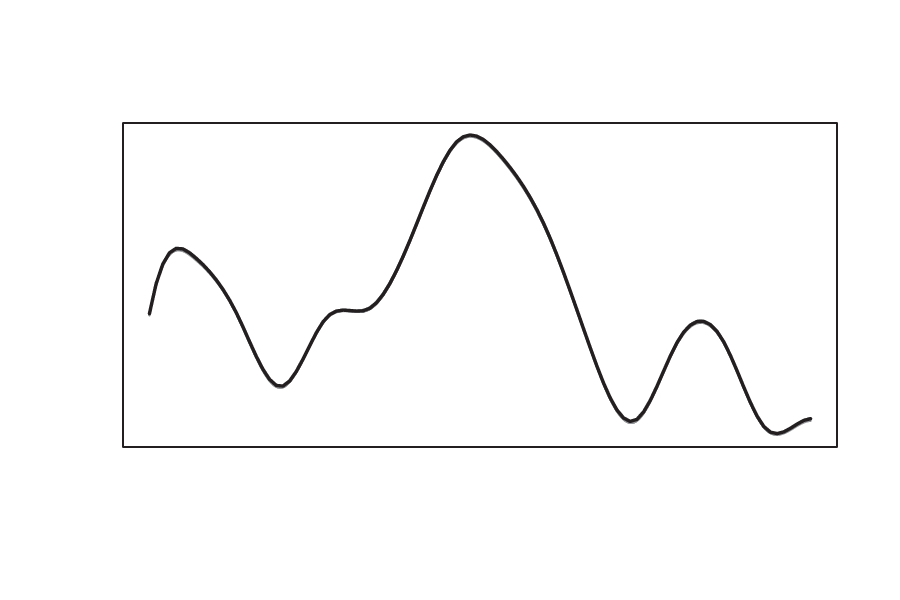}}{\quad$n_{10}=8$}
	\\[-.6in]
	\stackunder[-29pt]{
		\includegraphics[width=4.0cm, height=4.4cm, trim=0 0cm 0 0]
		{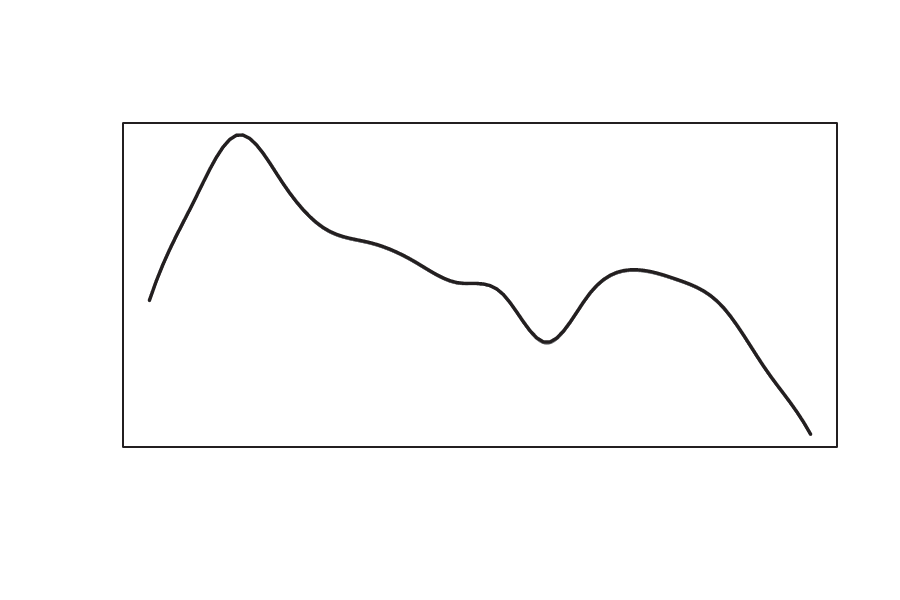}}{\quad$n_{11}=4$}
	\hspace{-.35in}
	\stackunder[-29pt]{
		\includegraphics[width=4.0cm, height=4.4cm, trim=0 0cm 0 0]
		{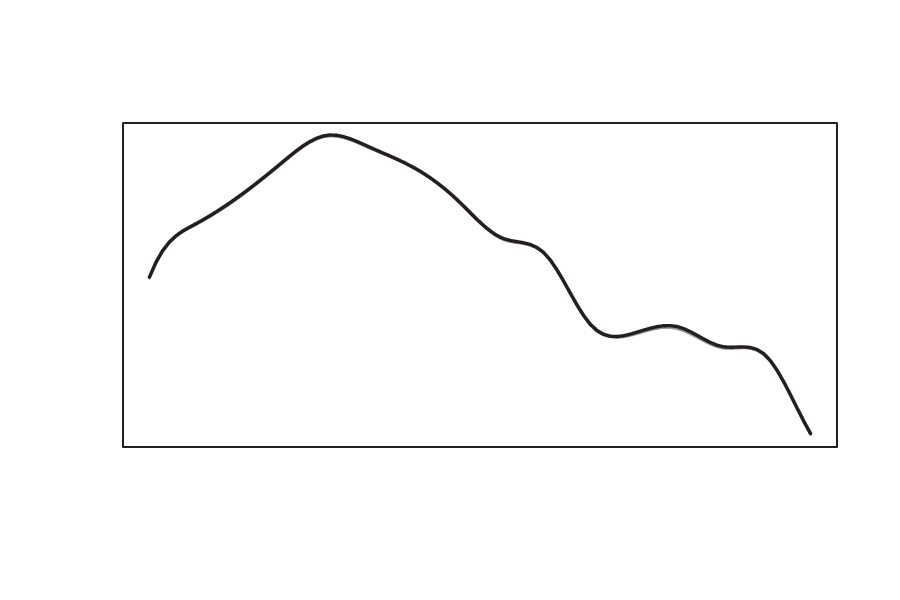}}{\quad$n_{12}=3$}
	\hspace{-.35in}
	\stackunder[-29pt]{
		\includegraphics[width=4.0cm, height=4.4cm, trim=0 0cm 0 0]
		{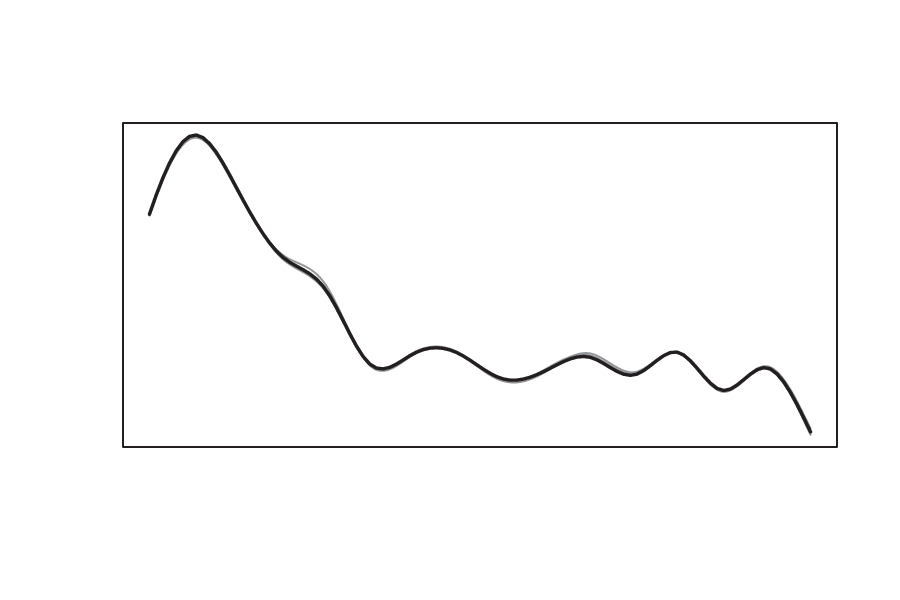}}{\quad$n_{13}=2$}
	\hspace{-.35in}
	\stackunder[-29pt]{
	%\raisebox{1.3cm}{\rotatebox[origin=c]{90}{\scriptsize $\phi=0.1$}}
		\includegraphics[width=4.0cm, height=4.4cm, trim=0 0cm 0 0]
		{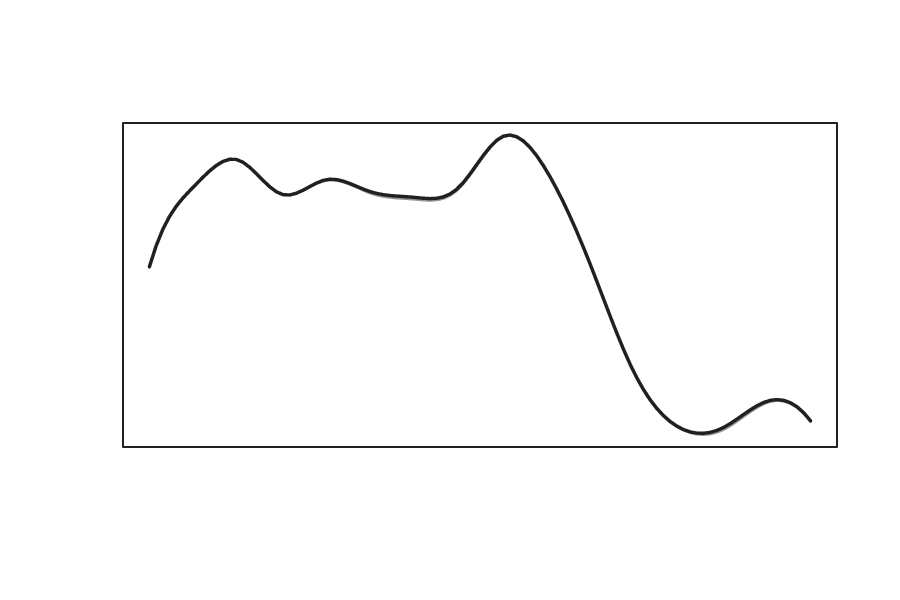}}{\quad$n_{14}=2$}
	\hspace{-.35in}
	\stackunder[-29pt]{
		\includegraphics[width=4.0cm, height=4.4cm, trim=0 0cm 0 0]
		{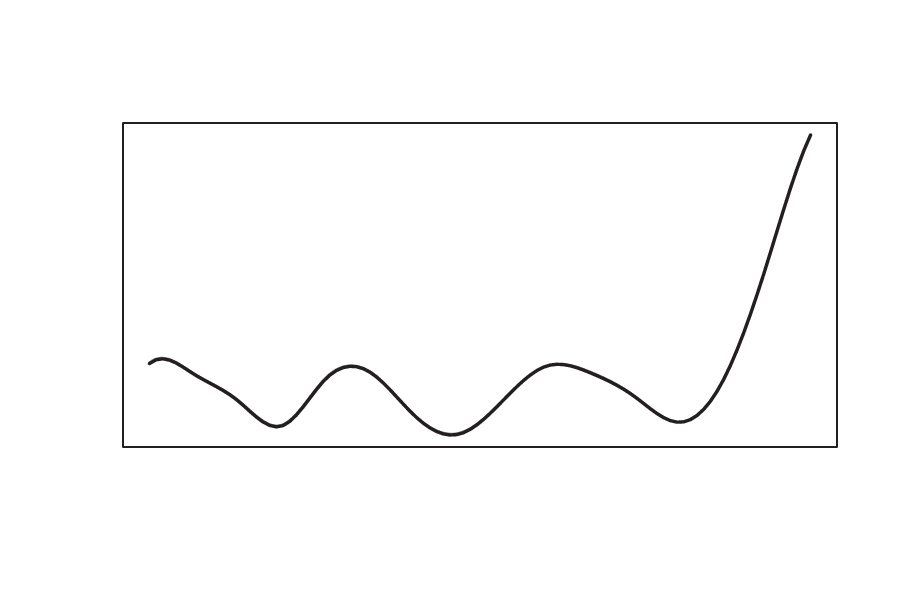}}{\quad$n_{15}=1$}
	\\[-.3in]
	\vspace{-.15in}
\end{adjustwidth}
\begin{adjustwidth}{-0.5cm}{-0.5cm}
\caption{Clusters estimated by fitting MFRMMx with influence of covariates, assuming $A=0.1$, $p=15$ and $\phi=20$, for the NBA dataset. The gray lines are the estimated individual-specific curves and the black lines are the cluster-specific mean curves. The values $n_j$ denote the number of individuals in cluster $j$.}
\end{adjustwidth}
\label{fig:NBA_clusters_A01_p15_phi20}
\end{figure}

%\newpage
%\input{fig_NBA_clusters_A1_p15_phi20-ylim.tex}
\begin{figure}[h]%[!htb]
\centering
	\stackunder[-29pt]{
	%\raisebox{1.3cm}{\rotatebox[origin=c]{90}{\scriptsize $\phi=0.1$}}
		\includegraphics[width=4cm, height=4.4cm, trim=0 0cm 0 0]
		{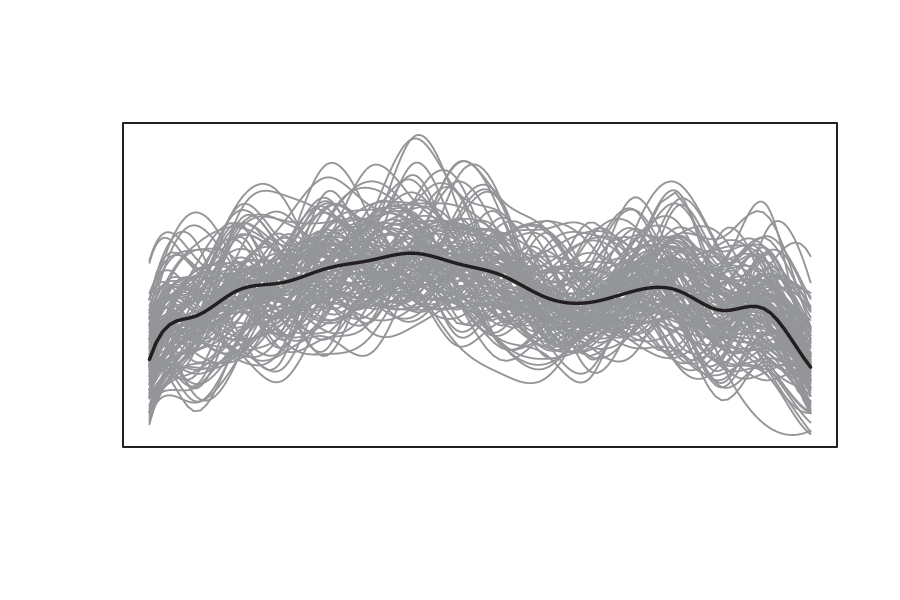}}{\quad$n_1=113$}
	\hspace{-.35in}
	\stackunder[-29pt]{
		\includegraphics[width=4cm, height=4.4cm, trim=0 0cm 0 0]
		{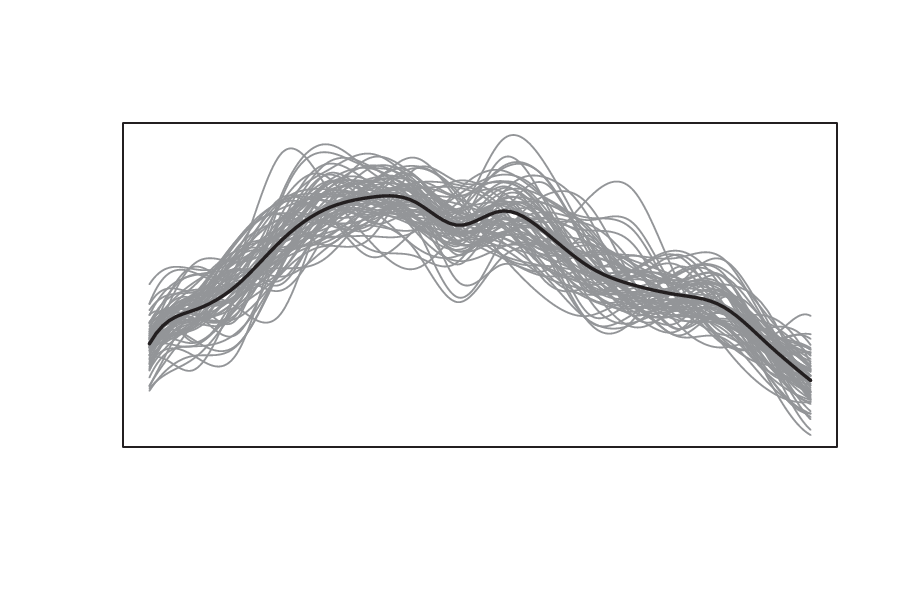}}{\quad$n_2=66$}
	\hspace{-.35in}
	\stackunder[-29pt]{
		\includegraphics[width=4cm, height=4.4cm, trim=0 0cm 0 0]
		{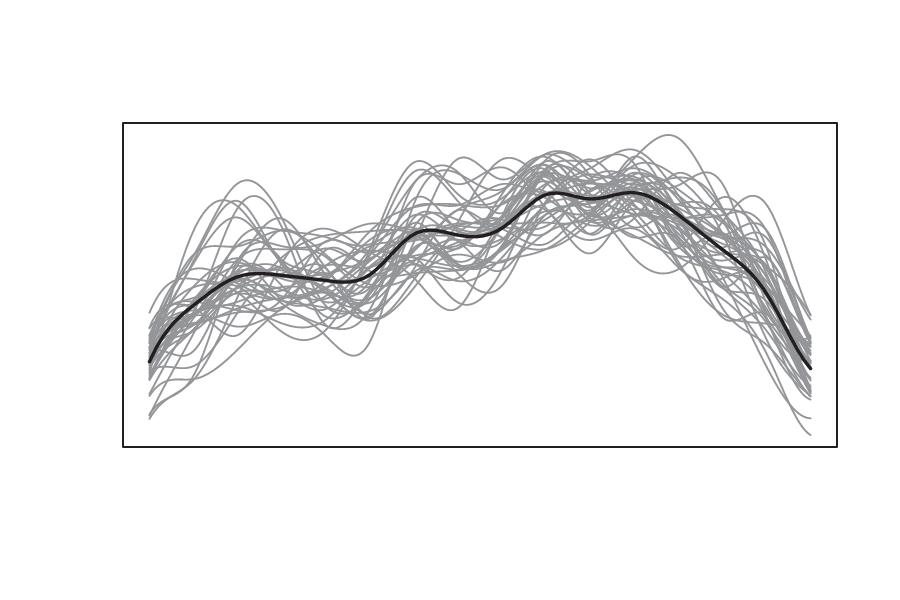}}{\quad$n_3=41$}
	\\[-.6in]
	\stackunder[-29pt]{
%			\raisebox{1.3cm}{\rotatebox[origin=c]{90}{\scriptsize $\phi=0.1$}}
		\includegraphics[width=4cm, height=4.4cm, trim=0 0cm 0 0]
		{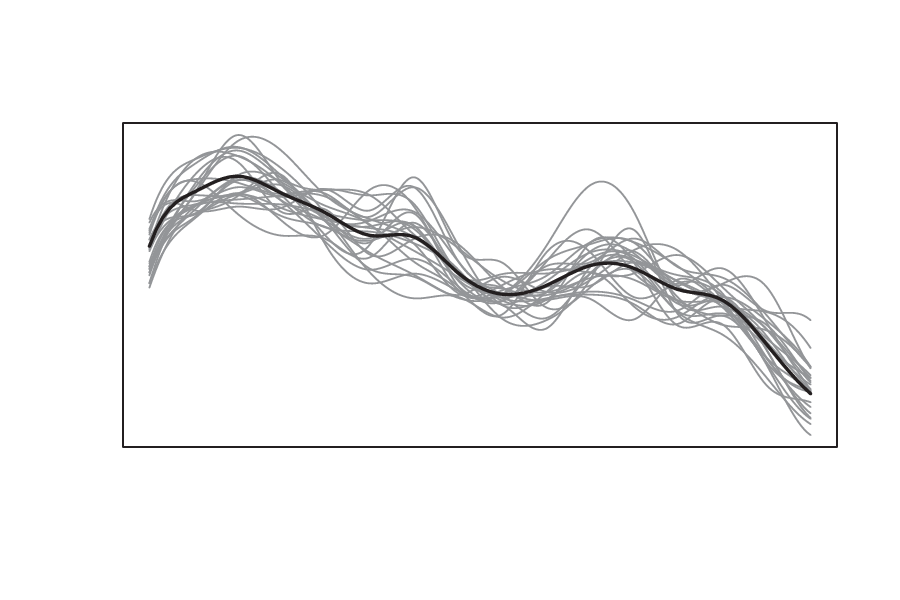}}{\quad$n_4=24$}
	\hspace{-.35in}
	\stackunder[-29pt]{
		\includegraphics[width=4cm, height=4.4cm, trim=0 0cm 0 0]
		{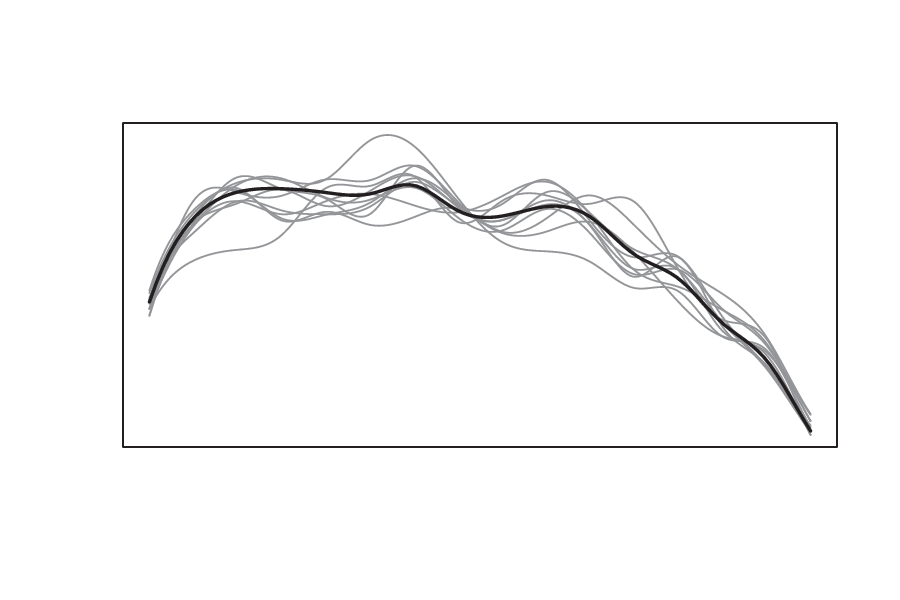}}{\quad$n_5=10$}
	\hspace{-.35in}
	\stackunder[-29pt]{
		\includegraphics[width=4cm, height=4.4cm, trim=0 0cm 0 0]
		{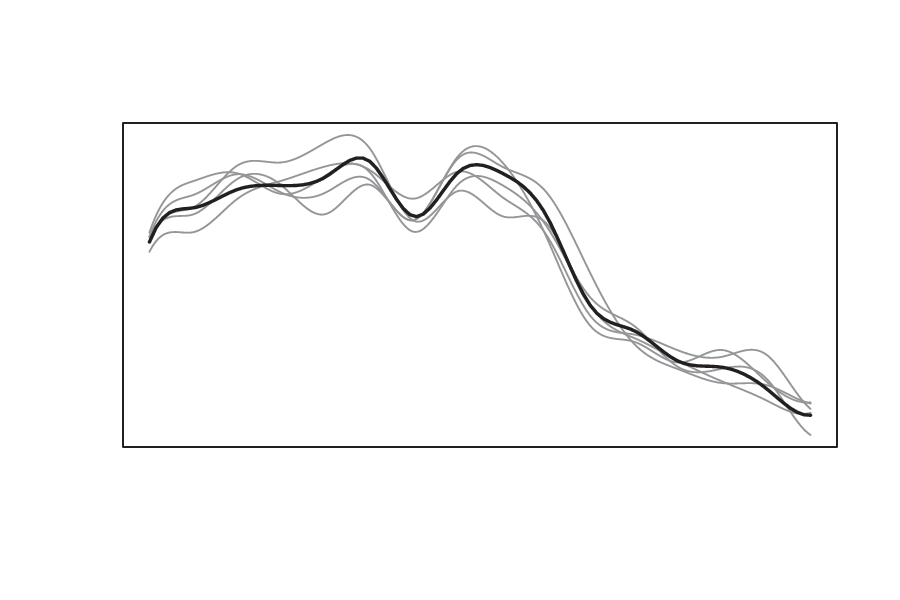}}{\quad$n_6=5$}
	\\[-.6in]
	\stackunder[-29pt]{
	%\raisebox{1.3cm}{\rotatebox[origin=c]{90}{\scriptsize $\phi=0.1$}}
		\includegraphics[width=4cm, height=4.4cm, trim=0 0cm 0 0]
		{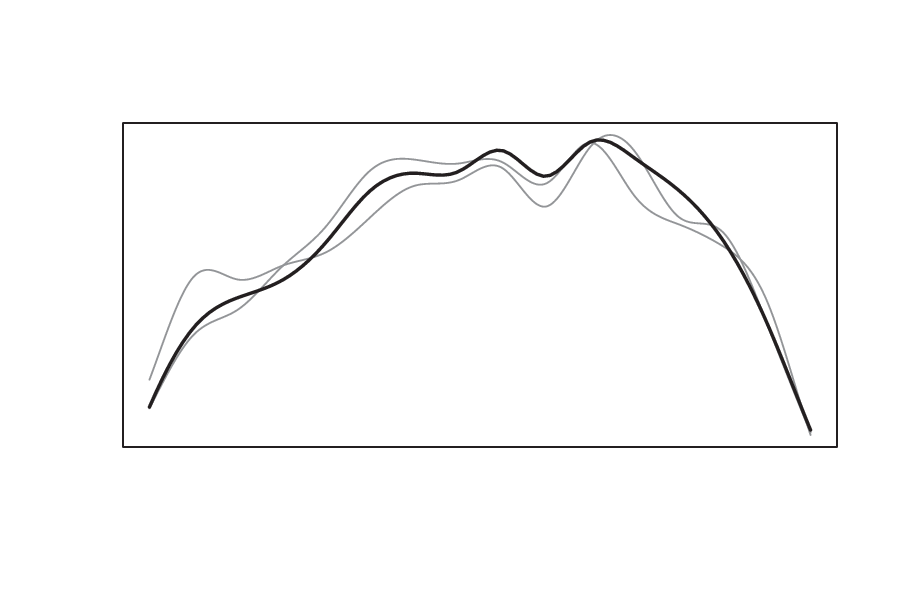}}{\quad$n_7=2$}
	\hspace{-.35in}
	\stackunder[-29pt]{
		\includegraphics[width=4cm, height=4.4cm, trim=0 0cm 0 0]
		{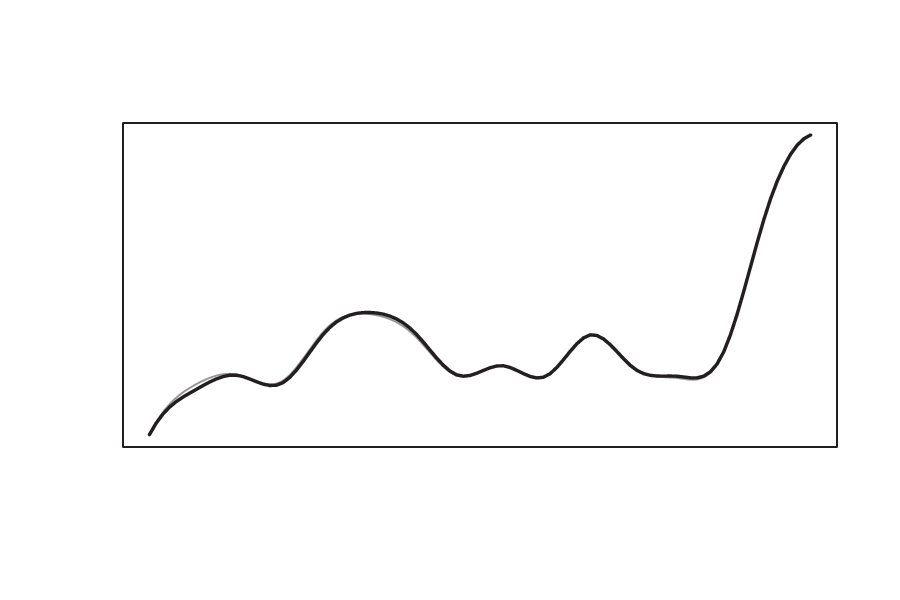}}{\quad$n_8=1$}
	\hspace{-.35in}
	\stackunder[-29pt]{
		\includegraphics[width=4cm, height=4.4cm, trim=0 0cm 0 0]
		{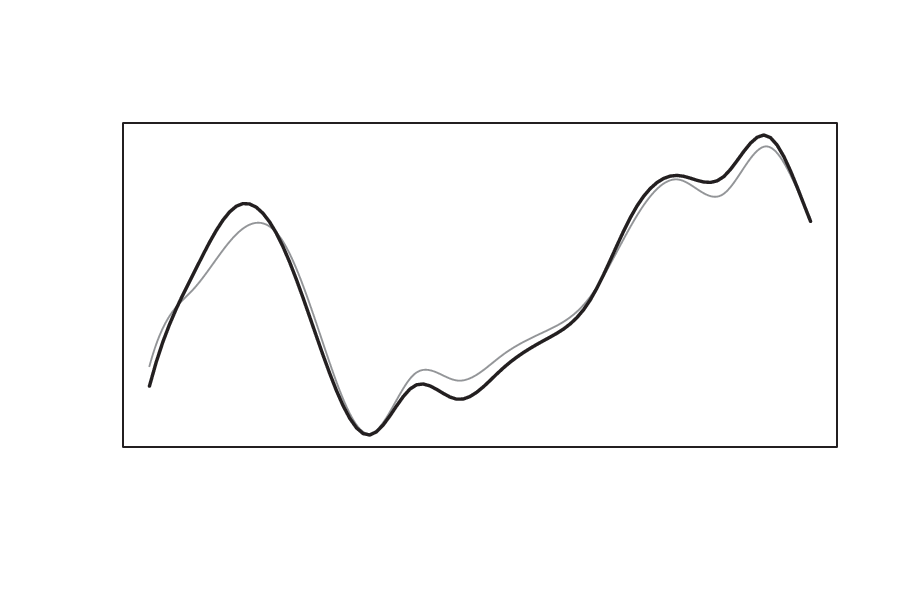}}{\quad$n_9=1$}
	\\[-.3in]
	\vspace{-.1in}
	\caption{Clusters estimated by fitting MFRMMx with influence of covariates, assuming $A=1$, $p=15$ and $\phi=20$, for the NBA dataset. The gray lines are the estimated individual-specific curves and the black lines are the cluster-specific mean curves. The values $n_j$ denote the number of individuals in cluster $j$.}
\label{fig:NBA_clusters_A1_p15_phi20}
\end{figure}

\end{document}